\newtheorem{theorem}{Theorem}
\newtheorem{conjecture}[theorem]{Conjecture}
\newtheorem{lemma}[theorem]{Lemma}
\newtheorem{prop}[theorem]{Proposition}
\newtheorem{corollary}[theorem]{Corollary}
\newtheorem{mainresult}{Main Result}
\newtheorem{observation}[theorem]{Observation}
\newtheorem{proposition}[theorem]{Proposition}
\theoremstyle{definition}
 \newtheorem{definition}[theorem]{Definition}
\newcommand{\e}{\mathrm{e}}
\newcommand{\E}{\mathbb{E}}
\newcommand{\IGNORE}[1]{}
\newcommand{\A}{\ensuremath{\mathcal{A}}\xspace}
\DeclareMathOperator*{\argmin}{arg\,min}
\newcommand{\OPT}{\mathsf{OPT}}
\newcommand{\ALG}{\mathsf{ALG}}
\title{Formal Barriers to Simple Algorithms for the Matroid Secretary Problem}
\author{Maryam Bahrani\thanks{Columbia University. Email: {m.bahrani@columbia.edu}.} \and%
Hedyeh Beyhaghi\thanks{Toyota Technological Institute at Chicago. Email: {hedyeh@ttic.edu}.} \and%
Sahil Singla\thanks{Georgia Institute of Technology. Part of the work done while at Princeton University. Email: {ssingla@gatech.edu}.} \and%
S. Matthew Weinberg\thanks{Princeton University. Email: {smweinberg@princeton.edu}.}}
\date{}
\begin{document}

\maketitle

\begin{abstract}
Babaioff et al.~\cite{babaioff2007matroids} introduced the matroid secretary problem in 2007, a natural extension of the classic single-choice secretary problem to matroids, and conjectured that a constant-competitive online algorithm exists. The conjecture still remains open despite substantial partial progress, including constant-competitive algorithms for numerous special cases of matroids, and an $O(\log \log \text{rank})$-competitive algorithm in the general case.

Many of these algorithms follow principled frameworks. The limits of these frameworks are previously unstudied, and prior work establishes only that a handful of particular algorithms cannot resolve the matroid secretary conjecture. We initiate the study of impossibility results for frameworks to resolve this conjecture. We establish impossibility results for a natural class of greedy algorithms and for randomized partition algorithms, both of which contain known algorithms that resolve special cases.
\end{abstract}


\section{Introduction} 
The problem of finding a max-weight basis of a matroid $\mathcal{M}=(V,\mathcal{I})$\footnote{Given a finite set $V$ and a family of subsets of $V$ called $I$, we say $\mathcal{M}=(V,\mathcal{I})$ is a matroid if it satisfies (i) $\emptyset\in\mathcal{I}$, 
(ii) \emph{Hereditary Property (downwards closed):} $\forall T\subseteq S\subseteq V$, set $S\in \mathcal{I}$ implies $T\in\mathcal{I}$, and
 (iii) \emph{Exchange Property:} For any $S,T\in\mathcal{I}$ where $|S|>|T|$,  there exists some $x\in S$ such that $T\cup\{x\}\in\mathcal{I}$.
 } is central in the field of combinatorial optimization  (see books~\cite{Oxley-Book06,Schrijver-Book03,KV-Book08}). More specifically, each element $e \in V$ has a weight $w(e) \geq 0$, and the goal is to find the set $S \in \mathcal{I}$ maximizing $w(S):=\sum_{e \in S} w(e)$. Seminal works of Rado, Gale, and Edmonds establish that the following simple greedy algorithm finds a max-weight basis of a matroid $(V,\mathcal{I})$: Initialize $A = \emptyset$, then process the elements of $V$ in decreasing order of $w(e)$, adding to $A$ any element such that $A \cup \{e\} \in \mathcal{I}$~\cite{rado1957note, gale1968optimal, edmonds1971matroids}. In fact, if for some $(V,\mathcal{I})$ this algorithm is optimal for all $w(\cdot)$, then $(V,\mathcal{I})$ must be a matroid.

While simple, this algorithm still requires knowledge of all weights up front. Motivated by applications to mechanism design and other online problems~\cite{HajiaghayiKP04,BIKK-Sigecom08},  recent work considered the problem in an online setting: elements are still processed one at a time and are immediately and irrevocably accepted or rejected upon processing, but an element's weight remains unknown until the element is processed. In particular, the algorithm does not have control over the order of elements and therefore cannot run the simple greedy algorithm.

For a fully adversarial order, it's folklore that the best algorithm can do no better than simply selecting a random element. Babaioff et al.~\cite{babaioff2007matroids}\footnote{Conference version~\cite{babaioff2007matroidsSODA} appeared in 2007.} therefore introduced the \emph{Matroid Secretary Problem} (MSP), where elements arrive in a \emph{uniformly random order} (while the weight function is still adversarial). This formulation extends the classic single-item secretary problem~\cite{dynkin1963optimum}.

Consider an algorithm \A for the matroid secretary problem on matroid $\mathcal{M}$. Let $\OPT$ be the max-weight basis of $\mathcal{M}$ under $w(\cdot)$, and let $\ALG$ be the set of elements chosen by \A (under $w(\cdot))$. The following notion of \emph{utility-competitiveness} for a  matroid secretary algorithm was  studied in Babaioff~{et al.}~\cite{babaioff2007matroids}.

\begin{definition}[Utility-Competitive]
An algorithm \A is \emph{$\alpha$-utility-competitive} if $\E[w(\ALG)]/w(\OPT) \geq \alpha$, where the expectation is over the randomness of the arrivals and any internal randomness of algorithm \A.
\end{definition}

In the same paper that introduced the matroid secretary problem, Babaioff~{et al.}~\cite{babaioff2007matroids} conjecture that there is a constant-utility-competitive algorithm. The stronger form of the conjecture is that this constant is $1/e$.

\begin{conjecture}[Matroid Secretary]\label{conj:babaioff}
There is an $\Omega(1)$-utility-competitive algorithm for the matroid secretary problem.
\end{conjecture}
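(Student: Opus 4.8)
The plan is to exhibit an algorithm whose expected loss against the offline Rado--Gale--Edmonds greedy basis is bounded by an absolute constant, independent of $\mathrm{rank}(\M)$. At a high level I would follow the template that already works for the solved special cases --- discretize weights, observe a random sample, and convert the sample's optimum into an online acceptance policy --- and concentrate all of the genuinely new work in a single structural step, which is precisely where every prior attempt on the general problem has stalled.

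\textbf{Step 1 (weight discretization) and Step 2 (sampling), both routine.} First I would reduce to instances in which every weight is a power of $2$: picking a uniform random shift and rounding each $w(e)$ down to the nearest shifted power of $2$ loses only a constant factor in expectation, since the original $\OPT$ remains feasible after rounding, so $\E[\text{rounded weight of }\OPT]=\Omega(w(\OPT))$ and an $\alpha$-utility-competitive algorithm for the rounded family yields an $\Omega(\alpha)$-utility-competitive algorithm in general (this is the reduction already used by Babaioff~et al.~\cite{babaioff2007matroids}). Write $V=\bigsqcup_j V_j$ for the induced partition into weight classes, $V_j$ being the elements of weight exactly $2^j$. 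Then I would passively observe the first half of the random stream, letting $S$ be the observed set and $\OPT_S$ the max-weight basis of $\M$ restricted to $S$, accepting nothing during this phase. Since the order is uniformly random, each element lies in $S$ independently with probability $1/2$; writing $w(\OPT)=\int_0^\infty r(L_t)\,dt$ over the level sets $L_t=\{e:w(e)\ge t\}$ and using that $\E[r(S\cap L_t)]\ge\tfrac12 r(L_t)$ for every matroid (fix a basis $B$ of $L_t$; then $S\cap B$ is independent and contained in $S\cap L_t$, so $\E[r(S\cap L_t)]\ge\E|S\cap B|=\tfrac12|B|$) gives $\E[w(\OPT_S)]\ge\tfrac12 w(\OPT)$.

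\textbf{Step 3 (the structural core --- and the main obstacle).} What is missing is a rule that converts $\OPT_S$ into an online policy for the second half accepting, in expectation, a constant fraction of $w(\OPT)$. In every resolved special case (transversal, graphic, cographic, laminar, regular, column-sparse, $\dots$) such a rule exists because a representation makes each accept/reject decision essentially local; for a general matroid no such representation is available, and the best known general algorithm loses a $\Theta(\log\log(\mathrm{rank}))$ factor because its recursion can descend through that many nested ``levels'' of the matroid. My proposed attack is to eliminate the recursion using the matroid's \emph{principal (density) decomposition}: split $\M$ along its principal sequence $\emptyset=F_0\subsetneq F_1\subsetneq\cdots\subsetneq F_k=V$ into layers $F_\ell/F_{\ell-1}$ of strictly decreasing density, handle each layer independently, and note that the layers' contributions to $\OPT$ are disjoint. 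On a single \emph{uniformly dense} layer I conjecture that $\OPT_S$ restricted to that layer can be used to set per-weight-class thresholds so that the policy ``accept an arriving $e$ iff $e$ extends the currently accepted independent set and $w(e)$ exceeds the threshold of its class'' both preserves independence and --- via a coupling in which each $e\in\OPT$ that lands in the second half is, conditioned on the first-half sample, accepted with constant probability --- captures a constant fraction of the layer's contribution to $\OPT$; the point is that uniform density is exactly the condition forbidding the long nested chains responsible for the $\log\log(\mathrm{rank})$ loss. Proving this conjecture --- that uniformly dense matroids admit a constant-loss sample-and-threshold scheme whose thresholds interact correctly with the layer contractions and with the weight classes of Step~1 --- is the crux. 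I expect it to require either a new exchange-type inequality tailored to uniform density, or, if no explicit threshold rule works, a non-constructive existence proof via LP duality over the matroid polytope of each layer (equivalently, a constant-selectable online contention resolution scheme for that polytope in the random-order model).

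\textbf{Step 4 (assembly).} Finally I would aggregate: the events ``$e$ arrives in the second half'' are independent across elements, so the accepted set's expected weight telescopes over the layers and over the weight classes to $\E[w(\ALG)]=\Omega\!\left(\sum_j 2^j\,|\OPT\cap V_j|\right)=\Omega(w(\OPT))$, the hidden constant being the product of the constant losses incurred in Steps~1--3. This last bookkeeping is routine once Step~3 is available, so the entire difficulty of \cref{conj:babaioff} is concentrated in the structural claim about uniformly dense matroids, and I would expect that claim --- not the surrounding reduction --- to be where essentially all of the effort goes.
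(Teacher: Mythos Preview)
The statement you are attempting to prove is \cref{conj:babaioff}, which is a \emph{conjecture}, not a theorem: the paper does not prove it, and explicitly states that it remains open (the best known general bound is $1/O(\log\log r)$). So there is no ``paper's own proof'' to compare against. The paper's actual contributions go in the opposite direction --- it proves \emph{impossibility} results showing that two natural classes of algorithms (Greedy as in Algorithm~\ref{algo:Greedy}, and Randomized Partition as in Algorithm~\ref{alg:randomizedPartition}) \emph{cannot} resolve the conjecture.

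Your proposal is not a proof but a research outline, and you correctly identify where the gap is: Step~3 rests on an unproven structural claim (that uniformly dense matroids admit a constant-loss sample-and-threshold scheme compatible with the principal decomposition). You explicitly call this ``the crux'' and describe it as a conjecture you ``expect'' would need a new exchange inequality or a nonconstructive LP-duality argument. That is precisely the missing idea; everything in Steps~1,~2,~4 is indeed standard, but Step~3 is the entire problem. Note also that the paper's Theorem~\ref{thm:greedy} is relevant to your plan: any instantiation of your threshold rule that happens to fit the Greedy framework of Algorithm~\ref{algo:Greedy} is already ruled out on the hat graph, so whatever policy Step~3 produces must violate at least one of properties~(\ref{prop1})--(\ref{prop2}) there --- in particular it cannot simply maintain a spanning independent set of accepted-plus-sample elements and accept exactly when the arriving element improves the contracted max-weight basis.
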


Despite extensive follow-up work, this conjecture still remains open. Many constant-utility-competitive algorithms have been proposed for specific classes of matroid (see related work in Section~\ref{sec:related}). For general matroids, however, the best known algorithms are $1/O(\log\!\log r)$-competitive~\cite{lachish2014log,FSZ-SODA15} (here, $r$ denotes the \emph{rank} of the matroid, which is the size of the largest set in $\mathcal{I}$).

As the only known lower bound, even for general matroids, is the same $1/e$ from the classic single-item setting, and because Dynkin's algorithm guarantees a stronger property that the heaviest element is selected with probability $1/e$, the following stronger notion of \emph{probability-competitive} algorithms has been also studied~\cite{HoeferK-ICALP17,SotoTV-SODA18}.

\begin{definition}[Probability-Competitive]
An algorithm \A  is \emph{$\alpha$-probability-competitive} if for all $i\in \OPT$ it satisfies that $\mathbb{P}[i\in \ALG] \geq \alpha$.
\end{definition}

\noindent Note that probability-competitiveness is a stronger notion than utility-competitiveness, since the former implies the latter with the same competitive ratio. Soto et al.~\cite{SotoTV-SODA18} showed that many (but not all) existing utility-competitive algorithms can be extended to obtain probability-competitive algorithms. This results in the following more ambitious conjecture. Again, the stronger version conjectures that this constant is $1/e$.

\begin{conjecture}\label{conj:soto} 
There is an $\Omega(1)$-probability-competitive algorithm for the matroid secretary problem.
\end{conjecture}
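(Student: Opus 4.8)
The plan is to attack Conjecture~\ref{conj:soto} through a \emph{decompose-then-solve} strategy: reduce the general matroid secretary problem to a family of structured sub-instances on which $\Omega(1)$-probability-competitive algorithms are already known, and then run those sub-algorithms simultaneously while arguing that the random arrival order prevents the sub-instances from interfering. I begin with a \textbf{weight-bucketing reduction}: round each $w(e)$ down to the nearest power of $2$, losing only a factor of $2$ in both notions of competitiveness, so that $\OPT$ partitions into weight classes $\OPT_1 \supseteq \OPT_2 \supseteq \cdots$ in which every element of class $j$ has weight $\Theta(2^{-j})$ relative to the heaviest element. The key structural observation is that each $\OPT_j$ is an independent set of $\M$, and in fact a basis of the restriction of $\M$ to its own weight class after contracting all heavier classes; so the real goal becomes to select an $\Omega(1)$ fraction of each $\OPT_j$ while globally respecting independence.

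The second step handles a \textbf{single weight class in isolation}. On one class every feasible subset is independent and all surviving elements are (up to a constant) equally valuable, so the relevant sub-problem is: observe a matroid's elements in uniformly random order and retain a large independent subset, where each element matters equally. Here I would deploy the density-based sampling algorithm that already succeeds for transversal and uniformly dense matroids: observe the first $1/\e$ fraction, compute the maximum density (rank over size) over subsets of the sample, and admit each later element with a probability calibrated to that density, then argue via a negative-correlation / Chernoff-type concentration bound that each element of $\OPT_j$ is admitted with probability $\Omega(1)$ and that, with high probability, the admitted set stays independent. This part I expect to be routine, reusing existing analyses essentially verbatim.

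The third step is \textbf{coordinating across weight classes}, and this is where I expect the whole plan to get stuck. An element admitted in a light class $j$ can destroy independence together with a heavy class $j' < j$, yet when the class-$j$ elements arrive in the random order the decisions about class $j'$ are not yet final. The known $1/O(\log\log r)$ algorithms of Lachish~\cite{lachish2014log} and Feldman--Svensson--Zenklusen~\cite{FSZ-SODA15} survive this by a clever truncation that collapses the number of ``effective'' classes from $\log(w_{\max}/w_{\min})$ down to $O(\log r)$ and then to $O(\log\log r)$, but each collapse costs a constant factor, so iterating it is what produces the $\log\log r$ loss. To reach $\Omega(1)$ one would need either (i) a matroid decomposition theorem stating that $\M$, up to constant loss in the secretary objective, behaves like a direct sum of pieces each with only $O(1)$ effective weight classes --- false in the naive form, since matroids admit no such product structure --- or (ii) a single global sampling test that certifies simultaneously, for all classes at once, that the admitted light elements extend to a common basis. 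I would try to build (ii) from the principal partition (Dilworth truncation) of $\M$, hoping that its canonical chain of flats lets the per-class algorithms be nested with only constant total loss; but I fully expect that making this nesting lose $O(1)$ rather than $O(\log\log r)$ is exactly the obstacle, and it is the apparent inevitability of this loss across all natural frameworks that makes the formal-barrier results of this paper the more tractable --- and arguably more informative --- target.
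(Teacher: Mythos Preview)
The statement you are attempting to prove is Conjecture~\ref{conj:soto}, not a theorem: it is an \emph{open problem}, and the paper does not prove it. Indeed, the paper's entire point is the converse direction --- establishing formal barriers (Theorems~\ref{thm:greedy} and~\ref{thm:partition-neg}) showing that natural algorithmic frameworks \emph{cannot} resolve this conjecture. There is no ``paper's own proof'' to compare your proposal against.

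As for the proposal itself, you have essentially written a research plan rather than a proof, and you candidly identify where it breaks. Your third step --- coordinating across weight classes with only $O(1)$ loss --- is exactly the obstruction that has kept the best known bound at $1/O(\log\log r)$ for a decade. The weight-bucketing and single-class steps are standard and appear (in various guises) in~\cite{lachish2014log,FSZ-SODA15}; the principal-partition idea you float for step~(ii) is natural but, as you suspect, no one has made it lose only a constant. So the proposal is not wrong so much as it is a correct diagnosis of why the conjecture is hard, followed by an honest admission that you do not have the missing idea either. That is a reasonable thing to write in a survey or a grant proposal, but it is not a proof, and it should not be presented as one.
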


Progress on both conjectures has been slow. Indeed, even the strong version of Conjecture~\ref{conj:soto} remains plausible, while the best utility-competitive algorithms have stalled at $1/O(\log\log r)$~\cite{lachish2014log,FSZ-SODA15}. One thesis motivating our work is that the community currently lacks structure for narrowing a search among numerous promising approaches. Existing algorithms for special cases indeed follow principled frameworks, but these frameworks are quite flexible and it remains unknown which (if any) of them might produce a resolution to either conjecture. 

One particularly enticing possibility is that a simple ``greedy-like'' algorithm might even work. Note that such algorithms indeed work in the Free-Order model~\cite{JailletSZ13}, or for the related Matroid Prophet Inequality~\cite{kleinberg2012matroid}, or for special cases of the Matroid Secretary Problem~\cite{dynkin1963optimum,babaioff2007knapsack}. There are numerous variants of ``greedy'' algorithms, though. While many particular variants are known to fail on the same ``hat graph''~\cite{babaioff2007matroids}, there is previously no approach to quickly tell whether a novel greedy variant is already known to fail.

In this work, we rigorously consider two general classes of algorithms, and prove super-constant lower bounds on what they can achieve for the matroid secretary problem. This both helps explain why these types of algorithms have faced difficulty extending beyond the special cases for which they were originally designed, and helps guide future work towards precisely the variants that merit further exploration.

\subsection{Greedy Algorithms}

Since finding the max-weight basis of matroids without requiring irrevocable commitments can be done exactly by the simple greedy algorithm, the class of greedy algorithms is a very natural candidate for solving the Matroid Secretary Problem.  We consider a large family of ``greedy-like'' algorithms. We define three natural properties that a greedy algorithm might have, and establish that any algorithm satisfying these properties cannot be constant-utility-competitive (Theorem~\ref{thm:greedy}). We postpone formal statements of the properties until Section~\ref{ch:prophet}, but overview them here: (i) the algorithm should reject the first $T$ fraction of elements, (ii) the algorithm at all times stores an independent set $I$ containing all accepted elements and no elements rejected after $T$, (iii) an element is accepted if and only if it improves the max-weight basis of $I$ after contracting the accepted elements.\footnote{To rephrase (iii), an element $e$ is accepted iff after contracting the accepted elements (not including $e$), the max-weight basis of the restricted matroid to $I\cup\{e\}$ is heavier than the max-weight basis of the restricted matroid to $I$ (the latter being exactly the weight of $I$ since $I$ is independent).}
Note that this a general framework rather than a fully-specified algorithm, since it allows for the algorithm to choose $I$ (it need not be the max-weight basis after contracting the accepted elements, just some independent set). 

In Section~\ref{ch:prophet} we overview several existing algorithms that fit this framework, and Theorem~\ref{thm:greedy} unifies a proof that none of these algorithms (or many hypothetical ones) can be constant-utility-competitive. Our lower bound construction is a variant of the well-known ``hat graph'', which has been known since~\cite{babaioff2007matroids}  to be problematic for greedy-like algorithms. So our main contribution is not this construction itself, but rather a formalization of precisely the class of greedy algorithms for which this graph is problematic.

\begin{mainresult}[Informal, see Theorem~\ref{thm:greedy}]
No Greedy algorithm (as per Algorithm~\ref{algo:Greedy}) is constant-utility-competitive.
\end{mainresult}

We emphasize that while the hat graph itself is not a novel construction, our proof is quite distinct (and more involved) from prior work as it must rule out a broad class of algorithms rather than just a single one.

\subsection{Randomized Partition Algorithms}

Another class of particularly simple algorithms are \emph{randomized partition algorithms}: 
\begin{enumerate} \itemsep0em 
    \item Before looking at any weights, (perhaps randomly) partition all the elements\footnote{We consider the known matroid setting where the matroid is known but the weights are revealed one-by-one.} into parts $S_i$.
    \item Within each part, run Dynkin's algorithm.
    \item Output the union of the selected elements.
\end{enumerate}
Note that these algorithms are allowed to use \emph{any randomized} partition. The elegant $1/(2e)$-approximation of Korula and Pal for graphic matroids\footnote{Given a graph with edges $E$, a graphic matroid $(E,\mathcal{I})$ is defined with $\mathcal{I}$ consisting of all subsets of edges that do not contain a cycle.} is a randomized partition algorithm~\cite{korula2009algorithms}. Their algorithm is utility-competitive, but not probability-competitive. Soto et al.~\cite{SotoTV-SODA18} recently designed a different constant probability-competitive algorithm for graphic matroids. While their algorithm is still quite elegant, it is perhaps not quite as simple as randomized partition algorithms. It is also worth noting that algorithms such as~\cite{lachish2014log, FSZ-SODA15} 
follow a more general framework, where  the algorithm in step one  looks at the weights before partitioning and step two is not necessarily Dynkin's single-choice algorithm (but perhaps some simple greedy algorithm).
This raises the question whether the novel development beyond~\cite{korula2009algorithms} is necessary to achieve probability-competitive algorithms? 
Our second main result answers this question: no randomized partition algorithm can be constant-probability-competitive (or even $\omega(n^{-1/8})$-probability-competitive).

\begin{mainresult}[Informal, see Theorem~\ref{thm:partition-neg}]
No Randomized Partition algorithm is constant-probability-competitive.
\end{mainresult}

Our construction witnessing Theorem~\ref{thm:partition-neg} is also a graphic matroid, although it is unrelated to the hat graph (and to the best of our knowledge, novel). Note that our proof cannot be extended to utility-competitive algorithms since we know~\cite{korula2009algorithms} is a constant-utility-competitive randomized partition algorithm for graphic matroids.

\subsection{Related Work and Brief Summary} \label{sec:related}
There is a \emph{substantial} body of work on random-order problems for  matroids (the Matroid Secretary Problem~\cite{babaioff2007matroids}) and for several other discrete optimization problems;  we will not attempt to overview it  (e.g., see~\cite{GS-arXiv20,dinitz2013recent}). Here, we will briefly repeat the most related works.

Our work takes first steps towards characterizing classes of algorithms which might resolve the Matroid Secretary Problem. We focus on the simplest classes of algorithms which previously succeeded in special cases or for related problems, Greedy~\cite{JailletSZ13,kleinberg2012matroid} or Randomized Partition~\cite{korula2009algorithms}, and study the limits of these classes. 

First, we consider extremely simple greedy algorithms. A specific instantiation of this class of algorithms was shown to fail on a now-canonical ``hat graph'' in~\cite{babaioff2007matroids}, but related algorithms known to succeed in the Free-Order Model~\cite{JailletSZ13, AzarKW14}, and in the related Matroid Prophet Inequality~\cite{kleinberg2012matroid}. In addition, Dynkin's algorithm and the Optimistic algorithm for $k$-uniform matroids of~\cite{babaioff2007knapsack} fit this model. Our Theorem~\ref{thm:greedy} shows that no  Greedy algorithm is constant-utility-competitive for all matroids.

Second, we consider probability-competitive algorithms, formally considered in~\cite{SotoTV-SODA18}, and related to the ordinal model considered in~\cite{HoeferK-ICALP17}. Soto et al.~\cite{SotoTV-SODA18}, in particular, develop several probability-competitive algorithms for core settings such as graphic, transversal, and laminar matroids. Our work asks whether the extremely simple algorithms previously developed in~\cite{korula2009algorithms} can match these stronger probability-competitive guarantees, and we show in Theorem~\ref{thm:partition-neg} that the answer is no.


\section{Preliminaries\label{ch:prelim}}

The Matroid Secretary Problem (MSP) is defined as:
\begin{enumerate}\itemsep0em
\item There is a matroid $\mathcal{M}=(V,\mathcal{I})$, and weight function $w(\cdot):V \rightarrow \mathbb{R}_{\geq 0}$. Matroid $\mathcal{M}$ is fully-known to the algorithm in advance.\footnote{We are not concerned with computational efficiency of our algorithms in this work (our lower bounds are unconditional), so we will not stress about the precise format in which access to the matroid is given. To be concrete, one access model is that the algorithm has oracle access to $\mathcal{I}$ (query a  set $S$ and learn whether or not $S \in \mathcal{I}$). To the best of our knowledge, most algorithms previously considered for MSP are polytime given oracle access to $\mathcal{I}$.} Function $w(\cdot)$ is initially completely unknown to the algorithm.

    \item Initially, the set of accepted elements, $A$, is empty. Elements of $V$ arrive in a uniformly random order. When an element $i\in V$ arrives, the algorithm learns its weight $w(i)$, and must make an immediate and irrevocable decision whether or not to accept it (adding it to $A$). The algorithm must maintain  $A \in \mathcal{I}$ at all times.
\item If set $A$ is selected, the algorithm achieves payoff $\sum_{i \in A} w(i)$. 
\end{enumerate}

We will abuse notation and use $w(S):=\sum_{i \in S} w(i)$. Because $w(\cdot)$ is fixed, the offline optimum is the max-weight basis: $\text{MWB}(\mathcal{M}):=\arg\max_{S \in \mathcal{I}}\{w(S)\}$.\footnote{In this work, we assume for simplicity that the max-weight basis is unique. In case of ties, we tie-break by choosing the lexicographically-earlier basis.} We will also use standard matroid notation such as \emph{restriction}: the matroid $\mathcal{M}|_S$ is the matroid $\mathcal{M}$ restricted to $S$, and has ground set $S$ and independent sets $\mathcal{I}|_S:= \{T \cap S \mid T \in \mathcal{I}\}$. We also discuss matroid \emph{contractions}: the matroid $\mathcal{M} \setminus S$ is the matroid $\mathcal{M}$ contracted by $S$, and has ground set $V \setminus S$ and independent sets $\mathcal{I}\setminus S:=\{T \mid T \cup S \in \mathcal{I}\}$. When $\mathcal{M}$ is clear from context, we will also (slightly) abuse notation and write $\text{MWB}(T):=\text{MWB}(\mathcal{M}|_T)$.


We will later reference Dynkin's $1/e$-probability-competitive algorithm for selecting a single item, i.e., a $1$-uniform matroid: (1)  Reject the first $T=\text{Binom}(n,1/e)$ elements and call this the \emph{sampling stage}. (2) Afterwards, accept an element $i$ iff it is the heaviest element seen so far. 

\begin{theorem}[\cite{dynkin1963optimum}]
Dynkin's algorithm is $1/e$-probability-competitive for $1$-uniform matroids, this is optimal.
\end{theorem}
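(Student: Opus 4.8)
The plan is to treat the two claims separately: that Dynkin's rule achieves $\mathbb{P}[i^* \in \ALG]\ge 1/e$, where $i^*$ denotes the unique max-weight element, and that no online rule can beat this asymptotically. For the lower bound I would first replace the ``random order plus $T\sim\mathrm{Binom}(n,1/e)$ prefix'' description by an equivalent continuous-time model: give each element $e$ an independent uniform label $U_e\in[0,1]$, let elements arrive in increasing order of $U_e$, and let the sampling stage consist of exactly those $e$ with $U_e\le 1/e$. These two models induce the same joint distribution of (arrival order, which prefix is the sampling stage) --- conditioned on the sampling stage having size $t$, both give a uniform $t$-subset in uniform internal order, and $t$ is $\mathrm{Binom}(n,1/e)$ in both --- so they yield the same value of $\mathbb{P}[i^*\in\ALG]$. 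Conditioning on $U_{i^*}=u$, the key structural observation is that $i^* \in \ALG$ iff $u>1/e$ and, letting $e^\dagger$ be the heaviest element with $U_{e^\dagger}<u$ (if any such element exists), either $e^\dagger$ does not exist or $U_{e^\dagger}\le 1/e$: this is because $e^\dagger$ is the only element that can be accepted before $i^*$ arrives, and if $e^\dagger$ falls in the sampling stage then every other pre-$i^*$ element fails to be a record and is rejected, so $i^*$ --- always a record, being the global max --- is accepted. Ordering the other $n-1$ elements in decreasing weight and summing a geometric series over ``which one is $e^\dagger$'' gives, for $u>1/e$,
$$\mathbb{P}[i^*\in\ALG\mid U_{i^*}=u] \;=\; (1-u)^{\,n-1} \;+\; \frac1e\cdot\frac{1-(1-u)^{\,n-1}}{u}.$$
Integrating over $u\in[1/e,1]$ yields $\mathbb{P}[i^*\in\ALG] = \frac{(1-1/e)^n}{n} + \frac1e - \frac1e\int_{1/e}^1 \frac{(1-u)^{n-1}}{u}\,du$, and since $1/u\le e$ on $[1/e,1]$ the last integral is at most $\frac{(1-1/e)^n}{n}$, so the two error terms cancel and $\mathbb{P}[i^*\in\ALG]\ge 1/e$ for every $n$. (The randomized $\mathrm{Binom}(n,1/e)$ threshold is precisely what makes this cancellation exact; a deterministic threshold $\lceil n/e\rceil$ gives only $1/e-o(1)$.)

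For optimality I would argue that an optimal rule may be assumed to accept only \emph{records} (an element heavier than all arrived before it), since $i^*$ is always a record on arrival and accepting a non-record can never select $i^*$; such a rule is a stopping rule, and a standard backward induction shows the optimal one is a single threshold $\tau$ --- at position $j$ the value of stopping on a record equals $\mathbb{P}[\text{that record is }i^*]=j/n$, which is increasing in $j$, while the continuation value is decreasing in $j$, so the comparison flips exactly once. For the threshold rule with parameter $\tau$, a direct computation gives $\mathbb{P}[i^*\in\ALG]=\sum_{j=\tau}^n \frac1n\cdot\frac{\tau-1}{j-1}=\frac{\tau-1}{n}\sum_{k=\tau-1}^{n-1}\frac1k \le \frac{\tau-1}{n}\ln\frac{n}{\tau-1}+\frac1n \le \frac1e+\frac1n$, using $x\ln(1/x)\le 1/e$ with $x=(\tau-1)/n$. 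Hence $\limsup_n \sup_{\text{online alg}}\mathbb{P}[i^*\in\ALG]\le 1/e$, so no algorithm is $(1/e+\varepsilon)$-probability-competitive for any $\varepsilon>0$, matching the lower bound.

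The main obstacle is, for the lower bound, the exact cancellation: one must work with the randomized (binomial) threshold --- equivalently the continuous-time coupling --- and verify that coupling carefully, since otherwise the bound degrades to $1/e-o(1)$ and the clean statement fails. For optimality, the content is in the backward-induction and monotonicity argument that the optimal stopping rule is a single threshold, together with the reduction to record-only (ordinal) rules; these steps are classical but not one-liners. All the remaining computations --- the geometric sum for $\mathbb{P}[i^*\in\ALG\mid U_{i^*}=u]$, the two integrals, and the optimization $\max_x x\ln(1/x)=1/e$ --- are routine.
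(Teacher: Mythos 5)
The paper does not actually prove this theorem --- it cites Dynkin's 1963 result and uses it as a black box (e.g.\ in the proof of Theorem~\ref{thm:KorulaPal}) --- so there is no in-paper proof to compare against. Your reconstruction of the classical argument is essentially correct, and the continuous-time coupling is exactly the right way to read the paper's description of Dynkin's algorithm (reject the first $T\sim\mathrm{Binom}(n,1/e)$ elements): conditioned on the sample size, both models give a uniform prefix of a uniform permutation, and your conditional-probability formula, the geometric sum over ``which element is $e^\dagger$'', and the bound $1/u\le e$ on $[1/e,1]$ combine to give $\mathbb{P}[i^*\in\ALG]\ge 1/e$ for every $n$.

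Two small remarks, neither of which affects the substance. First, the parenthetical that a deterministic cutoff gives only $1/e - o(1)$ is misleading: the optimal deterministic threshold achieves success probability at least $1/e$ for every $n$ (converging to $1/e$ from above), though a particular non-optimal cutoff such as $r=\lceil n/e\rceil$ can dip just below $1/e$ for small $n$ (at $n=3$, $r=2$ gives $1/3$). Second, in the optimality argument, the monotonicity of the continuation value --- which you gesture at but do not prove --- is the one step that needs to be filled in; it follows from the Bellman recursion $V_{j-1}=\tfrac{1}{j}\max\{j/n,V_j\}+\tfrac{j-1}{j}V_j\ge V_j$, so the threshold structure and your resulting bound $\tfrac{\tau-1}{n}\sum_{k=\tau-1}^{n-1}\tfrac1k\le \tfrac1n+\tfrac{\tau-1}{n}\ln\tfrac{n}{\tau-1}\le 1/e+1/n$ are sound. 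Overall this is a correct proof of the cited classical result, by the standard approach.
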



\IGNORE{
\begin{definition}[Bases]\label{def:basis}
Let $\mathcal{M}=(V,\mathcal{I})$ be a matroid. Then a set $B$ is called a \emph{basis} of $\mathcal{M}$ if $B$ is a maximal independent subset of $V$. In other words, $B$ is a basis iff $B\in\mathcal{I}$, and $B\cup\{x\}\not\in\mathcal{I}$ for all $x\in V \setminus B$.
\end{definition}
\noindent
By the exchange property, all bases of a matroid have the same size.

\begin{definition}[Max-weight basis (MWB)]\label{def:mwb}
Let $\mathcal{M}=(V,\mathcal{I})$ be a matroid, and consider a weight function $w(\cdot):V\rightarrow \mathbb{R}_{\geq 0}$. Furthermore, define the weight of a set $S\subseteq(V)$ as $w(S)=\sum_{e\in S} w(e)$. Then a basis $B$ is called a \emph{max-weight basis} if $w(B)\geq \max w(S)$, where the max is taken over all bases $S$.
\end{definition}
}


\section{Greedy Algorithms\label{ch:prophet}}
Because matroids are exactly the constraints for which the simple greedy algorithm is optimal, greedy-like algorithms are a natural family to consider as candidates for resolving the Matroid Secretary Problem. Indeed greedy-like algorithms solve the related Matroid Prophet Inequality~\cite{kleinberg2012matroid}, Matroid Secretary in the free-order model~\cite{JailletSZ13, AzarKW14}, and special cases of Matroid Secretary~\cite{dynkin1963optimum,babaioff2007knapsack}. In this section, we give an impossibility result for certain greedy algorithms. This helps unify counterexamples for related algorithms, and also helps narrow future research towards algorithms which have hope of resolving the Matroid Secretary Problem. 

\subsection{A Class of Greedy Algorithms}\label{sec:properties}
We now define a natural framework of greedy algorithms for the Matroid Secretary Problem (Algorithm~\ref{algo:Greedy}). Without loss of generality, we consider the continuous arrival setting, where each element $e \in V$ arrives at a time $t(e)$ independently and uniformly drawn from $[0,1]$. We refer by $V_t$ to the set of elements that arrive (strictly) before $t$, and by $A_t$ to the set of elements accepted by the algorithm (strictly) before time $t$.

\begin{algorithm}\caption{Greedy Algorithm for the matroid secretary problem}\label{algo:Greedy}
\begin{flushleft}
\quad We define a \emph{greedy algorithm}  as one that satisfies the following properties:
\end{flushleft}
\begin{enumerate}[(i)]\itemsep0em
    \item \label{prop1} Reject (but store) elements that arrive before $T$ (sampling stage). Denote $S:=V_T$ to emphasize this.
    \item \label{prop3} At all times $t$, maintain an independent set $I_t$ such that:
    \begin{itemize}\itemsep0em
        \item $I_t$ contains all accepted elements and no elements which were rejected after $T$, i.e. $A_t\subseteq I_t\subseteq A_t\cup S$.
        \item At all times $t$, $I_t$ spans $V_t$.
    \end{itemize}
  \item \label{prop2} Accept $e$ if and only if $e\in \text{MWB}((\mathcal{M}\setminus A_{t(e)})|_{I_{t(e)}\cup \{e\}})$ (and $t(e) > T$). That is, accept $e$ if and only if it is in the max-weight basis of $I_{t(e)}\cup\{e\}$ \emph{after contracting by $A_{t(e)}$}.
\end{enumerate}
\end{algorithm}

Before getting into our results, it is helpful to understand why Algorithm~\ref{algo:Greedy} is a class of algorithms (rather than a fully-specified algorithm). The reason is that the algorithm has flexibility in which subset of $S$ to include in $I_t$ (but it must include $A_t$, and must span $V_t$). The restriction is that the algorithm does not know which element might arrive at time $t$, nor its weight, when setting $I_t$. Furthermore, the algorithm can choose the length of the sampling stage $T$.

It is also helpful to see how this framework captures (or doesn't capture) existing greedy-like algorithms:
\begin{itemize}
\item Dynkin's algorithm (with $T = 1/e$) fits this framework. But so do suboptimal algorithms (e.g., accept the first element after $T$ which exceeds the $5^\text{th}$-highest sample. Or even accept an element which arrives at time $t > T$ iff it exceeds the $(\lfloor 5t/T\rfloor)^{\text{th}}$-highest sample).
\item The Optimistic Algorithm for $k$-uniform matroids of~\cite{babaioff2007knapsack} fits this framework. The algorithm maintains a list $U$, initially the $k$ heaviest elements of $S$. If $e$ exceeds the lightest element in $U$, it is accepted, and the lightest element of $U$ is removed. In our language, this has $I_t:=A_t \cup U$ at all times.
\item There is a natural extension of the Optimistic Algorithm to all matroids, which was previously considered in~\cite{babaioff2007matroids}, that we define as the \emph{supergreedy} algorithm shortly and analyze as a warmup in Section~\ref{sec:supergreedy}.
\item A related Pessimistic Algorithm (similar to the rehearsal algorithm for the related $k$-uniform prophet inequality of~\cite{AzarKW14}) for $k$-uniform matroids fits this framework. The algorithm also maintains a list $U$, initially the $k$ heaviest elements of $S$. If $e$ exceeds the lightest element in $U$, it is accepted, but the \emph{heaviest element of $U$ lighter than $e$} is removed. In our language, this again has $I_t:=A_t \cup U$ at all times (but $U$ is updated differently to the previous bullet).
\item The Virtual Algorithm for $k$-uniform matroids of~\cite{babaioff2007knapsack} does \emph{not} fit this framework. The algorithm accepts an element $e$ if and only if $e$ is one of the heaviest $k$ elements so far \emph{and} the $k^{\text{th}}$-heaviest element of $V_{t(e)}$ is in $S$ (i.e., $e$ is accepted if and only if it ``kicks out a sample'' from the top $k$ so far). This is because the algorithm needs to remember rejected elements in order to properly keep track of the $k^{\text{th}}$-heaviest element so far, and whether it was a sample.
\end{itemize}

Observe finally that all of the algorithms above (which fit the framework) further have the following. First, if an element is rejected (after $T$), it is forgotten forever, and the algorithm proceeds as if the element had never existed in the first place.\footnote{But, the framework is rich enough to also allow for algorithms which update $I_t$ as they reject an element. This makes impossibility results stronger.} Similarly, once an element $e$ is accepted, the algorithm updates $\mathcal{M}$ by contracting by $e$, and then proceeds identically as if the true matroid had been $\mathcal{M} \setminus \{e\}$ the whole time.\footnote{The framework is rich enough to allow for algorithms which update $I_t$ based on $A_t$, rather than just $\mathcal{M}\setminus\{A_t\}$, which again just makes impossibility results stronger.} These attributes are shared by the matroid prophet inequality of~\cite{kleinberg2012matroid}, and initially drove our formulation.

With an understanding of Greedy algorithms in hand, we now state our main result.
\begin{theorem}\label{thm:greedy} Any algorithm satisfying the 3 properties of Algorithm~\ref{algo:Greedy} cannot be constant-utility-competitive. 
\end{theorem}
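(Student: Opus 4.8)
The plan is to exhibit a single matroid (a variant of the ``hat graph'') on which \emph{every} algorithm meeting the three properties of Algorithm~\ref{algo:Greedy} loses a super-constant factor. The hat graph has one heavy ``hat'' edge of weight roughly $1$ together with a large number $m$ of ``leg'' pairs, where each leg pair consists of two parallel-ish edges of tiny weight $\eps$, arranged so that the hat edge together with exactly one edge from each leg pair forms the unique max-weight basis (so $w(\OPT) \approx 1 + m\eps$, dominated by the hat), but accepting \emph{two} edges from the same leg pair creates a circuit that forever blocks the hat from being added. The point of the construction is that a greedy algorithm, which must accept an element the moment it improves the current max-weight basis (property~\eqref{prop2}), is forced to accept leg edges greedily before the hat arrives; whether this is ``bad'' depends on the choice of $I_t$, and the real work is showing that \emph{no} legal choice of the independent sets $I_t$ avoids disaster.

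The key steps, in order: (1) Set up the continuous-time formulation and note that by property~\eqref{prop1} the hat edge is rejected if it arrives before $T$ (which happens with probability $T > 0$), so it suffices to condition on the hat arriving after $T$ and show the algorithm still frequently fails to select it. (2) Fix the sample $S = V_T$ and analyze, for each leg pair, the event that \emph{both} its edges arrive after $T$ but before the hat — since each leg pair is symmetric and there are $m$ of them, a constant fraction of leg pairs are in this ``dangerous window'' with high probability. (3) For a leg pair in the dangerous window, argue that greedy is forced to accept its first-arriving edge: because $I_t$ must span $V_t$ (property~\eqref{prop3}) and this leg edge is the unique heaviest element among the not-yet-spanned part of the matroid once it arrives, property~\eqref{prop2} forces acceptance — here is where we use that the only elements in $S$ that could ``pre-span'' a leg pair are its own two edges, which by assumption are not in $S$. (4) Now the crux: once the first edge of the pair is accepted and contracted, when the \emph{second} edge of the same pair arrives, the contracted matroid makes it a loop (parallel), so it is rejected and forgotten; this is fine. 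But the accepted first edge of the pair, once contracted, changes the matroid so that the hat edge, when it finally arrives, is \emph{also} a loop in $\mathcal{M} \setminus A_{t(\text{hat})}$ — because in the hat graph the union of one edge from every leg pair already spans the vertex set that the hat connects. So the hat is rejected. Conclude that conditioned on the hat arriving late and every leg pair being dangerous (a constant-probability event), the algorithm never gets the hat, hence $\E[w(\ALG)] \le O(m\eps) + \text{(small)} = o(w(\OPT))$ after choosing $\eps$ so that $m\eps \to 0$ while $w(\OPT) \approx 1$.

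The main obstacle — and where this proof departs substantially from the classical single-algorithm counterexample — is step~(3)--(4) handled \emph{uniformly over all choices of $I_t$}. A clever greedy variant might try to set $I_t$ to include, from the sample $S$, elements that ``protect'' the hat; or it might decline to contract, or update $I_t$ upon rejections, to keep the hat reachable. We must rule all of this out. The way to do it is structural: property~\eqref{prop3} pins $I_t$ to lie between $A_t$ and $A_t \cup S$, and since the leg edges of a dangerous pair are by construction \emph{not} in $S$, the algorithm has no freedom to pre-include them in $I_t$ and no freedom to keep the hat spannable once a leg edge is accepted — the ``$I_t$ spans $V_t$'' requirement together with the contraction in property~\eqref{prop2} leaves exactly one legal move, and it is the fatal one. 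So the argument is really a case analysis showing that the flexibility the framework grants (choice of $T$, choice of $I_t \subseteq A_t \cup S$) is orthogonal to the failure mode: the hat graph is engineered precisely so that the sample $S$ carries no useful information about the leg pairs. Making this ``no useful freedom'' claim airtight across all $I_t$ is the technical heart of the proof; the probability estimates in steps~(2) and the final averaging in step~(5) are then routine (a union bound plus the observation that the hat-arrives-late-and-pair-is-dangerous events have constant probability for a constant fraction of pairs, via symmetry of the uniform arrival order).
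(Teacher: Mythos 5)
Your proposal is on the right track in reusing the hat graph and noting that the first edge of a fully dangerous (both-edges-unsampled) claw must be accepted to satisfy the spanning requirement. But steps (3)--(4) break down, and the break is exactly where the real work of the paper lies.

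First, your topology is off. In the hat graph, accepting \emph{one} edge from every claw (say all the $e_i$'s) yields a star centred at $a$; contracting this star merges $\{a,v_1,\ldots,v_n\}$ but leaves $b$ untouched, so the infinity edge $\{a,b\}$ is \emph{not} a loop and is still acceptable. The infinity edge becomes a loop only after contracting \emph{both} edges of some single claw, since only then is there a path from $a$ to $b$ inside $A_t$. So the failure condition is ``some $(AA)$ claw exists'', not ``one edge from each claw was taken''; this is Observation~\ref{obs:aa-win}. Second, the claim that the second edge of a claw becomes a loop after contracting the first is false: after contracting $e_i$, the edge $e'_i$ runs from the merged vertex to $b$, a perfectly ordinary edge. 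Whether $e'_i$ (or, in the paper's focus case, the upper edge of a $(-\,A)$ claw) is accepted or rejected is controlled by the algorithm's choice of $I_t$ --- specifically whether a blocker $(SA)$ claw sits to its left in $I_t$ (Lemma~\ref{obs:blocker}). Your argument never engages with this; it implicitly assumes the algorithm is forced into an $(AA)$, but the framework explicitly gives the algorithm the freedom to load $I_t$ with a blocker drawn from the sample $S$ and thereby deterministically dodge $(AA)$ claws to the blocker's right.

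This is the ``technical heart'' you flagged as the obstacle, and it is not a routine case analysis. The paper's proof must show that \emph{no adaptive schedule} of protected and unprotected intervals lets the algorithm simultaneously (a) acquire a blocker far enough to the left to defend the dangerous claws and (b) avoid accepting a stray upper edge while unprotected. This requires the blocker/protection machinery (Lemmas~\ref{obs:blocker}, \ref{obs:unprotected}) and three quantitative lemmas (Lemmas~\ref{lem:AA}, \ref{lem:bullettwo}, \ref{lem:bulletone}) combined via a min-max argument over the total unprotected measure $y$ (Lemma~\ref{lem:combine}). Your probability estimates (steps~(2), (5)) are fine as far as they go, but the deterministic core in steps~(3)--(4) is wrong, so the overall proof does not close.
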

\subsection{Hard Instance: The Hat}\label{sec:hat}
In this section, we will study a \emph{hat graph} which drives our impossibility result. The hat has a special element which is significantly heavier than the sum of all others, and thus any algorithm with a good utility-competitive ratio must accept it. Furthermore, this special element appears in many small circuits, so the algorithm must not accept the remaining elements of any of these circuits prior to the arrival of the heavy element (otherwise, the heavy element cannot be accepted when it arrives). The hat was used in \cite{babaioff2007matroids} as a counterexample against a particular greedy algorithm (discussed in Section~\ref{sec:supergreedy}), and variants of the graph have been informally known to be problematic for ``greedy-like'' algorithms. However, prior to our work there was no formal classification of ``greedy-like''.

The hat on $n+2$ vertices is a collection of $n$ triangles, all sharing the same edge. Formally, an undirected graph $(V,E)$ is a hat if $V=\{a,b,v_1,\ldots,v_n\}$ for some $n>0$, and $E=\{\{a,b\}\}\cup\big\{e_i=\{a,v_i\} : i\in [n]\big\}\cup\big\{e'_i=\{b,v_i\} : i\in [n]\big\}$. Several weight assignments to the edges of the hat can serve as counterexamples to the algorithms considered in this section, but we consider a particular weight assignment for ease of exposition (as we only need one counterexample). We define this weight function $w: E\rightarrow \mathbb{R}_{\geq 0}$ to maintain the following ordering of the edge weights: $w(e_1) >\ldots> w(e_n) > w(e_1') >\ldots> w(e'_n)$. Furthermore, $w(\{a,b\})$ is much larger than the sum of the weights of all other edges. We will refer to $\{a,b\}$ as the \emph{infinity edge}, and we refer to its arrival time as $t_\infty:=t(\{a,b\})$ to emphasize this. Additionally, we consider the drawing of the hat in the plane as shown in Figure~\ref{fig:hat}, where $e_i$ is to the left of $e_j$ for $i < j$, and $e_i$ is above $e'_i$ for all $i$. Accordingly, we will sometimes refer to the relative position of edges to imply a relation between their relative weights.
\begin{figure}
    \centering
    \includegraphics[scale=0.75]{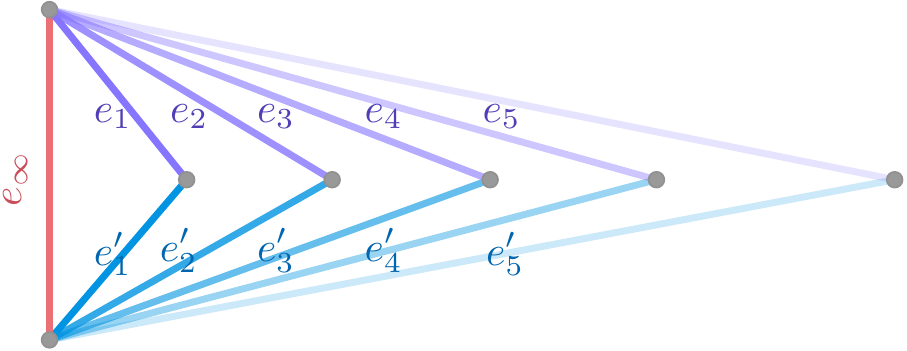}
    \caption{A hat on seven vertices. All purple edges ($e_1,\ldots,e_5$) are heavier than all blue edges  ($e'_1,\ldots,e'_5$), and $e_\infty$ is significantly heavier than all other edges. Within each color, darker edges are heavier.}
    \label{fig:hat}
\end{figure}

We call the pair of edges $(e_i,e'_i)$ the $i$-th \emph{claw}. Recall that any algorithm satisfying the 3 properties listed in Section~\ref{sec:properties} has memory limited to an independent set $I_t$. At any time $t$, given the history of arrivals and the algorithm's past decisions, we can classify the claws into one of 9 kinds in $\{-,A,S\}^2$. The first character in the pair describes the state of the top edge $e_i$, and the second character describes the state of the bottom edge $e'_i$. $S$ refers to an edge that is in $I_t$ and arrived in the sampling stage. $A$ refers to an edge that has been accepted by the algorithm (and is therefore in $I_t$). $-$ refers to any edge that is not in $I_t$. For example, if the $i$-th claw is of type $(S\,-)$ at some time $t$, it means that $t(e_i)<T$, $e_i\in I_t$, and $e'_i\not\in I_t$. Figure~\ref{fig:claws} illustrates these claws.

\begin{figure}
    \centering
    \includegraphics[scale=0.7]{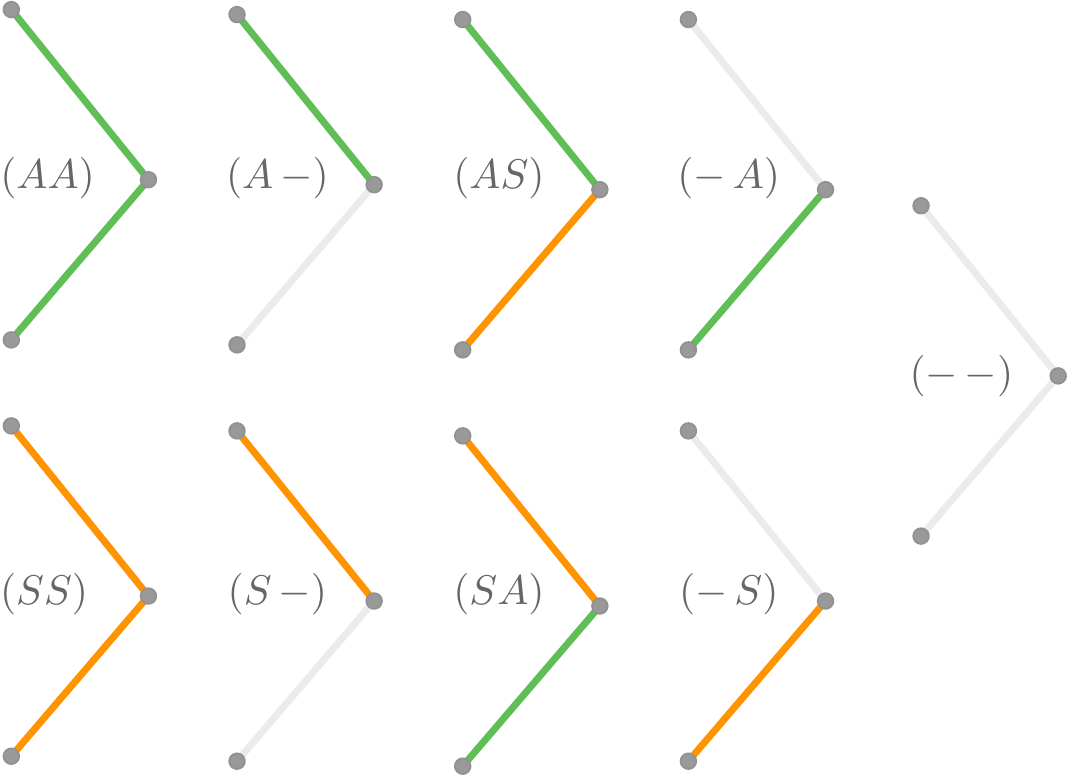}
    \caption{All possible kinds of claws at any time $t$. $S$ refers to sample edges in $I_t$ (drawn in orange), $A$ refers to an accepted edge in $I_t$ (drawn in green), and $-$ refers to any other edge (drawn in gray).}
    \label{fig:claws}
\end{figure}

We next state a few lemmas about different classes of claws and their implications about the performance of the algorithm. Since the infinity edge weighs significantly more than other edges combined, we say the algorithm ``loses'' ({i.e.}, fails to have a constant utility-competitive ratio) if it fails to accept the infinity edge. Conversely, the algorithm ``wins'' if it accepts the infinity edge. Our first observation characterizes the exact scenarios in which the algorithm loses. All missing proofs in this section can be found in Appendix~\ref{app:hat}.

\begin{observation}[Loss condition]\label{obs:aa-win}
The algorithm loses iff there is an $(AA)$ claw before $t_\infty$.\footnote{Babaioff~et~al.~\cite{babaioff2007matroids} used the same graph as a counterexample to a special case of our greedy algorithm (see Section~\ref{sec:supergreedy}), also relying on this observation. Our lemmas are otherwise new, and necessary since we rule out a much larger class of greedy-like algorithms.} 
\end{observation}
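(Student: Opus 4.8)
The plan is to establish both directions by reasoning about what the greedy acceptance rule (property~\ref{prop2}) does when the infinity edge $e_\infty = \{a,b\}$ arrives at time $t_\infty$. Recall that since $w(e_\infty)$ exceeds the sum of all other weights, the algorithm wins if and only if it accepts $e_\infty$, and by property~\ref{prop2} it accepts $e_\infty$ iff $e_\infty \in \text{MWB}((\M \setminus A_{t_\infty})|_{I_{t_\infty} \cup \{e_\infty\}})$. Because $e_\infty$ is the unique heaviest edge, the greedy/Rado--Gale--Edmonds characterization tells us $e_\infty$ lies in this max-weight basis precisely when $e_\infty$ is not a loop of the contracted-and-restricted matroid, i.e., when $\{e_\infty\}$ together with $A_{t_\infty}$ is still independent in $\M$. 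So the whole statement reduces to: the algorithm wins iff $A_{t_\infty} \cup \{e_\infty\}$ is independent in the graphic matroid of the hat.

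Next I would translate independence of $A_{t_\infty} \cup \{e_\infty\}$ into a statement about claws. In the hat graph, $A_{t_\infty}$ is a forest; adding $e_\infty = \{a,b\}$ creates a cycle iff $A_{t_\infty}$ already contains an $a$--$b$ path. Every $a$--$b$ path in the hat must pass through some $v_i$, hence must use both $e_i = \{a,v_i\}$ and $e'_i = \{b,v_i\}$ for that $i$; conversely, if both $e_i, e'_i \in A_{t_\infty}$ for some $i$, then $\{e_i, e'_i, e_\infty\}$ is a triangle, so $A_{t_\infty} \cup \{e_\infty\}$ is dependent. Thus $A_{t_\infty}\cup\{e_\infty\}$ is dependent iff some claw $i$ has both of its edges accepted by time $t_\infty$, i.e., iff there is an $(AA)$ claw before $t_\infty$. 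Combining with the previous paragraph: the algorithm loses iff there is an $(AA)$ claw before $t_\infty$, which is exactly the claim.

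One subtlety to handle carefully is the case $t(e_\infty) < T$, i.e., $e_\infty$ arrives during the sampling stage and is rejected outright by property~\ref{prop1}. In that case the algorithm automatically loses; but I claim there must then be an $(AA)$ claw before $t_\infty$ as well. Indeed, property~\ref{prop3} requires $I_t$ to span $V_t$ at all times, and in particular at a time just after the second edge of some claw containing a spanning obstruction arrives --- more directly: since $e_\infty$ is never accepted, at the very end some claw must have been fully accepted to span the cycle through $\{a,b\}$... \emph{this is the point that needs the most care}, so let me instead argue it cleanly: if $e_\infty$ is rejected during sampling, consider $A_1$ (the final accepted set). For the spanning property at $t=1$, $I_1 \supseteq A_1$ must span $V_1 = E$, and since $e_\infty \notin A_1$ but $A_1 \subseteq I_1 \subseteq A_1 \cup S$, the edge $e_\infty$ must be spanned by $I_1$; one then checks $e_\infty \in S$ (it was a sample) is consistent, so this does not immediately force an $(AA)$ claw. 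Hence the cleanest route is to simply note that Observation~\ref{obs:aa-win} as stated implicitly assumes $t_\infty > T$ (or to state: if $t_\infty \le T$ the algorithm trivially loses, and one separately argues the hard instance can be taken to condition on $t_\infty>T$, which happens with constant probability). I would thus split into the main case $t_\infty > T$ (handled by the two paragraphs above) and remark that the $t_\infty \le T$ case contributes only a constant-probability event that does not affect the asymptotic lower bound. The main obstacle is ensuring the spanning condition in property~\ref{prop3} does not let the algorithm sneak $e_\infty$ back in via $I_t$ without accepting it --- but since winning is defined as \emph{accepting} $e_\infty$ and property~\ref{prop2} is the only acceptance rule, this is resolved by the max-weight-basis characterization above.
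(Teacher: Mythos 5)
Your proof is correct and tracks the paper's own argument closely: both directions go through the observation that property~(iii) forces acceptance of $e_\infty$ exactly when $A_{t_\infty}\cup\{e_\infty\}$ remains independent, and that in the hat graph this happens precisely when no claw has both edges accepted. Two small points worth noting. First, your reduction via the Rado--Gale--Edmonds greedy characterization (``$e_\infty$ is in the MWB iff it is not a loop after contracting $A_{t_\infty}$, iff $A_{t_\infty}\cup\{e_\infty\}$ is independent'') is cleaner than the paper's phrasing ``the algorithm \emph{can} win by discarding this $S$ edge,'' which reads as if the algorithm has a choice; as you correctly point out, the acceptance of $e_\infty$ is \emph{forced} by property~(iii) once no $(AA)$ claw exists, independently of how $I_{t_\infty}$ was chosen. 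Second, you are right that the ``iff'' as stated implicitly assumes $t_\infty>T$: if $e_\infty$ lands in the sampling stage the algorithm loses without any $(AA)$ claw necessarily forming. The paper does not say this in the proof of the observation itself, but it does condition on $t_\infty>T$ explicitly at the start of the Theorem~\ref{thm:supergreedy} proof (``If the infinity edge arrives in the sample stage, the algorithm loses immediately. Suppose the infinity edge arrives after $T$ for the rest of the proof''), so your handling of the edge case matches what the paper does, just placed more locally. One minor tightening you could make: in the $(AA)\Rightarrow$loss direction, it is worth stating explicitly (as the paper does) that once both edges of a claw are accepted they remain in $A_t$ for all later $t$, which is what guarantees the obstruction persists until $t_\infty$; you rely on this implicitly through the monotonicity of $A_t$.
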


The next lemma specifies the unique blocking structure that would prevent the loss-inducing $(AA)$ claws from forming. Our analysis focuses on the case of a $(-\,A)$ claw becoming a $(AA)$ claw, as these events are significantly more likely than a $(A\,-)$ claw turning into an $(AA)$ claw, and suffice for our analysis.

\begin{lemma}[Blockers and Protection]\label{obs:blocker}
Suppose there is no $(AA)$ claw yet. Consider a $(-\,A)$ claw whose upper edge is about to arrive. The upper edge is accepted iff there is no $(SA)$ claw to its left. For this reason, we will refer to $(SA)$ as the \emph{blocker}. We say that the algorithm is \emph{protected} at time $t$ if there is a blocker in $I_t$.
\end{lemma}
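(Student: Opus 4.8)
\textbf{Proof proposal for Lemma~\ref{obs:blocker}.}

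The plan is to directly unpack the acceptance rule in property~\eqref{prop2} of Algorithm~\ref{algo:Greedy} for the special case at hand, namely a $(-\,A)$ claw $(e_i,e_i')$ whose upper edge $e_i$ is about to arrive at some time $t = t(e_i) > T$, under the standing assumption that no $(AA)$ claw has yet formed (and, implicitly, that $t_\infty > t$, since otherwise the analysis is moot). By the acceptance rule, $e_i$ is accepted iff $e_i \in \text{MWB}\big((\mathcal{M}\setminus A_t)|_{I_t \cup \{e_i\}}\big)$. So the whole lemma reduces to understanding when $e_i$ lies in that max-weight basis, and I would phrase the argument entirely in graph-theoretic terms: in a graphic matroid, $e_i \notin \text{MWB}$ of the relevant restricted/contracted matroid iff adding $e_i$ closes a cycle all of whose other edges are already ``secured'' with weight at least $w(e_i)$ — equivalently, iff $e_i$ is the unique lightest edge of some cycle contained in $(I_t \cup \{e_i\}) / A_t$.

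First I would identify all cycles of the hat through $e_i = \{a,v_i\}$: every such cycle is a triangle $\{e_i, e_j', e_j\}$ together with — if the infinity edge is available — the triangle $\{e_i, e_i', e_\infty\}$, plus longer cycles obtained by concatenating claws through $a$ and $b$. But after contracting $A_t$ (which, since there is no $(AA)$ claw, contains at most one edge of each claw) and since $I_t \subseteq A_t \cup S$, the only cycles that survive inside $I_t \cup \{e_i\}$ are short: a cycle through $e_i$ needs to return from $v_i$ to $a$, and $e_i'$ is present (the claw is $(-\,A)$, so $e_i' \in A_t \subseteq I_t$), so the candidate cycle is $e_i$ together with $e_i'$ and then a path from $b$ back to $a$ lying in $I_t$. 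That path from $b$ to $a$ must use some claw $j$ whose both edges $e_j, e_j'$ are in $I_t$ — but $e_j' \in A_t$ would again require $e_j \notin A_t$ (no $(AA)$), so $e_j$ is a sample, i.e. claw $j$ is of type $(SA)$. (The case where the path uses $e_\infty$ is excluded because $e_\infty$ has not arrived.) Conversely, if claw $j$ is $(SA)$ and sits to the left of claw $i$, then $w(e_j) > w(e_i) > w(e_i')$ and $w(e_j') $ is accepted hence in $A_t$; contracting $A_t$ and restricting to $I_t \cup \{e_i\}$ leaves the cycle on $\{e_i, e_i', e_j, e_j'\}$, within which, after contracting the accepted edges $e_i'$ and $e_j'$, we get a $2$-cycle (parallel pair) on $\{e_i, e_j\}$ with $w(e_j) > w(e_i)$, so $e_i$ is not in the max-weight basis and is rejected.

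The remaining case analysis is to confirm that a $(SA)$ claw to the \emph{right} of claw $i$ (so $w(e_j) < w(e_i)$) does \emph{not} block: in the parallel pair $\{e_i, e_j\}$ that results, $e_i$ is now the heavier edge, so $e_i$ does belong to the max-weight basis and is accepted; and claws of the other types ($(S\,-),(A\,-),(--),(S\,S)$, etc.) either do not provide a $b$–$a$ path inside $I_t$ at all or, in the $(SS)$ case, would have forced something already accepted — but I should double-check that $(SS)$ claws can be cleanly excluded or handled, e.g. by noting that if a $(SS)$ claw existed to the left it would by the same cycle argument sit in $\text{MWB}$ together with everything lighter, which is inconsistent with the maintained invariants, or simply by observing that $e_j \in S$, $e_j' \in S$ with $w(e_j') $ possibly $< w(e_i)$ creates at most a parallel pair whose lightest element is dropped — and since $w(e_j) > w(e_i)$ for $j < i$ it is still $e_i$ that is dropped, so $(SS)$ to the left blocks just as $(SA)$ does. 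I expect the main obstacle to be exactly this bookkeeping: making sure every possible cycle through $e_i$ inside the contracted matroid $(I_t\cup\{e_i\})/A_t$ has been accounted for (in particular that no long cycle threading multiple claws survives the contraction), and then arguing that among all potential blockers the $(SA)$ claws to the left are the decisive ones — which is what the lemma isolates, with the other cases being subsumed or shown not to occur under the invariants. Once the cycle structure is pinned down, the weight comparisons ($w(e_j) > w(e_i)$ for $j<i$, $w(e_i) > w(e_i')$) are immediate from the fixed ordering of edge weights, and the lemma follows; the notion of \emph{protection} is then just the restatement that some $(SA)$ claw (to the left of the relevant upper edge) is present in $I_t$.
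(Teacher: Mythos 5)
Your general approach matches the paper's: unpack the acceptance rule from property~(iii), note that the algorithm accepts $e_i$ iff it is in the max-weight basis of $(\mathcal{M}\setminus A_t)|_{I_t\cup\{e_i\}}$, observe that $e_i'\in A_t\subseteq I_t$ and that at most one other claw can contribute both of its edges to $I_t$ (else $I_t$ would contain a $4$-cycle), and then do a case analysis on the type of that one full claw $j$. But there is a genuine error in your handling of the $(SS)$ case.

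If claw $j$ is $(SS)$, then \emph{neither} $e_j$ nor $e_j'$ is accepted, so contracting by $A_t$ contracts only $e_i'$ out of the $4$-cycle $\{e_i,e_i',e_j',e_j\}$; what remains is a \emph{triangle} on edges $\{e_i,e_j,e_j'\}$, not a parallel pair. The max-weight basis of a triangle is its two heaviest edges, and since both $e_i$ and $e_j$ are upper edges while $e_j'$ is a lower edge, the two heaviest are $e_i$ and $e_j$ regardless of whether $j<i$ or $j>i$. So $e_i$ \emph{is} in the MWB and \emph{is} accepted: an $(SS)$ claw never blocks. Your conclusion that ``$(SS)$ to the left blocks just as $(SA)$ does'' is therefore false, and if it were true the ``iff'' in the lemma would fail (an $(SS)$ to the left with no $(SA)$ to the left would then cause rejection, contradicting the ``if'' direction). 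The reason the parallel-pair picture applies to $(SA)$ but not $(SS)$ is precisely that in the $(SA)$ case $e_j'\in A_t$ gets contracted as well, collapsing the $4$-cycle to two parallel edges $\{e_i,e_j\}$; in the $(SS)$ case only one edge of the $4$-cycle is contracted. Two smaller issues: (a) your opening claim that ``every cycle through $e_i$ is a triangle $\{e_i,e_j',e_j\}$'' is not right---those three edges form a path $v_i\!-\!a\!-\!v_j\!-\!b$, and every cycle through $e_i$ must in fact also use $e_i'$ (you do recover this later); (b) the $(AS)$ case, which you lump into ``claws that do not provide a $b$--$a$ path,'' does provide such a path ($e_j'$ is a sample in $I_t$) and must be analyzed---after contraction it yields the parallel pair $\{e_i,e_j'\}$ with $e_i$ the heavier edge, so $e_i$ is accepted and $(AS)$ also never blocks.
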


Importantly, note that there can be \emph{at most one blocker in $I_t$}, as two blockers form a cycle. So we can unambiguously refer to \emph{the} blocker at any time $t$. A blocker's effectiveness is a function of its location: Blockers far to the left ``protect'' more claws and are therefore more effective. 

With this language in mind, we can reframe the algorithm's objective, while working within the Greedy framework. The algorithm loses whenever the upper edge of a $(-\,A)$ claw arrives without a blocker to its left. So the algorithm would like to maintain a blocker in $I_t$ as far to the left as possible.\footnote{Note that an arbitrary algorithm can simply decide to violate the properties defining Greedy. Our goal is to analyze Greedy algorithms, which must fit this framework.} So the remainder of this section studies decisions the algorithm can make (again, within the Greedy framework) to include blockers far to the left. Lemma~\ref{obs:unprotected}, however, establishes that we cannot create a new blocker without destroying our old one first (thereby going ``unprotected'' for some period).

\begin{lemma}\label{obs:unprotected}
If the lower edge of an $(S\,-)$ arrives at time $t$ and $I_t$ has a blocker, this edge will not be accepted.
\end{lemma}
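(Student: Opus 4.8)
\textbf{Proof plan for Lemma~\ref{obs:unprotected}.}

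The plan is to argue directly from property~\eqref{prop2} of the Greedy framework, using the characterization of acceptance in terms of max-weight bases after contracting the accepted set, together with the basic exchange structure of the hat graph. Fix a time $t$ at which the lower edge $e'_i$ of an $(S\,-)$ claw arrives, and suppose $I_t$ contains a blocker, i.e. an $(SA)$ claw $(e_j, e'_j)$ with $e_j \in S$ (a sample) and $e'_j \in A_t$ (accepted). Let $e_i$ be the upper edge of the arriving claw; since the claw is of type $(S\,-)$, we have $e_i \in I_t$ and $e_i$ is a sample edge. By Observation~\ref{obs:aa-win} we may assume there is no $(AA)$ claw before $t$, as otherwise the algorithm has already lost and nothing about the future acceptance of $e'_i$ matters for the statement.

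The key step is to show that $e'_i \notin \text{MWB}\big((\mathcal{M} \setminus A_t)|_{I_t \cup \{e'_i\}}\big)$. The idea is that in the matroid $\mathcal{M}/A_t$ restricted to $I_t \cup \{e'_i\}$, the newly arrived edge $e'_i$ is spanned by the cheaper portion of $I_t$ that it is already dependent on. Concretely, consider the cycle in the hat formed by $e_i, e'_i$, and the infinity edge $\{a,b\}$, together with the cycle $e_j, e'_j, \{a,b\}$ coming from the blocker claw. If the infinity edge has not yet been contracted (it is not in $A_t$ since then the algorithm would have already won and the property~\eqref{prop3} constraints would be different — in fact we should check $t_\infty$ relative to $t$; if the infinity edge has arrived and been accepted then the statement about $e'_i$ is moot because the algorithm has won, so assume $t < t_\infty$), then $e_i, e'_i, e_j, e'_j$ lie on a common cycle $\{e_i, e'_i, e'_j, e_j\}$ (the 4-cycle through $v_i, a, v_j, b$). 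Hence in $\mathcal{M}|_{I_t \cup \{e'_i\}}$, and a fortiori after contracting $A_t \ni e'_j$, the element $e'_i$ is in the span of $\{e_i, e_j\} \subseteq I_t$ minus any already-contracted elements. Since $e_i$ is a sample (hence lighter than... no — we need the weight comparison carefully) we compare $w(e'_i)$ to the weights of the circuit elements it closes: $e'_i$ is a bottom edge, so $w(e'_i) < w(e_i)$ and $w(e'_i)$ versus $w(e'_j)$, $w(e_j)$. The point is that the unique circuit $C$ of $\mathcal{M}/A_t$ contained in $I_t \cup \{e'_i\}$ and containing $e'_i$ has the property that $e'_i$ is \emph{not} its max-weight element — because the blocker's top edge $e_j$ (a sample still sitting in $I_t$) is in that circuit, and $e_j$ being to the left of... hmm, actually we need $j < i$ or $j > i$. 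By Lemma~\ref{obs:blocker} the blocker is relevant only when it is to the \emph{left} of the arriving claw, but here we are not told that; instead we use that \emph{any} blocker creates such a circuit. Regardless of the left/right relationship, $\{e_i, e'_i, e_j, e'_j\}$ is always a 4-cycle in the hat, so the circuit is present; and since $e_i \in I_t$ with $w(e_i) > w(e'_i)$, the arriving edge $e'_i$ is strictly lighter than another element of its fundamental circuit, so adding it to the greedy basis of $(\mathcal{M}\setminus A_t)|_{I_t}$ does not improve it. By property~\eqref{prop2}, $e'_i$ is rejected.

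To make this rigorous I would (1) invoke uniqueness of the max-weight basis and the fundamental-circuit characterization: $e'_i$ enters $\text{MWB}((\mathcal{M}\setminus A_t)|_{I_t\cup\{e'_i\}})$ iff $e'_i$ is the strictly-heaviest element of the unique circuit of $(\mathcal{M}\setminus A_t)|_{I_t\cup\{e'_i\}}$ through $e'_i$ (using that $I_t$ is independent in $\mathcal{M}\setminus A_t$, which follows from property~\eqref{prop3} since $A_t \subseteq I_t$ and $I_t \in \mathcal{I}$); (2) identify that circuit explicitly in the hat — it is contained in $\{e_i, e'_i\} \cup \{e_j, e'_j\}$ (dropping whichever of these has been contracted into $A_t$, namely $e'_j$), and it contains $e'_i$ and $e_i$ because $e_i \in I_t$ lies on the 4-cycle $v_i a v_j b$ and removing the contracted $e'_j$ still leaves $e'_i$ dependent on $\{e_i, e_j\}$; (3) conclude $w(e'_i) < w(e_i)$ by the weight ordering $w(e_i) > w(e'_i)$ for bottom-vs-top edges of a claw, so $e'_i$ is not the heaviest element of its circuit, hence rejected. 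The main obstacle is step (2): carefully verifying that contracting $A_t$ does not accidentally turn $e'_i$ into a loop or isolate it from $e_i$ — one must check that even after $e'_j$ (and any other accepted edges) are contracted, the dependency of $e'_i$ on $I_t$ still runs through an element heavier than $e'_i$; this uses that the only accepted bottom edges correspond to $A$-states of claws, each of which pairs with a sample top edge still in $I_t$, so the heavier sample edge always survives in the relevant circuit. I expect this case-check, and the handling of the $t$ vs. $t_\infty$ and ``already lost'' degeneracies, to be where the real work lies; the weight comparison itself is immediate from the ordering fixed in Section~\ref{sec:hat}.
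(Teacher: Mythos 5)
Your approach matches the paper's: identify the $4$-cycle $\{e_i,e'_i,e'_j,e_j\}$ through $v_i,a,v_j,b$, contract the accepted bottom edge $e'_j$ to obtain a $3$-element circuit $\{e_i,e_j,e'_i\}$ in $(\mathcal{M}\setminus A_t)|_{I_t\cup\{e'_i\}}$, and then compare weights. However, your step (1) misstates the fundamental-circuit criterion for membership in the max-weight basis. When $I$ is independent and $I\cup\{e\}$ is dependent, the max-weight basis of $I\cup\{e\}$ is $I\cup\{e\}$ minus the \emph{lightest} element of the unique circuit through $e$; hence $e\in\text{MWB}(I\cup\{e\})$ iff $e$ is \emph{not the lightest} element of that circuit, not iff $e$ is the heaviest. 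With your stated criterion, the conclusion in step (3) (``$w(e'_i)<w(e_i)$, so $e'_i$ is not the heaviest, hence rejected'') would be false logic in general: an element that is neither heaviest nor lightest \emph{is} accepted. The argument here is rescued only because $e'_i$ happens to be the strict minimum of the circuit $\{e_i,e_j,e'_i\}$ — $e_i$ and $e_j$ are both top edges and by the weight ordering every top edge is heavier than every bottom edge. To fix the proof, state the correct criterion and then observe $w(e'_i)<w(e_i)$ \emph{and} $w(e'_i)<w(e_j)$, both immediate from ``top edges are heavier than bottom edges.'' Your other side remarks (no existing $(AA)$, $t<t_\infty$, uniqueness of the full claw in $I_t$) are fine but not needed once the circuit is identified — the paper dispenses with them.
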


Lemma~\ref{obs:unprotected} means that the algorithm faces a tradeoff. If $I_t$ has a blocker, it is safe from accepting the upper edge of a $(-\,A)$ claw \emph{to its right} at time $t$. But, the algorithm \emph{cannot move its blocker to the left, even if the lower edge of an $(S\,-)$ arrives during this interval}. Alternatively, the algorithm may not have a blocker during $I_t$. In that case, the algorithm can possibly accept a good blocker, if one happens to arrive at time $t$. But, the algorithm is at risk of accepting the upper edge of a $(-\,A)$ claw that arrives at time $t$ \emph{no matter its location}, because $I_t$ has no blockers at all.

 \subsection{Warm Up: Ruling out the Supergreedy Algorithm}\label{sec:supergreedy}
In the appendix of the same paper in which they introduced the general matroid secretary conjecture, Babaioff~{et al.}~\cite{babaioff2007matroids} also show that the following supergreedy algorithm cannot be constant-utility-competitive. In Appendix~\ref{app:supergreedy} we will restate their proof in the slightly different language of claws and blockers, in part for completeness, and in part as a warm-up for the more involved proof in the following section. We first define the supergreedy algorithm,\footnote{We call this algorithm supergreedy, since in addition to the greedy property of always accepting elements according to the Greedy rule, it makes greedy choices about what elements to kick out of $I_t$ (in particular, ones that improve $I_t$ \emph{the most}).} which specifies a particular choice of $I_t$:

\begin{enumerate}[(i)]\addtocounter{enumi}{3}
    \item \label{prop6} At all $t$, $I_t$ is the maximum-weight set subject to constraints~\eqref{prop1},~\eqref{prop3},~\eqref{prop2} ($I_t:=\text{MWB}( (\mathcal{M}\setminus A_t)|_{V_T}) \cup A_t$).
\end{enumerate}

\begin{observation} The unique algorithm that satisfies properties \eqref{prop1} to \eqref{prop6} is Algorithm~\ref{algo:supergreedy}.
\end{observation}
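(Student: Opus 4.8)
The statement to prove is that there is a \emph{unique} algorithm satisfying properties \eqref{prop1}--\eqref{prop6}, namely the supergreedy Algorithm~\ref{algo:supergreedy}. The plan is to show that property \eqref{prop6} (maximality of $I_t$) pins down $I_t$ completely at every time $t$, and that once $I_t$ is determined, properties \eqref{prop1} and \eqref{prop2} leave no freedom in the accept/reject decisions. Since a greedy algorithm's only latitude is in its choice of the independent sets $I_t$ and of the sampling threshold $T$, the real content is (a) showing the choice $I_t := \text{MWB}((\mathcal{M}\setminus A_t)|_{V_T}) \cup A_t$ is forced, and (b) checking this choice actually satisfies all the constraints, so the set of algorithms satisfying \eqref{prop1}--\eqref{prop6} is nonempty.

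First I would argue that properties \eqref{prop1}--\eqref{prop2} together \emph{define} the accepted set $A_t$ as a deterministic function of $T$ and of the trajectory $(I_s)_{s \le t}$: the sampling stage accepts nothing before $T$, and after $T$ the acceptance rule \eqref{prop2} is an iff condition, so $A_t$ is obtained by running through the arrivals in order and applying the rule. Hence to show uniqueness it suffices to show the trajectory $(I_t)$ is forced. Fix $t$ and condition on the arrivals so far (so $V_T$, $V_t$, and $A_t$ are all determined; note $A_t$ only depends on $(I_s)_{s<t}$, so this is a legitimate induction on time). Property \eqref{prop3} requires $A_t \subseteq I_t \subseteq A_t \cup S$ with $S = V_T$, and that $I_t$ span $V_t$; property \eqref{prop6} then requires $I_t$ to be a maximum-weight such set. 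I would show that among all independent sets $J$ with $A_t \subseteq J \subseteq A_t \cup V_T$, the maximum-weight one is exactly $A_t \cup \text{MWB}((\mathcal{M}\setminus A_t)|_{V_T})$: contracting $A_t$ is a standard matroid operation, the weight of $J$ splits as $w(A_t) + w(J \setminus A_t)$ where $J\setminus A_t$ is an independent set of $(\mathcal{M}\setminus A_t)$ contained in $V_T$, so maximizing over $J$ reduces to finding $\text{MWB}((\mathcal{M}\setminus A_t)|_{V_T})$, which is unique by the paper's tie-breaking convention. I would also check that this maximizer spans $V_t$: since it spans $V_T$ after contracting $A_t$, and $A_t$ was itself built from accepted elements each of which was (by the acceptance rule) in the MWB of the current $I$-set, the accepted elements together with $\text{MWB}((\mathcal{M}\setminus A_t)|_{V_T})$ span $V_T \cup A_t$; and spanning $V_t \setminus (V_T \cup A_t)$ — the elements rejected after $T$ — follows because each such rejected element $e$ failed the acceptance test, meaning $e$ was already spanned by $I_{t(e)}$ after contracting $A_{t(e)}$, and this spanning is preserved going forward. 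This shows the claimed $I_t$ is the unique valid choice, and simultaneously that it is valid, so Algorithm~\ref{algo:supergreedy} exists and is unique.

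The one subtlety I would be careful about — and I expect this to be the main obstacle — is the interaction between the spanning requirement in \eqref{prop3} and the maximum-weight requirement in \eqref{prop6}: a priori the maximum-weight set subject only to the containment constraint might \emph{fail} to span $V_t$, in which case \eqref{prop6} as literally stated ("maximum-weight set subject to \eqref{prop1},\eqref{prop3},\eqref{prop2}") would pick a different, spanning set and the clean closed form would break. So the crux is the lemma that $\text{MWB}((\mathcal{M}\setminus A_t)|_{V_T})$ automatically spans all of $V_t$ (not just $V_T$), which rests on the observation that every element rejected after the sampling stage was rejected precisely because it was already in the span of the then-current independent set modulo the accepted elements — an invariant that must be carried through the induction. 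Once that invariant is nailed down, the rest is bookkeeping with contraction and the uniqueness of max-weight bases.
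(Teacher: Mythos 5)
The paper states this Observation without proof, so there is no reference argument to compare against; I will assess your proposal on its own merits.

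Your high-level plan is sound: property~\eqref{prop6} leaves no freedom in the choice of $I_t$, the acceptance decisions then follow deterministically from \eqref{prop1} and \eqref{prop2}, and the closed form $I_t = A_t \cup \text{MWB}((\mathcal{M}\setminus A_t)|_{V_T})$ is correct because the containment constraint $A_t\subseteq J\subseteq A_t\cup V_T$ decomposes the optimization over the contracted matroid. You correctly flag the spanning requirement as the place where care is needed, and your invariant (each element rejected after $T$ was already spanned by $I_{t(e)}$ modulo $A_{t(e)}$, and $\text{span}(V_T\cup A_{t(e)})\subseteq\text{span}(V_T\cup A_t)$ going forward) does close that issue.

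However, there is a genuine step you leave unaddressed. You establish that the algorithm satisfying \eqref{prop1}--\eqref{prop6} is \emph{unique}, but you never verify that this unique algorithm is in fact Algorithm~\ref{algo:supergreedy}. Property~\eqref{prop2} with the forced $I_t$ accepts $e$ iff $e\in \text{MWB}\big((\mathcal{M}\setminus A_{t(e)})|_{I_{t(e)}\cup\{e\}}\big)$, whereas Algorithm~\ref{algo:supergreedy} accepts $e$ iff $e\in \text{MWB}\big((\mathcal{M}\setminus A_{t(e)})|_{V_{t(e)}\cup\{e\}}\big)$. Since $I_{t(e)}$ is a strict subset of $V_{t(e)}$ in general (rejected elements are dropped), these are a priori different tests, and identifying them is exactly the content of the observation. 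The equivalence does hold, by the following standard fact: if $J=\text{MWB}(\mathcal{N}|_W)$ and $e\notin W$, then $e\in\text{MWB}(\mathcal{N}|_{J\cup\{e\}})$ iff $e\in\text{MWB}(\mathcal{N}|_{W\cup\{e\}})$, because in the greedy order the elements processed before $e$ that get accepted are the same in both cases (namely the elements of $J$ heavier than $e$, whose span equals the span of the elements of $W$ heavier than $e$). You should state and invoke this lemma, applied to $\mathcal{N}=\mathcal{M}\setminus A_{t(e)}$ and $W=V_{t(e)}\setminus A_{t(e)}$; without it, you have proved uniqueness but not the identification with Algorithm~\ref{algo:supergreedy}, and relegating it to ``bookkeeping'' undersells what needs to be checked.
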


\begin{algorithm}\caption{Supergreedy Algorithm}\label{algo:supergreedy}
\begin{flushleft} \quad Accept an element $e$ iff \end{flushleft}
\begin{enumerate}  \itemsep0em
    \item $e$ arrives after $T$ ($t(e) > T$);
    \item it is feasible to accept $e$ ($A_{t(e)} \cup \{e\} \in \mathcal{I}$);
    \item even after contracting by the accepted elements so far, $e$ is still in the max-weight basis of elements seen so far ($e \in \text{MWB}((\mathcal{M}\setminus A_{t(e)})|_{V_{t(e)}\cup \{e\}})$).
\end{enumerate}
\end{algorithm}

\begin{theorem}[\cite{babaioff2007matroids}]\label{thm:supergreedy}
The supergreedy algorithm is not $\alpha$-utility-competitive for constant $\alpha$.
\end{theorem}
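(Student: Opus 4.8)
\textbf{Proof proposal for Theorem~\ref{thm:supergreedy}.}

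The plan is to analyze the behavior of the supergreedy algorithm on the hat graph described in Section~\ref{sec:hat}, using the claw/blocker vocabulary already set up. By Observation~\ref{obs:aa-win}, the algorithm loses precisely when some $(AA)$ claw forms before $t_\infty$, so the whole task reduces to showing that with constant probability an $(AA)$ claw appears before the infinity edge arrives. The key structural fact to exploit is that supergreedy makes the \emph{maximal} choice for $I_t$, namely $I_t = \text{MWB}((\mathcal{M}\setminus A_t)|_{V_T}) \cup A_t$. Concretely, since in a graphic matroid the max-weight spanning forest on the sampled edges is obtained greedily, the sampled edges that land in $I_t$ are exactly those that do not close a cycle with heavier sampled edges; in the hat this means that if two sampled claws both have their upper edge in the sample, the lighter one's upper edge is forced out (since $e_i, e_j, \{a,b\}$-free cycles appear). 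Thus at any moment $I_t$ contains at most one sampled upper edge, and the blocker position is fully determined by the current sample and accepted sets — there is no freedom to place it cleverly, which is what makes supergreedy so brittle.

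The main steps, in order: (1) Restate the loss condition (Observation~\ref{obs:aa-win}) and the blocker dynamics (Lemmas~\ref{obs:blocker} and~\ref{obs:unprotected}) for this specific algorithm, noting that because $I_t$ is always the greedy max-weight forest, the blocker at time $t$ is simply the claw $(e_j, e'_j)$ of smallest index $j$ such that $e'_j$ has been accepted and $e_j$ is a sample — and Lemma~\ref{obs:unprotected} says a better (smaller-index) blocker can never be installed once a blocker exists. (2) Condition on a favorable ``early'' configuration: with constant probability the infinity edge arrives late (say $t_\infty > 1/2$) and the sampling threshold $T$ is bounded away from $1$; then argue that among the first several claws to become relevant, with constant probability we see a lower edge $e'_j$ arrive after $T$ with no blocker to its left yet, so it is accepted (creating a $(-\,A)$ or $(SA)$ claw), and subsequently the upper edge $e_i$ of some claw $i$ with $i < j$ and $e'_i$ already accepted arrives — producing an $(AA)$ claw. (3) Carry out the probability estimate: this is essentially a race among a few uniformly random arrival times, so the event that the ``wrong'' edges arrive in the order that forms an $(AA)$ claw before $t_\infty$ has probability bounded below by an absolute constant. (4) Conclude that the algorithm fails to accept the infinity edge with constant probability, and since $w(\{a,b\})$ dominates the total weight of all other edges, $\E[w(\ALG)]/w(\OPT)$ is bounded away from $1$ — in fact can be pushed to $o(1)$ by scaling $w(\{a,b\})$, or by iterating the construction, though even a constant loss suffices to contradict $\alpha$-competitiveness for $\alpha$ close to $1$; to get it for \emph{every} constant $\alpha$ one takes the infinity edge heavy enough that a constant failure probability forces the ratio below any fixed $\alpha$.

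The main obstacle I anticipate is the bookkeeping in step (2): pinning down a clean, explicitly-described favorable event on arrival times and sampling threshold under which supergreedy is \emph{forced} into an $(AA)$ claw, rather than merely ``likely'' to form one. One has to be careful that contracting an accepted lower edge $e'_j$ does not change which later upper edges are feasible in a way that rescues the algorithm — but in the hat, contracting $e'_j = \{b, v_j\}$ identifies $b$ with $v_j$, which only makes the cycle $\{a, v_j\}, \{a, b\}$ (i.e.\ $e_j$ together with the infinity edge) short, so it genuinely does create the obstruction to later accepting $\{a,b\}$. The cleanest route is probably to fix a small constant number of claws (two or three), condition on the exact relative order of the handful of relevant arrival times and on $t_\infty$ being last among them, verify deterministically that supergreedy then forms an $(AA)$ claw before $t_\infty$, and observe this ordered event has constant probability; the rest is routine. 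Since this is billed as a warm-up recapitulation of the argument of~\cite{babaioff2007matroids}, I would keep the constants loose and defer the sharper, algorithm-class-wide version to the proof of Theorem~\ref{thm:greedy}.
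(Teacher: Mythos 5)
Your plan correctly identifies the loss condition (Observation~\ref{obs:aa-win}) and the brittleness of supergreedy (that $I_t$ is forced to be the max-weight forest), but there are two problems, one of which is fatal.

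The fatal one is in your step (4). You aim to show the supergreedy algorithm forms an $(AA)$ claw before $t_\infty$ with \emph{constant} probability, and then argue that scaling $w(\{a,b\})$ lets you rule out $\alpha$-competitiveness for every constant $\alpha$. That last step is backwards: if the algorithm accepts the infinity edge with constant probability $1-p$, then since $w(\{a,b\})$ dominates both $\OPT$ and $\ALG$, making the infinity edge heavier only pushes $\E[w(\ALG)]/w(\OPT)$ \emph{closer} to $1-p$, not towards $0$. A constant failure probability only rules out $\alpha$-competitiveness for $\alpha > 1-p$; it is perfectly consistent with the algorithm being $(1-p)$-utility-competitive. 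To exclude \emph{all} constants $\alpha$, you must show the failure probability tends to $1$ as $n \to \infty$. The paper's proof does exactly this: it works in an interval $L=[T,T+n^{-1/3}]$, shows that with probability $1-o(1)$ the left-most $(SS)$ claw sits at index $O(\log n)$ and survives throughout $L$, so no blocker can form there; then the expected number of claws $i$ with $T < t(e'_i) < t(e_i) < T+n^{-1/3}$ is $\Theta(n^{1/3}) \to \infty$, and by Chernoff at least one occurs with probability $1-o(1)$, any of which forces an $(AA)$ claw. Meanwhile $t_\infty \notin L$ with probability $1-o(1)$. The vanishing length of $L$ together with the growing number of dangerous claws inside it is what makes the failure probability go to $1$; your constant-probability conditioning throws this away. "Iterating the construction" (taking disjoint hats) doesn't rescue the argument either, since then the ratio is still $\approx 1-p$ by linearity.

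The secondary issue is your structural claim that ``$I_t$ contains at most one sampled upper edge'' because two upper edges would close a cycle with $\{a,b\}$. Two upper edges $e_i=\{a,v_i\}$, $e_j=\{a,v_j\}$ are independent (they only share $a$), so the max-weight sampled forest will happily contain many of them. What \emph{is} true, and what the paper leans on, is that $I_t$ contains at most one full $(SS)$ \emph{claw}, because two claws (plus the infinity edge) form a cycle — but even then it's the presence of a single far-left $(SS)$ claw that matters, not a scarcity of upper edges. Concretely, the obstruction that supergreedy cannot shake is: once the left-most $(SS)$ claw at index $i$ is in $I_T$, any potential blocker to its right has a lower edge lighter than both members of claw $i$, so by the MWB rule it is never accepted; and discarding claw $i$ requires a complete claw to its left, which (with high probability) takes far longer than $|L|$. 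This is the mechanism your outline gestures at but does not pin down, and it is where the asymptotics come from.
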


A full proof (in our language) appears in Appendix~\ref{app:supergreedy}. But we overview the main idea here, as it will help demonstrate the difference between this particular greedy algorithm and an arbitrary one. Observe first that an $(SS)$ claw in $I$ prevents any blockers forming to its right.\footnote{This is because when the lower edge arrives, it is lighter than both edges in the $(SS)$ claw to its left, as well as the $S$ edge above it.} Observe also it is extremely likely that the left-most $(SS)$ claw is very far to the left, and $I_T$ must contain it. Therefore, in order to get a blocker, that blocker must either be to the left of the $(SS)$ claw (and thus will take a long time to arrive), or that $(SS)$ claw must be discarded (i.e. one of its edges must be removed from $I$).

The supergreedy algorithm does not discard the $(SS)$ claw, except with an $(AS)$ or $(SA)$ claw to its left (both of which take a long time to arrive). This means that until one of these claws arrive, we cannot block an $(AA)$ from occurring, and we are extremely likely to see an $(AA)$ (in fact, we're likely to see $\Omega(n)$ of them) before this happens.

The full proof just elaborates on the steps in this outline and makes calculations rigorous, but the outline above suffices to draw a distinction to arbitrary greedy algorithms. The supergreedy algorithm will only discard an $(SS)$ claw when the lower $S$ is no longer in the max-weight basis (contracted by $A$). In order for this to happen, we must have a complete $(AS)$ or $(SA)$ claw to its left, and this takes time to arrive. An arbitrary greedy algorithm, however, can instead immediately replace this far-left $(SS)$ claw with a far-right $(SS)$ claw, because it is not constrained to maintain a max-weight basis in $I_t$. Indeed, this flexibility allows the algorithm to actually engage in the tradeoff highlighted at the end of the previous section, and turn blockers `on' or `off'.

 \subsection{Main Result: Ruling out all Greedy Algorithms}\label{sec:all-greedy}
Armed with a better understanding of some properties of the hat structure and Theorem~\ref{thm:supergreedy} as warmup, we are ready to prove Theorem~\ref{thm:greedy}, which states that greedy algorithms fail to be $\alpha$-utility-competitive for any constant $\alpha$.

We give a detailed proof sketch below, and defer calculations to Appendix~\ref{app:all-greedy}. We first repeat the main intuition: The algorithm's goal is to not accept any $(AA)$ claw before $t_\infty$ (Observation~\ref{obs:aa-win}). To do so, the algorithm \emph{must} make sure $I_t$ includes a blocker to the left of every $(-\,A)$ whose upper edge arrives at time $t<t_\infty$ (Lemma~\ref{obs:blocker}). We can order potential blockers $(S\,-)$ by the arrival times of their lower edges, each of which is uniformly distributed in $[T,1]$. Therefore, it is unlikely that a blocker far to the left arrives very early. 

The algorithm can try to start with a mediocre blocker and improve it over time by accepting blockers further to the left as they arrive. The caveat is that due to Lemma~\ref{obs:unprotected}, \emph{blocker improvements are only possible in unprotected periods}, during which \emph{any} arriving upper edge of $(-\,A)$ claws is accepted. Therefore, the algorithm faces a trade-off: Forming a more effective blocker costs more unprotected time. Importantly, the algorithm does not know whether the next arriving edge will be part of a potential blocker, or part of an $(-\,A)$. 

In order to show that the algorithm fails, we show that with high probability there will be a $(AA)$ claw before the arrival of the infinity edge. Specifically, we show that with high probability, an $(-\,A)$ claw becomes $(AA)$ in an interval of length $\ell = n^{-0.1}$ after $T$, which is with high probability before the arrival of the infinity edge. 

We now get into details of our proof approach. We first choose a parameter $x \in [n]$ (thinking of the claws as labeled $1$ through $n$ from left to right). We will undercount the algorithm's failure, noting that it fails whenever any of the following happens:
\begin{itemize}
\item The upper edge of some $(-\,A)$ to the left of $x$ arrives during $[T,T+\ell]$, \emph{and} $I_t$ does not include any blocker to the left of $x$ for any $t\in[T,T+\ell]$.
\item The upper edge of some $(-\,A)$ arrives at an unprotected  $t\in[T,T+\ell]$.
\end{itemize}

In other words, we are zeroing in on two potential sources of failure: the upper edge of \emph{any} $(-\,A)$ claw could arrive during an unprotected time, or the upper edge of an $(-\,A)$ claw to the left of $x$ could arrive before the algorithm accepts a blocker to the left of $x$. Note that these are very narrow possibilities for failure, but they suffice for our analysis. 

So there are three probabilities to analyze. The first part of the first bullet is independent of the algorithm,\footnote{Recall that the first edge of a $(-\,-)$ claw to arrive must always be accepted since $I_t$ must span $V_t$.} and simply considers the probability that the upper edge of a $(-\,A)$ to the left of $x$ arrives during $[T,T+\ell]$.

\begin{lemma}\label{lem:AA} With probability at least $1-2^{-\ell^2x/2}$, the upper edge of a $(-\,A)$ claw to the left of $x$ arrives between $T$ and $T+\ell$.
\end{lemma}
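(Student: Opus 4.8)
\textbf{Proof plan for Lemma~\ref{lem:AA}.}
The event in question depends only on the random arrival times, not on the algorithm, so the plan is to reduce it to a clean statement about the relative order of certain arrival times and then bound a tail probability. First I would recall that a claw $i$ is of type $(-\,A)$ at time $t$ precisely when the lower edge $e_i'$ has been accepted and the upper edge $e_i$ has not yet arrived; in particular, on the event that $e_i'$ arrives before $t_\infty$ (which is overwhelmingly likely since $t_\infty$ is uniform and the infinity edge is only one of $2n+1$ edges, but more to the point we only care about the window $[T,T+\ell]$), and $e_i'$ is accepted, claw $i$ is $(-\,A)$ until $e_i$ arrives. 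So the upper edge of a $(-\,A)$ claw to the left of $x$ arrives in $[T,T+\ell]$ as long as, for some $i\le x$, the lower edge $e_i'$ arrives in the sampling stage $[0,T]$ \emph{and} is accepted, and then $e_i$ arrives in $[T,T+\ell]$. Rather than track ``is accepted,'' I would use the cleaner sufficient condition: if $t(e_i')<T< t(e_i)<T+\ell$ then either $e_i'$ was accepted (if it ever was, it is still in $I$ since it is never kicked out before $e_i$ arrives — the lower edge is the heaviest of the claw so it stays in any $I_t$ that must span $V_t$), or $e_i'$ is a sample sitting in $I_t$; in either case, when $e_i$ arrives it completes a claw that is $(SA)$, $(AA)$, $(SS)$-or the relevant point is simply that the event ``for some $i\le x$, $t(e_i')<T$ and $t(e_i)\in[T,T+\ell]$'' is exactly the algorithm-independent event the lemma quantifies (the statement says ``the upper edge of a $(-\,A)$ claw,'' and the lower edge being a sample in $I_T$ makes the claw state $(S\,\cdot)$, so I should be careful here).

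Let me instead phrase the target event as: there exists $i\le x$ with $t(e_i')\le T$ and $t(e_i)\in[T,T+\ell]$, i.e.\ the lower edge is sampled and the upper edge arrives just after sampling. Conditioned on $T$ and on which edges arrived before $T$, the key observation is that for each claw $i$, the pair $(t(e_i),t(e_i'))$ is a pair of i.i.d.\ uniforms on $[0,1]$, and these pairs are independent across $i$. So $p_i := \Pr[t(e_i')\le T,\ t(e_i)\in[T,T+\ell]] = T\cdot \ell$ (unconditionally, integrating over the uniform $T$ would change this slightly, so I would condition on $T$ being in a typical range, say $T\ge 1/2$, which holds with probability exponentially close to $1$ since $T=\mathrm{Binom}(n,1/e)/n$ concentrates — actually the continuous model here just has $T$ a fixed threshold or a single uniform; either way $T\approx 1/e$ with high probability, so $T\ell \ge \ell/3$ say). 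The events across $i=1,\dots,x$ are independent, so the probability that \emph{none} of them occurs is $\prod_{i\le x}(1-p_i) \le (1-\ell/3)^{x} \le e^{-\ell x/3}$. To get the claimed bound $2^{-\ell^2 x/2}$ I would just note $e^{-\ell x/3}\le 2^{-\ell^2 x/2}$ for the relevant regime $\ell = n^{-0.1}$ (since $\ell^2 x/2 \le \ell x/3$ once $\ell \le 2/3$), so the stated bound is in fact weaker than what independence gives and follows immediately; if the paper really wants exactly $2^{-\ell^2 x/2}$ I suspect they are being deliberately lossy to keep a uniform-looking exponent, and I would match it by using the crude bound $p_i \ge \ell^2$ (valid since $T\ge \ell$ with high probability) together with $1-p_i\le e^{-p_i}\le 2^{-p_i/2}$, yielding $\prod(1-p_i)\le 2^{-\ell^2 x/2}$.

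The main obstacle, as flagged above, is being precise about what ``the upper edge of a $(-\,A)$ claw'' means and making sure the algorithm-independent lower bound on the failure probability is genuinely valid: I need the lower edge $e_i'$ to actually be \emph{accepted} (not merely sampled) for the claw to be $(-\,A)$ rather than $(S\,-)$ at the time $e_i$ arrives. This is where Observation~\ref{obs:aa-win} and Lemma~\ref{obs:blocker} must be invoked: on the good event there is no $(AA)$ yet, and the lower edges arriving during sampling are \emph{stored} but the accepted status depends on the algorithm. I believe the resolution is that the correct reading of the lemma is about lower edges that arrived \emph{after} $T$ and were accepted — i.e.\ genuine $(-\,A)$ claws — in which case the event is ``for some $i\le x$: $t(e_i')\in[T, t(e_i))$, $t(e_i)\in[T,T+\ell]$, and $e_i'$ was accepted.'' Since a lone lower edge completing a $(-\,-)$ claw is always accepted (Observation: the first edge of a claw must be accepted as $I_t$ spans $V_t$, and being the heavier of the two it enters the max-weight basis), the ``was accepted'' condition is automatic, and the event becomes the purely geometric ``$T\le t(e_i')<t(e_i)\le T+\ell$ for some $i\le x$,'' whose complement has probability $\prod_{i\le x}(1 - \ell^2/2) \le 2^{-\ell^2 x/2}$ by the independence of the claw-pairs and $\Pr[T\le t(e_i')<t(e_i)\le T+\ell\mid T]\ge \ell^2/2$. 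So the plan is: (1) identify the geometric event, (2) establish the per-claw probability lower bound $\ell^2/2$, (3) use independence across claws and $1-u\le 2^{-u/2}$ to conclude. The only real care is step (1) — making the reduction to a claw that is genuinely $(-\,A)$ and not $(S\,\cdot)$ or $(A\,\cdot)$, which I would handle exactly by the ``first-arriving edge of a fresh claw after $T$ is always accepted'' observation.
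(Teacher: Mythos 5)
Your final approach is correct, and it is in fact a touch cleaner than the paper's own proof. The paper proceeds in two steps: it first applies a Chernoff bound to argue that with all but exponentially small probability at least $x\ell^2/2$ of the leftmost $x$ claws have \emph{both} edges landing in $[T,T+\ell]$, then observes that each such claw independently has its lower edge first with probability $1/2$, yielding a $(1/2)^{x\ell^2/2}$ bound conditional on the Chernoff event. You instead compute the per-claw probability of the joint event $T\le t(e_i')<t(e_i)\le T+\ell$ directly as $\ell^2/2$, and use independence of the $x$ arrival-time pairs to get $(1-\ell^2/2)^x \le e^{-\ell^2 x/2}\le 2^{-\ell^2 x/2}$ in one shot. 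This single-step argument actually hits the stated bound $1-2^{-\ell^2 x/2}$ exactly, whereas the paper's two-step version, read literally, incurs an extra additive Chernoff error that it absorbs with the informal phrase ``count that small probability towards the algorithm's success.'' Both arguments hinge on the same key algorithm-independent fact, which you correctly identify and which the paper states in a footnote: the first-arriving edge of a previously untouched claw after $T$ is always accepted (it must be in $I_t$ for $I_t$ to span $V_t$, and it is not a sample, so by property (ii) it must be in $A_t$). Your earlier paragraphs about whether $T$ is random and whether the lower edge might be a sample are unnecessary detours, but you resolve them correctly in the last paragraph by restricting to $T<t(e_i')$, which is what the paper's proof does as well.
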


The next two probabilities are significantly more involved, as they consider decisions made by the algorithm. Note that the algorithm can decide \emph{adaptively} when to go unprotected, based on the current ratio of $(-\,A)$s (potential $(AA)$s) versus $(S\,-)$s (potential blockers) to the left of $x$. To this end, we will let the algorithm adaptively choose any (measurable) subset of $[T, T+\ell]$ to go unprotected, and let $y$ denote the total measure of this interval.\footnote{The algorithm does not need to commit to the value of $y$ in advance or choose it deterministically.} $y$ captures the aforementioned tradeoff: small $y$ means that the algorithm is likely to fail bullet one, while large $y$ means the algorithm is likely to fail bullet two. Lemma~\ref{lem:bullettwo} quantifies the cost of keeping $y$ small, lowerbounding the probability of the second part of the first bullet.

\begin{lemma}\label{lem:bullettwo} Conditioned on the upper edge of a $(-\,A)$ claw to the left of $x$ arriving between $T$ and $T+\ell$ (i.e. Lemma~\ref{lem:AA} happening), any greedy algorithm which goes unprotected for a total measure of $y$ during $[T, T+\ell]$ fails to accept a blocker to the left of $x$ with probability at least:
\begin{align*}
(1-2x\ell e^{-\frac{2x}{3}})(1-y)^{4x}.
\end{align*}
\end{lemma}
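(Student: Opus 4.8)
\textbf{Proof proposal for Lemma~\ref{lem:bullettwo}.}

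The plan is to condition on Lemma~\ref{lem:AA} holding and then argue that, in order for the greedy algorithm to ever hold a blocker to the left of $x$ during $[T,T+\ell]$, it must accept one; and each such acceptance requires a potential blocker $(S\,-)$ to the left of $x$ to have its lower edge $e'_i$ arrive during an \emph{unprotected} subinterval (this is exactly the content of Lemma~\ref{obs:unprotected} read contrapositively, together with Lemma~\ref{obs:blocker} identifying $(SA)$ as the blocker). So the event ``the algorithm fails to accept a blocker to the left of $x$ during $[T,T+\ell]$'' is implied by the event ``no lower edge of a left-of-$x$ potential blocker arrives during the unprotected set.'' I would first bound the probability that there are \emph{enough} potential blockers to the left of $x$ at time $T$: each of the $x$ leftmost claws is independently an $(S\,-)$-candidate with constant probability (both edges need the right sampling/arrival pattern), so a Chernoff bound gives that with probability at least $1-2x\ell e^{-2x/3}$ — the first factor in the claimed bound — at least, say, $\Omega(x)$ of the leftmost $x$ claws have their upper edge sampled and their lower edge not yet arrived by time $T$, i.e.\ are genuine potential blockers whose lower edges are each uniform in $[T,1]$. (The exact constant in the exponent and the exact count will come out of the routine Chernoff computation; the point is it matches the stated $e^{-2x/3}$.)

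Next, conditioned on having $m = \Theta(x)$ such potential blockers to the left of $x$, each of their lower edges is i.i.d.\ uniform on $[T,1]$, so lands in any fixed measurable subset of $[T,T+\ell]$ of measure $y$ with probability at most $y/(1-T) \le y$ (using $\ell$ small and $1-T$ bounded below by a constant — actually one should be slightly careful and absorb the $1-T$ factor, but since $y\le \ell = n^{-0.1}$ this only affects constants, and the $4x$ exponent leaves room). The subtlety is that the unprotected set is chosen \emph{adaptively}, so it is not a fixed set. To handle this I would use the standard trick: reveal the arrival times in increasing order and note that the algorithm's decision of whether $[t,t+dt]$ is unprotected depends only on the past; then a martingale / optional-stopping argument (or a direct union-over-prefixes argument) shows that the probability that \emph{some} left-of-$x$ potential blocker's lower edge falls in the adaptively chosen unprotected set of total measure $y$ is still at most $1-(1-y)^{m}$. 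With $m \le 4x$ (after the Chernoff step pins down the constant, or by simply using the crude bound $m\le x$ and a larger exponent — whichever makes the arithmetic match $4x$), the failure probability is at least $(1-y)^{4x}$, and multiplying by the Chernoff factor gives the lemma.

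The main obstacle is the adaptivity of the unprotected set: a naive ``fix the set, then it has measure $y$'' argument is invalid because the algorithm chooses the unprotected times based on observed arrivals. The clean way around this is to process arrival times sequentially and observe that ``this infinitesimal interval is unprotected'' is a predictable event, so the chance the next relevant lower edge lands in it integrates to at most $y$ over the whole window regardless of the adaptive choices — formally, one couples with a process that reveals the $m$ lower-edge arrival times one at a time and checks membership in the current unprotected set, and each check fails (lands inside) with conditional probability at most $y$ given the past, yielding the product $(1-y)^{m}$. A secondary, purely bookkeeping obstacle is making sure the two bad events in the lemma statement ("not enough potential blockers" vs.\ "potential blockers all miss the unprotected set") are combined correctly with an inclusion–exclusion / union bound so that the product form $(1-2x\ell e^{-2x/3})(1-y)^{4x}$ (rather than a sum) comes out; this works because conditioned on having $\ge 4x$ (or the right $\Theta(x)$) potential blockers, the $(1-y)^{4x}$ bound holds, and the two conditioning events compose multiplicatively.
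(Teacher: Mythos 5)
The central obstacle you correctly identified — adaptivity of the unprotected set — is precisely where your argument breaks down, and the martingale/coupling sketch you give does not resolve it.

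Your claim is that if you reveal the lower-edge arrival times one at a time and check membership in the current unprotected set, ``each check fails with conditional probability at most $y$ given the past.'' This is false. Conditioned on the history up to some time $T+\ell - R$ (so $R$ is the remaining length of the window), a left-of-$x$ $(S\,-)$ lower edge that has not yet arrived but will arrive in $[T,T+\ell]$ is uniformly distributed on the remaining interval of length $R$, so the conditional probability of its landing in an unprotected sub-window of length $dy$ is $dy/R$, \emph{not} $dy$. Since $R$ can be much smaller than $\ell$ (let alone $1-T$), the per-window catch probability can be much larger than the measure of that window, and a naive martingale argument only shows that the expected number caught is at most $\int \text{(unprotected density)} \cdot X'(R)/R \, dR$, which is unbounded without control on the ratio $X'(R)/R$. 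The algorithm can wait: if it observes that many nice edges have not yet arrived with little time left, it has a high conditional density and going unprotected then is very profitable. Your proposal has no ingredient to rule this out. (There is also a small sign error: the per-edge landing probability is $y/(1-T) \geq y$, not $\leq y$; but this is minor compared to the adaptivity gap.)

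The paper's proof supplies exactly the missing ingredient: Lemma~\ref{lem:ratio-concentration} shows that $\sup_R X'(R)/R \leq 4x$ with probability at least $1 - 2x\ell e^{-2x/3}$, where $X'(R)$ is the number of nice edges (left-of-$x$ lower edges that will arrive in $[T,T+\ell]$) that have not yet arrived by remaining time $R$. With this in hand, the probability an adaptively chosen window $[T+\ell-R,\,T+\ell-R+dy]$ catches no nice edge is at least $(1-dy/R)^{X'(R)} \geq (1-dy/R)^{4xR} \geq (1-dy)^{4x}$, which integrates cleanly to $(1-y)^{4x}$ regardless of how the algorithm distributes its unprotected measure. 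Your Chernoff step is also aimed at the wrong quantity: you use it to show there are \emph{enough} potential blockers (a lower bound on $m$), whereas the paper's $1-2x\ell e^{-2x/3}$ factor comes from a concentration bound on the \emph{ratio} $X'(R)/R$ uniformly over $R$. The matching final form is therefore coincidental; the reasoning behind it does not align, and without the ratio-concentration lemma your proof does not go through.
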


Finally, we analyze the second bullet, lower bounding the probability that the upper edge of a $(-\,A)$ claw (anywhere) arrives during a period when the algorithm is unprotected (while the precise form is complicated, recall the intuition that as $y$ gets larger, the probability of this particular bad event goes up, and $y$ is at most $\ell$):

\begin{lemma}\label{lem:bulletone} Any greedy algorithm which goes unprotected for a total measure of $y\geq n^{-0.4}/2$ during $[T, T+\ell]$ has the upper edge of a $(-\,A)$ claw arrive during an unprotected $t$ with probability at least:
\[1-\left(1-\frac{2y-n^{-0.4}}{2\ell}\right)^{\frac{n^{0.6}(4\ell-n^{-0.4})}{32\ell^2}}.\]
\end{lemma}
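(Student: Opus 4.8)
\textbf{Proof proposal for Lemma~\ref{lem:bulletone}.}

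The plan is to lower-bound the probability that some upper edge of a $(-\,A)$ claw arrives during the unprotected portion of $[T,T+\ell]$, purely by counting how many distinct $(-\,A)$ claws are ``available'' to fire and how much unprotected time they see. First I would fix the algorithm's (adaptively chosen) unprotected set $U \subseteq [T,T+\ell]$ of measure $y$, and condition on the arrival times of all edges; the only remaining randomness we exploit is which edges are the upper edges of $(-\,A)$ claws and whether their arrival time lands in $U$. Recall a $(-\,A)$ claw at the relevant time is one whose lower edge $e'_i$ has already been accepted (hence arrived and was accepted after $T$) while the upper edge $e_i$ has not yet arrived; by Lemma~\ref{obs:blocker}, if such an upper edge arrives while the algorithm is unprotected it \emph{must} be accepted, creating an $(AA)$ claw. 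So it suffices to show that, with the stated probability, at least one upper edge of a currently-$(-\,A)$ claw arrives at a time in $U$.

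The key counting step: during $[T,T+\ell]$ the algorithm accepts lower edges $e'_i$ to form blockers/potential blockers, but each accepted lower edge whose partner upper edge has not yet arrived is a live $(-\,A)$ claw whose upper edge is still to come, uniformly in $[t(e'_i),1]$, and in particular has probability roughly $\ell/(1-T) = \Theta(\ell)$ of arriving within the next $\ell$ window. More carefully: to sustain unprotected measure $y$, Lemma~\ref{obs:unprotected} and the structure of blockers force the algorithm, each time it wants to reposition its blocker, to first destroy the old one and operate unprotected; but to make progress at all it must accept lower edges of $(S\,-)$ claws (future blockers), and each such acceptance that is not immediately ``completed'' into a blocker leaves behind a live $(-\,A)$. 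Quantitatively I would argue that sustaining $y - n^{-0.4}/2$ of ``useful'' unprotected time (the $n^{-0.4}$ slack absorbs boundary/rounding losses) requires accepting at least $m := \frac{n^{0.6}(4\ell - n^{-0.4})}{32\ell^2}$ lower edges that become live $(-\,A)$ claws, each of whose upper edge independently has probability at least $\frac{2y-n^{-0.4}}{2\ell}$ of landing in $U$ (the factor comes from each live claw's upper edge being uniform on an interval of length $\le \ell$ within which the unprotected portion has measure $\ge y - n^{-0.4}/2$). Since these upper-edge arrival events are negatively associated (they are disjoint-support indicators given the ordering), the probability that \emph{none} of the $m$ claws fires into $U$ is at most $\left(1-\frac{2y-n^{-0.4}}{2\ell}\right)^{m}$, which is exactly the complement of the claimed bound.

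The execution then reduces to two routine-but-careful pieces: (a) the combinatorial accounting that unprotected measure can only be generated by accepting lower edges that leave residual $(-\,A)$ claws — here I would track, as a potential function, the number of accepted-but-unpartnered lower edges minus completed blockers, and show each unit of unprotected time ``costs'' one such residual claw up to the $n^{-0.4}$ slack and the $\Theta(1/\ell)$ conversion rate implicit in the exponent's denominator $32\ell^2$; and (b) justifying the conditional independence/negative-association of the $m$ upper-edge-in-$U$ events so the product bound is valid. The main obstacle I anticipate is step (a): the algorithm is adaptive and may interleave accepting lower edges, completing blockers, and going unprotected in a complicated order, so I must argue the counting bound holds \emph{pathwise} regardless of this schedule — essentially an amortized/charging argument showing that no matter how the algorithm spends its unprotected budget $y$, it is forced to have built up $\Omega(y/\ell^2 \cdot n^{0.6})$ many live $(-\,A)$ claws to the left of wherever it is operating, each an independent chance of self-sabotage. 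Everything after that is a one-line plug into $\Pr[\text{no firing}] \le (1-p)^m$ and taking complements.
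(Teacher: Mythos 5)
Your high-level plan — count how many live $(-\,A)$ claws the unprotected time exposes and multiply independent per-claw escape probabilities — is the right shape, but the mechanism you propose for producing and counting those claws is backwards, and that is precisely the step you flag as the obstacle.

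You posit that unprotected time is a budget the algorithm ``pays for'' by accepting lower edges, and that ``each such acceptance that is not immediately completed into a blocker leaves behind a live $(-\,A)$.'' Neither half holds. If the algorithm accepts the lower edge of an $(S\,-)$ claw, the claw becomes $(S\,A)$, which \emph{is} a blocker; there is no ``not immediately completed'' state. And the algorithm does not need to accept anything to be unprotected — at time $T$ there are no accepted edges at all, hence no $(S\,A)$ claw, so the algorithm starts out unprotected and becomes protected only once a suitable lower edge arrives. So the charging argument you want (``each unit of unprotected time costs one residual $(-\,A)$'') has no pathwise version: the algorithm can spend arbitrarily much unprotected time while accepting nothing. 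The live $(-\,A)$ claws in fact arise for a completely different reason, independent of the algorithm's choices: whenever the lower edge of a $(-\,-)$ claw is the first of its pair to arrive after $T$, property~(\ref{prop3}) forces it into $I_t$ (it must span $V_t$ and it was not a sample), so it is accepted. These forced acceptances, driven solely by the random arrival order, are the source of the claws, and the algorithm has no lever to suppress them.

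The paper's proof therefore gets the exponent $m$ not from an amortized/charging argument but from concentration of the arrival process itself. It restricts attention to claws with $T < t(e'_i) < t(e_i) < T+\ell$ (each claw satisfies this independently with probability $\ell^2/2$, so there are about $\ell^2 n/4$ of them with high probability), defines $A'(R)$ as the number of those that are ``pending'' at time $T+\ell-R$ (lower edge arrived, upper edge not yet), and shows via Chernoff plus a union bound over the arrival events that $A'(R)$ stays near its mean $n(\ell-R)R/\ell^2$ uniformly over $R$. Each infinitesimal unprotected window $[T+\ell-R,\,T+\ell-R+dy]$ then fails with probability at least $1-(1-dy/R)^{A'(R)}$. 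To control the endpoints it gives the algorithm the boundary intervals $[T,T+\varepsilon]$ and $[T+\ell-\varepsilon,T+\ell]$ for free (whence the $-n^{-0.4}$ slack on $y$), uses $R \in [\varepsilon,\ell-\varepsilon]$ to lower-bound the exponent by $\frac{n\varepsilon(\ell-\varepsilon)}{2\ell^2}$, and finally invokes concavity in $dy$ to argue the adversary (the algorithm) minimizes failure probability with a single consecutive unprotected block of length $y-2\varepsilon$; setting $\varepsilon = n^{-0.4}/4$ gives exactly the stated bound. So your ``step (b)'' (the product bound) survives in a different guise — the paper integrates a per-window bound and then uses concavity rather than negative association — but your ``step (a)'' would have to be replaced entirely by the concentration-of-pending-claws argument, since no valid charging scheme exists.
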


Finally, we just need to combine the three bounds in Lemmas~\ref{lem:AA},~\ref{lem:bullettwo},~\ref{lem:bulletone}. We will choose a value of $\ell$ and $x$ for the analysis, and then the algorithm (knowing $x$) can adaptively allocate the unprotected intervals within $[T,T+\ell]$ for a total measure of $y$.
More formally, we let $f(y) = (1-2x\ell e^{-\frac{2x}{3}})(1-y)^{4x}\left(1-(\frac{1}{2})^{\ell^2x/2}\right)$ denote the lowerbound on failure probability derived in Lemma~\ref{lem:bullettwo}. Furthermore, we let
\[ g(y) =
    \begin{cases} 
      1-\left(1-\frac{2y-n^{-0.4}}{2\ell}\right)^{\frac{n^{0.6}(4\ell-n^{-0.4})}{32\ell^2}}, & y \geq \frac{n^{-0.4}}{2}; \\
       0, & y< \frac{n^{-0.4}}{2}. 
   \end{cases}\]
The first case follows from Lemma~\ref{lem:bulletone}, and setting $g$ to $0$ elsewhere only strengthens our lower bound.
Overall, the algorithm fails with probability at least $\min_y\{\max\{ f(y),g(y) \} \}$. The next lemma sets parameters to lower bound this expression.

\begin{lemma}\label{lem:combine} When $x = n^{0.3}$ and $\ell=n^{-0.1}$, we have
\[\lim_{n \rightarrow \infty}\min_{y \in [0,\ell]} \max \{f(y),g(y)\} = 1.\]
\end{lemma}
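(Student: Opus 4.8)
\textbf{Proof proposal for Lemma~\ref{lem:combine}.}

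The plan is to analyze the min-max by splitting on the value of $y$ into three regimes dictated by the two thresholds $n^{-0.4}/2$ and some intermediate value, and showing that in every regime either $f(y)\to 1$ or $g(y)\to 1$. First I would substitute $x=n^{0.3}$ and $\ell=n^{-0.1}$ everywhere so the exponents become concrete powers of $n$: the factor $2x\ell e^{-2x/3} = 2n^{0.2}e^{-\frac{2}{3}n^{0.3}}\to 0$, and $(1/2)^{\ell^2 x/2} = (1/2)^{n^{0.1}/2}\to 0$, so $f(y)$ simplifies asymptotically to $(1+o(1))(1-y)^{4x} = (1+o(1))(1-y)^{4n^{0.3}}$. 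Likewise the exponent in $g$ becomes $\frac{n^{0.6}(4n^{-0.1}-n^{-0.4})}{32n^{-0.2}} = \Theta(n^{0.7})$, and the base is $1-\frac{2y-n^{-0.4}}{2n^{-0.1}}$.

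The key case split: (a) If $y \le \tfrac{1}{8}n^{-0.3}$, then $(1-y)^{4n^{0.3}} \ge (1-\tfrac{1}{8}n^{-0.3})^{4n^{0.3}} \to e^{-1/2} > 0.6$; this is bounded below by a constant but not yet by $1$, so I instead push the threshold lower. Take $y \le \omega(n)\cdot n^{-0.3}$ for any $\omega(n)\to 0$ — actually the cleanest choice is to let the crossover point be where $f$ stops being close to $1$: $f(y)\to 1$ precisely when $y = o(n^{-0.3})$, since $(1-y)^{4n^{0.3}} = \exp(-4n^{0.3}y(1+o(1)))\to 1$ iff $n^{0.3}y\to 0$. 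So set the crossover at $y^* := n^{-0.35}$ (so that $n^{0.3}y^* = n^{-0.05}\to 0$, giving $f\to 1$ for all $y\le y^*$). (b) For $y \ge y^* = n^{-0.35}$: since $n^{-0.35} \ge n^{-0.4}/2$ for large $n$, we are in the first branch of $g$. Here $\frac{2y-n^{-0.4}}{2\ell} \ge \frac{2n^{-0.35}-n^{-0.4}}{2n^{-0.1}} = \frac{2n^{-0.25}-n^{-0.3}}{2} = (1+o(1))n^{-0.25}$, so the base of $g$ is at most $1-(1+o(1))n^{-0.25}$, and $g(y) \ge 1 - \bigl(1-(1+o(1))n^{-0.25}\bigr)^{\Theta(n^{0.7})} = 1 - \exp(-\Theta(n^{0.45})) \to 1$. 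Since $g$ is monotonically increasing in $y$ on this branch, this bound holds for all $y\ge y^*$. Combining (a) and (b): for every $y\in[0,\ell]$, $\max\{f(y),g(y)\} \ge 1-o(1)$ uniformly, hence the minimum over $y$ tends to $1$.

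The main obstacle I anticipate is purely bookkeeping: making sure the chosen crossover $y^*$ simultaneously sits in the regime where $f\to 1$ (requires $n^{0.3}y^*\to 0$) and in the regime where $g\to 1$ (requires $y^* \ge n^{-0.4}/2$ comfortably, and that the resulting base-to-a-power in $g$ decays, which needs $n^{0.25}\cdot n^{0.7} = n^{0.95}\to\infty$ with the right sign). The exponents $0.3, 0.1, 0.4, 0.6, 0.7$ are tuned so that there \emph{is} a nonempty window $[n^{-0.4}/2,\, o(n^{-0.3}))$ for $y^*$, and one should double-check the arithmetic of the exponent $\frac{n^{0.6}(4\ell - n^{-0.4})}{32\ell^2}$ with $\ell = n^{-0.1}$: numerator $\approx 4n^{0.5}$, denominator $32n^{-0.2}$, ratio $\approx \tfrac18 n^{0.7}$, which indeed diverges polynomially, so the product $n^{-0.25}\cdot n^{0.7}$ in the exponent of the $\exp$ diverges and $g\to 1$. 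A secondary subtlety is justifying that $g$ is nondecreasing in $y$ on its first branch (immediate since the base $1-\frac{2y-n^{-0.4}}{2\ell}$ decreases in $y$ and the exponent is a fixed positive constant), and that $f$ is nonincreasing in $y$ (immediate since $(1-y)^{4x}$ decreases), so the worst case for $\max\{f,g\}$ is exactly at the crossover, confirming the two-regime analysis is tight enough.
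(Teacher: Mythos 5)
Your proposal is correct and follows essentially the same approach as the paper: both arguments rely on $f$ being decreasing and $g$ being increasing in $y$, reduce the min-max to the value of $\min\{f(\cdot),g(\cdot)\}$ at a single fixed point, and then show that both quantities tend to $1$ at that point by direct substitution of the chosen parameters. The only difference is cosmetic — you place the crossover at $y^* = n^{-0.35}$ and phrase the reduction as a two-case split, whereas the paper evaluates at $y_0 = n^{-0.4}$ after first deriving the one-line inequality $\min\{f(y_0),g(y_0)\}\le \min_y\max\{f(y),g(y)\}$; both choices lie in the same nonempty window $[n^{-0.4}/2,\,o(n^{-0.3}))$ you identify, and your exponent bookkeeping ($\Theta(n^{0.7})$ exponent, $\Theta(n^{0.45})$ in the decaying exponential) matches the paper's.
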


The proof of Theorem~\ref{thm:greedy} now follows from the four lemmas of this section.


\section{Randomized Partition Algorithms\label{ch:partition}}
This section is devoted to a class of algorithms based on partition matroids.
These are generalizations of an algorithm by Korula and Pal~\cite{korula2009algorithms} for the secretary problem on graphic matroids. We begin by briefly describing their result in Section~\ref{sec:korula-pal}, proceeding to show in Section~\ref{sec:negative} and Section~\ref{sec:negativeRandomized} that this algorithm and several natural generalizations of it fail to provide good probability-competitive performance.

\subsection{The Korula-Pal~\cite{korula2009algorithms} Algorithm for Graphic Matroids}\label{sec:korula-pal}
Recall that for any undirected connected graph $G=(E,V)$, one can define a graphic matroid $\mathcal{M}=(E,\mathcal{I})$, where the ground set $E$ is exactly the edge set of $G$ and the independent set collection $\mathcal{I}$ is the set of all acyclic subgraphs of $G$. 
The bases of a graphic matroid are exactly the set of spanning trees\footnote{Or spanning forests in the case of disconnected graphs.} of the graph. 

The following algorithm due to Korula and Pal~\cite{korula2009algorithms} achieves a utility-competitive ratio of $2e$ on graphic matroids:
\begin{algorithm} \caption{Korula-Pal~\cite{korula2009algorithms} Algorithm} \label{alg:KorulaPal}
\begin{enumerate}\itemsep0em
    \item Before any edges arrive, partition the edges of the graph offline as follows:
    \begin{itemize} \itemsep0em
        \item Pick a random ordering $\sigma: V\rightarrow [n]$ of nodes.
        \item For each edge $e = (u,v)$, with $\sigma(u) < \sigma(v)$, put $e$ in $E_u$ (otherwise, put it in $E_v$).
    \end{itemize}
    \item Run Dynkin's algorithm on all the $E_v$s in parallel, and  output the union of all selected edges.
\end{enumerate}
\end{algorithm}
Figure~\ref{fig:korula-pal} shows an instance of such a random partitions.

\begin{figure}
\includegraphics[scale=.7]{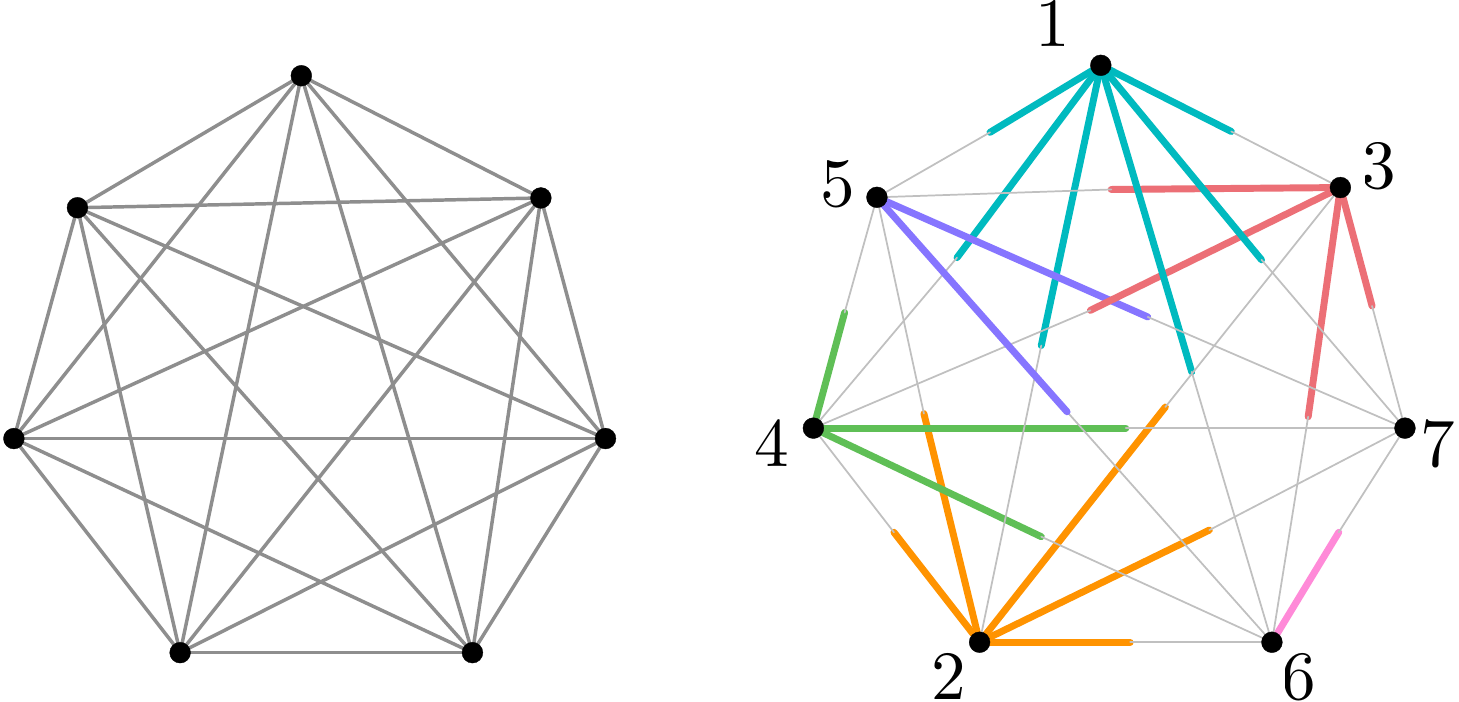}
\centering
\caption{It shows  a partition of edges of a complete graph where edges in the same part are drawn with the same color. The numbers next to each vertex denote the  $\sigma$  values.}\label{fig:korula-pal}
\end{figure}

\begin{theorem}[\cite{korula2009algorithms}] \label{thm:KorulaPal}
Algorithm~\ref{alg:KorulaPal} is $1/(2e)$-utility-competitive.
\end{theorem}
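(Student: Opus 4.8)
The plan is to show that each edge $e$ of the max-weight basis (i.e., each edge in a fixed maximum spanning tree/forest of $G$) is selected by Algorithm~\ref{alg:KorulaPal} with probability at least $1/(2e)$, which by linearity of expectation immediately yields the claimed $1/(2e)$-utility-competitive ratio. Actually, since the statement asks only for utility-competitiveness, it suffices to lower-bound $\E[w(e) \cdot \indic{e \in \ALG}]$ by $\frac{1}{2e} w(e)$ for each $e \in \text{MWB}(\mathcal{M})$; I would aim for the cleaner per-edge probability bound $\Prob[e \in \ALG] \geq \frac{1}{2e}$ when $e \in \text{MWB}(\mathcal{M})$ but note that the original Korula--Pal argument does not get per-edge bounds (that is precisely what Soto et al.\ later improved), so I should instead bound the contribution of $e$ in expectation conditioned on the partition.

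The key structural fact is the following. Fix the max-weight spanning forest $T^* = \text{MWB}(\mathcal{M})$ and fix an edge $e = (u,v) \in T^*$. Condition on the random ordering $\sigma$ of vertices; say WLOG $\sigma(u) < \sigma(v)$, so $e$ lands in the part $E_v$. Within $E_v$, Dynkin's algorithm is run and it selects \emph{the single heaviest element of $E_v$ that is not rejected during sampling}, succeeding with probability $1/e$ over the random arrival order restricted to $E_v$. So I want to argue: (a) with probability at least $1/2$ over $\sigma$, the edge $e$ is the heaviest edge in its own part $E_v$; and (b) conditioned on that, Dynkin's algorithm picks it with probability at least $1/e$, independently of (a). Point (b) is just the guarantee of Dynkin's theorem (the heaviest element of a set is picked w.p.\ $1/e$ in uniformly random arrival order) together with the observation that the arrival order within $E_v$ is uniformly random conditioned on the partition. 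Point (a) is the crux: I would show that the edges of $E_v$ heavier than $e$ are exactly the edges $(v,z)$ with $\sigma(v) < \sigma(z)$ and $w((v,z)) > w(e)$; by the matroid/greedy optimality of $T^*$, any such heavier edge $(v,z)$ incident to $v$ must itself be in $T^*$ unless it creates a cycle with heavier edges --- here I would invoke the standard cycle property of maximum spanning forests: $e$ is the lightest edge on the unique cycle in $T^* + f$ for any $f \notin T^*$, hence $f$ heavier than $e$ cannot close a cycle through $e$. The cleanest route is to consider the component of $v$ in the forest of edges heavier than $e$; a uniformly random $\sigma$ makes $v$ the $\sigma$-minimum of that component with probability $\geq 1/2$ (in fact $1/2$ when we only compare against the single other endpoint structure --- I need to be careful to argue that the relevant competing edges are only those from $v$ to vertices later in $\sigma$, and that $u$ being earlier than $v$ is the event giving the clean $1/2$).

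The main obstacle I anticipate is pinning down exactly which edges can be heavier than $e$ within $E_v$ and showing the probability that none of them beats $e$ is at least $1/2$. The naive concern is that $v$ could have many incident edges heavier than $e$, all of which would land in $E_v$ if $\sigma(v)$ is small, killing the $1/2$ bound. The resolution (this is the clever part of Korula--Pal) is that we only need $e$ to be selected when $\sigma(u) < \sigma(v)$ (so $e \in E_v$, probability $1/2$), and we additionally need that among the edges incident to $v$ and landing in $E_v$ --- namely edges $(v,z)$ with $\sigma(z) > \sigma(v)$ --- none has weight exceeding $w(e)$. One shows via the maximum-spanning-tree cut/cycle property that any edge $(v,z)$ with $w((v,z)) > w(e)$ has the property that $z$ lies in the same "heavy component" as $v$, and then a symmetry/charging argument over the random placement of $v$ within that component gives the bound. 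I would need to set up this charging carefully --- one natural framing: root the analysis at $e$, look at the maximal subtree/component reachable from $v$ using tree-edges heavier than $e$, and argue $v$ is $\sigma$-last among the relevant vertices with probability at least $1/2$ conditioned on $\sigma(u) < \sigma(v)$. Combining, $\Prob[e \in \ALG] \geq \frac12 \cdot \frac12 \cdot \frac1e$ would be too weak; the actual argument must save a factor, getting $\frac12 \cdot \frac1e$, by noting the "$e \in E_v$" event and the "$e$ heaviest in $E_v$" event can be bundled into a single probability-$\frac12$ event over $\sigma$. I would then finish by summing $w(e)\Prob[e\in\ALG]$ over $e \in T^*$ and comparing to $w(\OPT) = w(T^*)$.

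\begin{remark}
If the per-edge conditioning is technically awkward, an alternative is to bound things in two stages: first show $\E[w(\ALG) \mid \sigma] \geq \frac1e \sum_{v} \max_{e \in E_v} w(e)$ by applying Dynkin within each part, then show $\E_\sigma[\sum_v \max_{e\in E_v} w(e)] \geq \frac12 w(T^*)$ by the charging argument above applied to each $e \in T^*$ (each such $e$ is the max of its part with probability $\geq 1/2$). Multiplying the two stages gives $\E[w(\ALG)] \geq \frac{1}{2e} w(\OPT)$.
\end{remark}
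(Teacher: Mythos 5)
Your Remark at the end identifies exactly the right two-stage decomposition --- it is the paper's proof: first $\E[w(\ALG)\mid\sigma]\geq\frac1e\sum_v\max_{i\in E_v}w(i)$ by Dynkin's guarantee, then $\E_\sigma[\sum_v\max_{i\in E_v}w(i)]\geq\frac12 w(T^*)$. However, the parenthetical justification you give for the second step, ``each such $e$ is the max of its part with probability $\geq 1/2$,'' is \emph{false}. Consider a tree edge $e=(u,v)$ whose two endpoints each have $k$ other incident tree edges, all heavier than $e$ (a light edge joining two heavy stars). Then $e$ is the maximum of the part it lands in only when all $k$ heavier $v$-incident edges also fail to land in $E_v$ (and symmetrically for $u$), which drives the probability to $O(1/k)$. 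The correct argument does not need $e$ to be the maximum of its part: peel leaves off $T^*$ iteratively to assign each tree edge $i_v$ to a \emph{unique} endpoint $v$; with probability $1/2$ over $\sigma$, $i_v\in E_v$, in which case $\max_{i\in E_v}w(i)\geq w(i_v)$. Summing this lower-bound witness over $v$ already gives $\E_\sigma[\sum_v\max_{i\in E_v}w(i)]\geq\frac12\sum_{i\in T^*}w(i)$, regardless of whether a heavier edge also lands in $E_v$.

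The main body of your proposal pursues a per-edge probability bound ($\Prob[e\in\ALG]\geq\frac1{2e}$) via the MST cycle property and a charging over ``heavy components.'' This route cannot succeed: the Korula--Pal algorithm is provably \emph{not} constant probability-competitive (indeed this is the content of Theorem~\ref{thm:partition-neg} in this paper, and the text immediately after the proof of Theorem~\ref{thm:KorulaPal} flags exactly this point). You correctly flag early on that per-edge bounds are not available, but then spend most of the argument trying to rescue them; the place where you write ``the actual argument must save a factor, getting $\frac12\cdot\frac1e$, by bundling the two events into a single probability-$\frac12$ event'' is precisely the step that has no valid instantiation. You should abandon the per-edge route entirely and instead flesh out the Remark with the leaf-peeling assignment.
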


For completeness, we present the (short) proof of this theorem in Appendix~\ref{sec:ProofKorulaPal}, and highlight why the proof gives utility-competitiveness but not probability-competitiveness.


\subsection{Defining Randomized Partition Algorithms}
The algorithm by Korula and Pal~\cite{korula2009algorithms} was phrased in the language of graphs. Let us try to generalize it in a language applicable to all matroids. Before seeing any weights, their algorithm restricts itself (randomly) to accepting only a subset of independent sets. More specifically, the algorithm will restrict its attention to the disjoint union\footnote{This disjointness is why we refer to these generalizations as algorithms based on ``partition matroids.''} of solutions to simpler subproblems. The algorithm must ensure that for all feasible solutions to the subproblems, their union is a feasible solution to the main problem. In the case of the Korula-Pal algorithm, the smaller subproblems are instances of 1-uniform matroid secretary problems. (Several other algorithms for the Matroid Secretary Problem use similar high-level techniques, where the ``simpler'' matroids are not 1-uniform~\cite{FSZ-SODA15, huynh2016matroid, lachish2014log, soto2011simple}, and this idea is also used for the related prophet inequality~\cite{feldman2016online}.)

More concretely, we say that a partition is \emph{valid} if the union of what is accepted by the instances of Dynkin's algorithm is an independent set (regardless of the weights and order of arrivals). Now we consider the following class of algorithms based on partition matroids:

\begin{algorithm}\caption{Randomized Partition} \label{alg:randomizedPartition}
\begin{enumerate}\itemsep0em
    \item Before looking at any weights, (perhaps randomly) validly partition the elements into parts $S_i$.
    \item Within each part, run Dynkin's algorithm, and output the union of the selected elements.
\end{enumerate}
\end{algorithm}
\noindent One can ask whether any algorithm in this framework can be constant-\emph{probability}-competitive. Theorem~\ref{thm:partition-neg} shows that the answer is `no'.

\subsection{Warm Up: Deterministic Partitions}\label{sec:negative}
We show as a warm-up that no fixed \emph{deterministic} partition can achieve a constant probability-competitive ratio.
 Consider the following algorithm based on partition matroids.

\begin{algorithm} \caption{Deterministic Partition}\label{algo:detPartition}
\begin{enumerate}\itemsep0em
\item Deterministically fix a valid partition $\{S_1,\ldots,S_k\}$ of all elements of the ground set offline. 
\item Run Dynkin's algorithm on each part online, and output the union of the selected edges.
\end{enumerate}
\end{algorithm}

\begin{prop}\label{prop:deterministic}
Algorithm~\ref{algo:detPartition} is not $\alpha$-probability-competitive for $\alpha>\frac{2\sqrt{2}}{\sqrt{n}}$.
\end{prop}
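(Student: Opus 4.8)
The plan is to construct a single graphic matroid instance where, no matter how the edges are deterministically partitioned into parts, some edge of the max-weight basis is selected with low probability. The natural candidate is a graph that is simultaneously ``path-like'' (so that membership in the max-weight basis is forced) and ``cycle-like'' (so that Dynkin within a part has many competitors). Concretely, I would take $G$ to be a cycle on roughly $\sqrt{n}$ groups of parallel edges, or more simply a single cycle $C_m$ with $m = \Theta(\sqrt{n})$ together with enough extra structure to make the ground set have size $n$; assign weights so that the max-weight basis is a specific spanning tree (delete the lightest cycle edge), and pick the weights on the cycle so that every cycle edge lies in the MWB except one. The key observation is the pigeonhole step: since there are only $k$ parts but the cycle has $m$ edges, if $k$ is small some part contains at least two cycle edges; if $k$ is large then many parts are singletons or small and the adversary can instead exploit the random \emph{arrival order} within a large part.

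The main steps, in order, would be: (1) Fix the graph $G$ and a weight function $w$ so that $\mathrm{MWB}(\mathcal{M})$ is a fixed spanning tree $B$ containing all but one edge of the distinguished $m$-cycle $C$, and so that $w$ can be chosen generically to avoid ties. (2) Given the deterministic partition $\{S_1,\dots,S_k\}$, argue by a counting/pigeonhole argument that there must exist an edge $i \in B \cap C$ whose part $S_j$ contains ``many'' elements that beat $i$ in the sense relevant to Dynkin --- either $|S_j|$ is large, or $S_j$ contains another edge of $C$ heavier than $i$ and positioned to block it. (3) Bound $\Prob[i \in \ALG]$: Dynkin's algorithm on part $S_j$ selects $i$ only if $i$ arrives after the sampling cutoff and is the heaviest element of $S_j$ seen so far among those in $S_j$; for this to happen, essentially $i$ must be among the first few arrivals of $S_j$ or beat all heavier competitors, and a direct computation shows this probability is $O(1/|S_j|)$ when $i$ is not the heaviest in its part, or $O(1/\sqrt{n})$ after optimizing the trade-off between $k$ (number of parts) and part sizes. (4) Conclude that $\min_{i \in \OPT} \Prob[i \in \ALG] = O(1/\sqrt{n})$, which with the explicit constant gives the claimed $\frac{2\sqrt{2}}{\sqrt{n}}$ threshold.

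I would carry out step (3) carefully using the standard fact that Dynkin's algorithm selects a \emph{non-maximum} element $i$ of a part $P$ only in the (low-probability) event that, among the elements of $P$ that are at least as heavy as $i$, the element $i$ itself arrives first \emph{and} after the sampling stage --- more precisely, one needs that no heavier element of $P$ arrives between the end of sampling and $i$'s arrival, while $i$ itself survives; since $i$ is not the heaviest in $P$, this forces $i$ to precede all the (at least one, but in the bad case many) heavier elements of $P$, giving probability at most $1/2$ when there is one heavier competitor and decaying like $1/(\text{number of heavier competitors}+1)$ in general. Balancing the two regimes --- ``some part has $\geq 2$ cycle edges, so one has a heavier competitor'' versus ``make the cycle long enough that even with one competitor the relevant probability times the number of parts forces a weak edge'' --- is what yields the $\sqrt{n}$ scaling; the right setting is a cycle of length $\Theta(\sqrt n)$ padded with parallel copies so $|V(\mathcal{M})| = n$.

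\textbf{Main obstacle.} The delicate point is pinning down exactly which edge of the max-weight basis is hurt, and making the pigeonhole quantitative with the claimed constant $2\sqrt2$. A deterministic partition could try to put every cycle edge in its own part (so no edge has a ``heavier competitor'' from the cycle), which defeats the naive pigeonhole; the fix is that such a partition must then use $\geq m$ parts, but the \emph{other} (non-cycle) padding edges still have to go somewhere, and more importantly one must arrange the graph so that \emph{being forced to be isolated in a part does not help}: e.g., use parallel edges so that the part containing a given MWB cycle-edge $i$ necessarily also contains its parallel copies, which are heavier and block it. Getting this construction and the constant exactly right --- rather than just an $O(1/\sqrt n)$ bound --- is the step I expect to require the most care.
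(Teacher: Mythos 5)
Your proposal follows a genuinely different route from the paper, and as written it has a gap that I do not think is easily repaired.

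The paper's argument is very short and avoids all of the delicate bookkeeping you anticipate. It works on the complete graph $K_n$, and crucially it chooses the weight function \emph{after} seeing the deterministic partition $\{S_1,\dots,S_k\}$. Since a deterministic partition is fixed independently of $w$, the adversary is always free to do this. Validity forces $k\le n-1$ (else picking one edge from each of $n$ parts yields $n$ edges on $n$ vertices, hence a cycle shattered by the partition), so by pigeonhole some part $S_i$ has at least $\binom{n}{2}/(n-1)=n/2$ edges. The subgraph induced by $S_i$ therefore has at least $\sqrt{n/2}$ non-isolated vertices and hence a spanning forest $F$ with at least $\tfrac{1}{2}\sqrt{n/2}$ edges. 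The adversary sets $w=1$ on $F$ and $w=0$ elsewhere; $F$ is the unique MWB, but every edge of $F$ lies in the single part $S_i$, from which Dynkin's algorithm returns at most one element. Hence the selection probabilities of the $|F|\ge\tfrac12\sqrt{n/2}$ MWB edges sum to at most $1$, so some MWB edge is selected with probability at most $2\sqrt2/\sqrt n$. No analysis of Dynkin's behavior on non-maximum elements is needed.

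There are two concrete problems with your construction. First, the padding-with-parallel-edges idea is internally inconsistent: you want the part containing an MWB cycle-edge $i$ to also contain heavier parallel copies of $i$ ``which block it,'' but a heavier parallel copy of $i$ would then be in the MWB instead of $i$ (in a graphic matroid, two parallel edges form a $2$-circuit, so at most one can be in any basis, and the max-weight one wins). So the element you want to be ``blocked'' would simply drop out of $\OPT$ and the probability-competitive guarantee says nothing about it. Second, and more structurally, by fixing $G$ and $w$ in step (1) before confronting the partition, you give up the crucial degree of freedom that makes the paper's proof work: a deterministic partition can be chosen with full knowledge of the cycle structure, and indeed (as you yourself note) can isolate every cycle edge in its own part, defeating your pigeonhole. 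The paper sidesteps this entirely by using $K_n$, where the sheer number of edges forces a large part no matter what, and then tailoring $w$ to that part. If you want to salvage your cycle-based approach you would at minimum need to either (i) choose $w$ after seeing the partition, or (ii) plant so many overlapping cycles that pigeonhole is forced for every valid partition --- but at that point you have essentially rediscovered $K_n$. Finally, the fine-grained analysis of when Dynkin's selects a non-maximum element is unnecessary: the union bound $\sum_{e\in F}\Prob[e\in\ALG]\le 1$ within a single part already does all the work, including delivering the constant $2\sqrt2$.
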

\begin{proof}
We will give a counter-example for graphic matroids. Fix a partition $\{S_1,\ldots,S_k\}$ of the edges of the complete graph $K_n$. We will construct a weight function $w$ for which the algorithm performs  poorly.

Without loss of generality, assume all the $S_i$ are non-empty. We claim that $k\leq  n-1$. Otherwise, there would be a set $A$ of at least $n$ edges, such that each edge comes from a distinct part. By validity of the partition, this set of edges must be acyclic. However, any set of $n$ edges on $n$ vertices must have a cycle, which is a contradiction.

Since there are ${n\choose 2}$ edges distributed into at most $n-1$ parts, there must be some part $S_i$ that contains at least $n/2$ edges. Consider the graph $G_i$ defined by the edges in $S_i$ and the vertices they touch. Then $G$ has at least $\sqrt{\frac{n}{2}}$ vertices, each of which has degree at least 1. Therefore, $G$ has a spanning forest with at least $\sqrt{\frac{n}{2}}/2$ edges. 

Consider a weight function $w$ that assigns a weight of 1 to all edges in this forest, and a weight of 0 to every other edge. Clearly, the forest is the max-weight basis of the graphic matroid. However, Dynkin's can only choose one out of at least $\sqrt{\frac{n}{2}}/2$ edges from the max-weight basis, so each edge cannot be chosen with probability higher than $\frac{2\sqrt{2}}{\sqrt{n}}$.
\end{proof}

\subsection{Randomized Partitions}\label{sec:negativeRandomized}
In this section, we will rule out all algorithms based on partition matroids as candidates for achieving a constant probability-competitive ratio for the matroid secretary problem. As opposed to  the algorithm ruled out in Proposition~\ref{prop:deterministic} (which used any deterministic partition),  these algorithms are allowed to use \emph{any randomized} partition.

For the algorithm to always output a feasible solution, any partition it uses must be valid. Recall that a \emph{valid} partition is one for which the union of what is accepted by the instances of Dynkin's algorithm is always independent. We say a distribution over partitions is \emph{valid} if every partition in its support is valid.

Without loss of generality, we can assume the input graph is always complete. Otherwise, one can consider a modified weight-function that assigns a weight of zero to every edge that is not present. Since the algorithm cannot see the weights of the edges in advance, it will have to choose a partition of the complete graph at the start.

\begin{theorem}\label{thm:partition-neg}
Any algorithm that draws a partition from a valid distribution $\mathcal{D}$  in Algorithm~\ref{alg:randomizedPartition} is not $\alpha$-probability-competitive for any $\alpha=\omega(n^{-1/8})$.
\end{theorem}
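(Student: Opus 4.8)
\textbf{Proof proposal for Theorem~\ref{thm:partition-neg}.}

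The plan is to build a single graphic-matroid instance on which no valid randomized partition can be constant-probability-competitive, by exhibiting a ``hard'' edge $e^*$ that is forced to be in the max-weight basis but is unlikely to be selected by Dynkin's algorithm no matter how the edges are partitioned. The first step is to understand the structure a valid partition must have: if two edges $e, f$ lie in the same part $S_i$, then running Dynkin's on $S_i$ can select only one of them, so any pair in the same part is ``safe'' from a feasibility standpoint; conversely, if a set of edges $F$ has the property that every spanning forest of $F$ can be extended to an independent set by adding one chosen representative per part, then... more usefully, the constraint is that the union over parts of (one edge per part) must always be acyclic, which — taking the extreme case where a single part contributes all its edges minus one — forces each part $S_i$ to induce a subgraph with at most one cycle more than a forest; in fact I would first prove the clean structural fact that \emph{each part $S_i$ must be such that $|S_i|$ minus the number of vertices it touches is bounded}, generalizing the $k \le n-1$ counting argument from Proposition~\ref{prop:deterministic}. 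The key quantitative consequence I want: for the target edge $e^*$, either $e^*$ sits in a large part (so Dynkin's picks it with probability $O(1/|S_i|)$), or $e^*$ sits in a small part but then the \emph{other} edges of that part, together with validity, force many edges of the max-weight basis into few parts elsewhere.

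Next I would design the weight function adaptively around the \emph{realized} partition — but since the algorithm commits to the partition before seeing weights, I instead design a \emph{distribution} over weight functions (or, more cleanly, a single weight function and a symmetry/averaging argument over a rich family of isomorphic instances) so that for every valid partition, \emph{some} instance in the family is bad. Concretely, on $K_n$, pick a uniformly random perfect matching $M$ (or random near-spanning structure) of size $\Theta(n)$, give every edge of $M$ weight $1$ and all other edges weight $0$; the max-weight basis is exactly $M$ (it is a forest), and for \emph{every} $i \in M$ we need $\Prob[i \in \ALG] \ge \alpha$. For a fixed partition $\mathcal{P}$, an edge $i \in M$ is selected only if it is the heaviest in its part \emph{and} wins Dynkin's lottery within its part restricted to positive-weight edges. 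The crux is a counting/averaging argument: sum over $i \in M$ the probability $i$ is picked; this is at most $\sum_i 1/(\text{number of weight-1 edges in } i\text{'s part})$, and validity forces the weight-1 edges ($M$) to be spread over few parts — because the weight-1 edges that are each ``the representative'' of their part would union to a large sub-forest, which is fine, but if too many parts each contain $\ge 2$ edges of $M$ the total edge count exceeds what validity allows relative to the touched vertices. Randomizing $M$ lets me say: for any fixed partition, a random matching $M$ lands many of its edges into the same (few) large parts with good probability, driving the average selection probability down to $O(n^{-\delta})$ for the claimed $\delta$; then Markov/averaging over $M$ and over $i \in M$ yields an instance and an edge with $\Prob[i\in\ALG] = O(n^{-1/8})$.

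The main obstacle — and where I expect most of the work — is the \emph{interaction between validity and the randomized partition}: the adversary must pick the hard instance without knowing the realized partition, and the partition is itself random, so I cannot just point to ``the big part.'' The resolution I envision is a two-stage probabilistic argument: (1) a structural lemma that any valid partition of $K_n$ has its parts' sizes controlled (each part roughly a near-forest on the vertices it spans, so the number of ``heavy'' parts of size $\ge s$ is $O(n/s)$ or similar), proved by the same cycle-counting idea as Proposition~\ref{prop:deterministic} pushed further; and (2) a randomized-matching argument showing that against \emph{any} fixed distribution $\mathcal{D}$ over valid partitions, $\E_{\mathcal{D}}\E_{M}\E_{i\in M}[\Prob[i\in\ALG]] = O(n^{-1/8})$ — at which point fixing the best partition for the algorithm still leaves, by averaging, a choice of $M$ and $i\in M$ with selection probability $O(n^{-1/8})$, and hardwiring that $M$ as the (deterministic) weight function completes the proof. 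The exponent $1/8$ presumably arises from balancing the size threshold $s$ in the structural lemma against the concentration loss in the matching argument and the two nested Markov steps; I would carry the parameters symbolically and optimize at the end. Filling in step (1) rigorously — precisely characterizing which subgraphs can appear as parts of a valid partition — is the delicate part, since validity is a condition over \emph{all} weightings and arrival orders simultaneously, not a simple acyclicity condition on any single set.
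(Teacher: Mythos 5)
Your choice of hard instance is where the argument breaks. A random perfect matching $M$ cannot work: its edges are pairwise vertex-disjoint, and validity puts essentially no constraint on how vertex-disjoint edges are distributed among parts. Concretely, the Korula--Pal partition (Algorithm~\ref{alg:KorulaPal}), which is valid, assigns each edge $\{u,v\}$ with $\sigma(u)<\sigma(v)$ to part $E_u$; since no two edges of a matching share a vertex, every edge of $M$ lands in a distinct part, so Dynkin's selects each one with probability $1/e$. The averaging inequality you propose then reads $\sum_{i\in M}\Prob[i\in\ALG]\leq |M|$ and yields no contradiction. The related structural claim that each part must be a near-forest (``$|S_i|$ minus the number of vertices it touches is bounded'') is also false: a single part consisting of a clique on $k$ vertices is valid---every triangle inside has all three edges in the same part---yet has $\binom{k}{2}$ edges on $k$ vertices.

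The paper's broom is designed exactly to sidestep both issues. Its two hub vertices $u,v$ are what let Lemma~\ref{thm:valid-char} (validity $\Leftrightarrow$ no rainbow triangle) bite: each $\OPT$ edge $\{u,x\}$ or $\{v,y\}$ forms a triangle with the handle $\{u,v\}$, so whenever the handle is ``high-degree'' within its own part---meaning many edges incident to $u$ or $v$ share its part---a super-constant number of $\OPT$ edges get crammed into a single part from which Dynkin's selects at most one. The quantitative heart is Proposition~\ref{prop:recurrence} and Lemma~\ref{lem:induction}, which show any valid partition of $K_n$ has at most $n^{3/2+\varepsilon}$ low-degree edges. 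This is a fundamentally different statement from a per-part sparsity bound: it controls the number of edges sitting in a locally sparse corner of their own part, while allowing some parts to be huge and dense. Repairing your proposal would require an instance with shared hub vertices (so validity has something to act on) and a counting lemma of this low-degree-edge flavor rather than a bound on part structure.
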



The high-level plan in the proof of Theorem~\ref{thm:partition-neg} is to plant a random \emph{broom}, illustrated in Figure~\ref{fig:broom}, and show that with high probability, its handle is not accepted.  We will refer to the lone neutral edge $\{u,w\}$ connecting the two stars as the \emph{handle} of the broom. Note that the edges of non-zero weight in the broom form an acyclic subgraph and are therefore the unique max-weight basis of this graphic matroid. 

\begin{figure}
    \centering
    \includegraphics[scale=0.75]{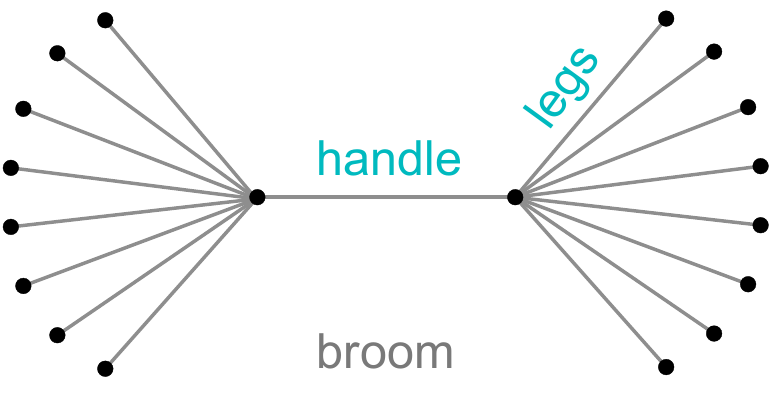}
    \caption{Two stars connected by an edge form a broom. We call the bridge between the two stars the handle of the broom, and we  the other edges of the broom as its legs. }
    \label{fig:broom}
\end{figure}

Before proving this theorem, we characterize valid partitions.


\subsubsection{Characterizing Valid Partitions}
In this section we give a few characterizations of what valid partitions look like, which serve to provide intuition into why validity is a strong enough condition that prevents partition-based algorithms from probability-competitiveness.

Similarly to the previous section, we define a \emph{valid} partition to be one where the union of what is accepted by the instances of Dynkin's algorithm is always an independent set, even for adversarial weights and arrival orders. We first give several equivalent descriptions of what valid partitions should look like in the case of graphic matroids, which provides certain structural properties enforced by validity. It will be later used to prove our  Theorem~\ref{thm:partition-neg}.

\begin{lemma}\label{thm:valid-char}
Let $\{S_1, \ldots, S_k\}$ partition the edges of a complete graph $K_n$, and let $\textrm{part}(e)$ denote the $S_i$ containing edge $e$. The following are equivalent:
\begin{enumerate}[(a)] \itemsep0em
    \item \textbf{Matroid condition:} $\{S_1, \ldots, S_k\}$ is valid.
    \item \textbf{Graph condition (i):} Every \emph{cycle} has at least two edges in the same part.
    \item  \textbf{Graph condition (ii):} Every \emph{triangle} has at least two edges in the same part.
\IGNORE{  
  \item \textbf{Counting condition:} Let $N_\angle$ denote the number of (not necessarily induced) paths of length 3 (2 edges) within the same part (which we will call \emph{claws}), {i.e.},
    $$N_\angle=\Bigl\lvert\{\{x,y,z\}\subset V: \textrm{part}(\{x,y\})=\textrm{part}(\{y,z\})\not=\textrm{part}(\{x,z\})\}\Bigr\rvert,$$
    and let $N_\Delta$ denote the number of triangles within the same part, {i.e.},
    $$N_\Delta=\Bigl\lvert\{\{x,y,z\}\subset V: \textrm{part}(\{x,y\})=\textrm{part}(\{y,z\})=\textrm{part}(\{x,z\})\}\Bigr\rvert.$$
    Then the following equality holds:
    $$ N_\angle - 2 N_\Delta = {n \choose 3}\quad\quad \emph{[Triangle Equality]}$$
    \item \textbf{Algorithmic condition:} $\{S_1, \ldots, S_k\}$ can be constructed from $\{V\}$ by repeated application of the following operation (called a \emph{validity-preserving} operation): Given a partition $T = \{T_i\}$ of the edges of $K_n$, construct a new partition of the edges of $K_n$ by further splitting some part $T_i$ into two parts $T'_i$ and $T''_i$ in a way that no 3 vertices that originally had exactly two edges among them in $T_i$ end up having both of those edges either in $T'_i$ or in $T''_i$ (In other words, the operation does not break any claws).
    }
\end{enumerate}
\end{lemma}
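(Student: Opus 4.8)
The plan is to prove the chain of implications $(a) \Rightarrow (b) \Rightarrow (c) \Rightarrow (a)$, which suffices for equivalence of the three conditions. Two of these implications are essentially immediate, and the real content is in closing the loop with $(c) \Rightarrow (a)$ (equivalently $(c) \Rightarrow (b)$, since $(b) \Rightarrow (a)$ is the easy direction).

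First I would handle $(a) \Rightarrow (b)$ by contrapositive. Suppose some cycle $C = (x_0, x_1, \dots, x_{m-1}, x_0)$ has all its edges in distinct parts. I want to exhibit weights and an arrival order making the union of Dynkin's selections contain all of $C$, hence a dependent set. Assign every edge of $C$ weight $1$ and every other edge weight $0$. Within each part $S_i$, at most one edge of $C$ lives, so it is the unique maximum-weight edge of its part; feeding it to Dynkin's algorithm after an appropriate (or even arbitrary, by the $1/e$ argument — but for a deterministic bad event I would just route the sampling stage so that each $C$-edge is the first heavy arrival in its part and is therefore accepted) arrival order guarantees it is selected. More carefully, it suffices to note that for each part containing a weight-$1$ edge there is positive probability Dynkin's accepts it, and these events are over disjoint randomness, so with positive probability all $m$ edges of $C$ are accepted simultaneously — contradicting validity, which demands independence with probability $1$. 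That gives $(a) \Rightarrow (b)$. The implication $(b) \Rightarrow (c)$ is trivial since a triangle is a cycle.

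The main work is $(c) \Rightarrow (a)$. Assume every triangle has two edges in a common part; I must show that for every weight function and every arrival order, the union $U$ of Dynkin's selections is acyclic. Suppose not, and take a shortest cycle $C$ contained in $U$, say of length $m$. If $m = 3$ then by $(c)$ two of its edges share a part, but Dynkin's algorithm selects at most one edge from each part, so both cannot be in $U$ — contradiction. If $m \geq 4$, I would use a chord/shortcutting argument: consider any three consecutive vertices $x_{j-1}, x_j, x_{j+1}$ on $C$ and the chord $\{x_{j-1}, x_{j+1}\}$ (present since the graph is complete $K_n$). Condition $(c)$ applied to the triangle on $\{x_{j-1}, x_j, x_{j+1}\}$ says two of its three edges share a part. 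The two cycle-edges $\{x_{j-1},x_j\}$, $\{x_j,x_{j+1}\}$ cannot share a part (both are in $U$, one edge per part), so the chord shares a part with one of the two cycle edges, say with $\{x_{j-1}, x_j\}$. Now I want to replace $\{x_{j-1},x_j\}$ by the chord to get a strictly shorter cycle still ``witnessed'' — but the chord need not be in $U$, so a naive replacement does not stay inside $U$. The fix is to argue at the level of parts rather than of $U$: I will show more directly that $(c)$ implies $(b)$ by induction on cycle length. Given a cycle $C$ of length $m \geq 4$ with all edges in distinct parts, pick consecutive $x_{j-1}, x_j, x_{j+1}$; by $(c)$ the chord $\{x_{j-1},x_{j+1}\}$ shares a part with one of the two incident cycle edges — but those two edges are in distinct parts, so the chord equals the part of exactly one of them, hence the chord's part is not among the parts of the remaining $m-2$ cycle edges, and the chord together with $C \setminus \{x_j\}$ forms a cycle of length $m-1$ with all edges in distinct parts. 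By induction this bottoms out at a triangle with all three edges in distinct parts, contradicting $(c)$. This establishes $(c) \Rightarrow (b)$, and since $(b) \Rightarrow (a)$ (an acyclic union is independent; conversely any potential dependence in the union of Dynkin outputs is a cycle, ruled out by $(b)$ — one more short argument needed here), the cycle is closed.

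I expect the main obstacle to be the $(c) \Rightarrow (a)$ direction done cleanly: the subtlety is that validity must hold for \emph{all} weightings and orders, so one cannot reason about a fixed run of the algorithm, and the shortcutting induction must be phrased purely combinatorially on parts (as above) rather than on the selected set $U$. A secondary care point is the direction $(b)\Rightarrow(a)$: I need that \emph{any} dependent union of single-part-selections contains a cycle all of whose edges come from distinct parts, which is immediate because Dynkin's algorithm picks at most one edge per part, so any cycle inside the union automatically has one edge per part; then $(b)$ forbids it. I would also remark that completeness of $K_n$ is used only to guarantee the chord exists, and that the statement (and the commented-out counting and algorithmic conditions) extend with minor modifications, but I would not belabor that.
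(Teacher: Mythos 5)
Your proposal is correct and follows essentially the same route as the paper: the easy equivalence between validity and the ``no shattered cycle'' condition, the trivial direction that no shattered cycles implies no shattered triangles, and then the chord/shortcutting argument (minimal-counterexample in the paper, induction on cycle length in yours — the same thing) to close the loop from triangles back to general cycles. Your treatment of $(a)\Leftrightarrow(b)$ is more explicit about what ``valid'' means probabilistically, where the paper simply calls this direction trivial, but the content is identical.
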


\begin{proof}
$(a)\Leftrightarrow(b)$ holds trivially. If $\{S_1, \ldots, S_k\}$ is valid, it should not be possible to construct a cycle by taking each edge from distinct parts. Conversely, if all cycles have at least two edges from the same part, no algorithm can accept a cyclic subgraph.

$(b)\Leftrightarrow(c)$: $(b) \implies (c)$ holds by definition, so it suffices to show $(c)\implies (b)$. Let $\mathcal{S}=\{S_1, \ldots, S_k\}$ be a partition satisfying (c). We say that a cycle $C$ is \emph{shattered} by $\mathcal{S}$ if each of its edges belongs to a different part in $\mathcal{S}$. Then (c) implies that $\mathcal{S}$ shatters no cycles of size 3. Assume for contradiction, that $\mathcal{S}$ shatters some cycle, and let $C=\{e_1,\cdots,e_\ell\}$ be the smallest cycle shattered by $\mathcal{S}$, where $\ell > 3$. Without loss of generality, assume $e_i\in S_i$ for $i\in[\ell]$. Let $e_1=\{u,v\}$ and $e_2=\{v,w\}$. Consider the chord $\{u,w\}$. By (c), the cycle on vertices $\{u,v,w\}$ is not shattered by $\mathcal{S}$, so we must have either $\{u,w\}\in S_1$ or $\{u,w\}\in S_2$. In either case, the cycle $C=\{\{u,w\}, e_3\cdots,e_\ell\}$ is shattered by $\mathcal{S}$ but has size smaller than $\ell$, which is a contradiction.
\IGNORE{
$(3)\Leftrightarrow(4)$: The right-hand-side of the triangle equality counts the number of triples of vertices in the graph. In a valid partition, every triple of vertices is counted exactly once in the left-hand-side as well. More specifically, let $\{u,v,w\}$ be a such a triple. Then (3) implies that either 1. all edges between $\{u,v,w\}$ belong to the same part and thus contributes 3 to $N_\angle$ and 1 to $N_\Delta$, or 2. exactly of the edges between $\{u,v,w\}$ belong to the same part, contributing 1 to $N_\angle$. In both cases, assuming (3), each triple contributes exactly 1 to the left-hand-side.
\\
Conversely, if the triangle equality holds for some partition, there cannot exist any shattered triangles. Otherwise, the shattered triangle contributes 0 to both $N_\angle$ and $N_\Delta$, adding up to a total of zero contribution to the left-hand-side. On the other hand, by the case study in the proof of the reverse direction, no triple of vertices can contribute more than 1 to the left-hand-side, so if a shattered triangle exists, the left-hand-side has to be strictly smaller than the right-hand-side, which is a contradiction.

$(5)\Rightarrow(4)$: is the easier direction. Starting with a partition of the edges of $K_n$ into 1 part, the triangle equality holds trivially, as there are 3 times as many claws as there are triangles. It suffices to show that the validity-preserving operations indeed maintain the triangle equality. Each validity-preserving operation can only turn triangles in one part into a claw and a single edge in two different parts. This reduces $N_\Delta$ by 1 and decreases $N_\angle$ by 2, resulting a net change of 0 to the left-hand-side.
\\
$(1)\Rightarrow(5)$: If $\mathcal{S}=\{S_1, \ldots, S_k\}$ be a valid partition, it can be constructed from the partition $\{V\}$ by removing each part $S_i$ one at a time. It suffices to show each such operation is validity-preserving. Note that validity is monotone in the following sense: Combining multiple parts of a valid partition still gives a valid partition. In particular, all partitions $\mathcal{S}^{(j)}=\{S_1, \ldots, S_j, \bigcup_{i=j+1}^k S_i\}$ for $j\in\{0,\ldots,k\}$ are valid since $\mathcal{S}$ was valid (Note that $\mathcal{S}^{(0)}=\{V\}$ and $\mathcal{S}^{(k)}=\mathcal{S}$). By the equivalence of (1) and (4) established above, all the $\mathcal{S}^{(j)}$ must also satisfy the triangle equality. Finally, we show using the triangle equality that going from any $\mathcal{S}^{(j)}$ to $\mathcal{S}^{(j+1)}$ must be validity-preserving. Suppose otherwise; {i.e.}, a claw is destroyed as part of some such operation. This reduces the left-hand-side of triangle equality by 1. However, the only other possible change as a result of one separating operation keeps the left-hand-side unchanged: More specifically, turning a triangle in one part into a claw and a single edge decreases $N_\Delta$ by 1 and $N_\angle$ by 2, resulting a net change of 0 to the left-hand-side. (Note that performing a single separating operation \emph{cannot} turn a triangle within one part into a shattered triangle with in three different parts). Therefore, since triangle equality must hold for both $\mathcal{S}^{(j)}$ and $\mathcal{S}^{(j+1)}$, no claws can get destroyed in the process.
}
\end{proof}


\subsubsection{Proof of Theorem~\ref{thm:partition-neg}}

We  provide a counterexample in the case of graphic matroids using the broom. 
Consider a partition $\mathcal{S}=\{S_1,\ldots,S_k\}$ of the edges of the complete graph. We say an edge $e\in S_i$ is ``high-degree'' if the sum of the degrees of its endpoints within the same part $S_i$ is large. More concretely, we define the part-$i$ degree of a vertex $v$ as $\deg_i(v)=\lvert\{e=\{a,b\}\in S_i : v\in\{a,b\}\}\rvert$. Given an edge $e=\{a,b\}$ in part $S_i$, its degree is given by $\deg(e) = \deg_i(a) + \deg_i(b) - 1$, which intuitively means that we are counting all the incident edges in that part and the edge itself. An edge $e$ is said to be \emph{high-degree} if $\deg(e)\geq C$ for some $C$ that we will choose later.

We will show that a $1-o(1)$ fraction of the edges are high-degree for super-constant $C$. Therefore, an adversary can plant a random broom  by
assigning weights according to the following distribution: Pick a random edge $\{u,v\}$ in the graph, and randomly partition the vertices $V\setminus \{u,v\}$ into two parts $X$ and $Y$ of equal size (we assume $\lvert V \rvert$ is even). Assign a weight of 1 to every edge $\{u,x\}$ and $\{v,y\}$ for all $x\in X, y\in Y$, and a weight of zero to everything else. We will show that no matter what partition an algorithm chooses, the random edge $\{u,v\}$ will have a high-degree with high probability. The algorithm must therefore choose at most one edge from at least $C$ elements of $\OPT$. Hence, it cannot be better than $1/C$-probability-competitive.

It remains to show that a $1-o(1)$ fraction of the edges are high-degree for some super-constant $C$ in any valid partition $\mathcal{S}$ of the edges of the complete graph. A partition of the edges of $K_n$ can be thought of as a coloring of its adjacency matrix $A \in M^{n\times n}$ (ignoring diagonal entries) in the obvious way ({i.e.}, assign a different color to each $S_i$, and the color $\textrm{part}(e)$ to the entry of $A$ corresponding to $e$). In this notation, an entry of $A$ is low-degree if there are fewer than $C$ entries of the same color in its row or column. Note that by Lemma~\ref{thm:valid-char},  a partition is valid iff every triangle has at least two edges in the same part. In the matrix language, a partition is valid iff for every three row indices $u,v,w$, at least two of $A(u,v)$, $A(u,w)$ and $A(v,w)$ are the same color. We will show using this interpretation of feasibility that each row and column must mostly consist of high-degree entries. More specifically, we will fix a vertex $v$, and consider any other two vertices $u$ and $w$. 

\begin{proposition}\label{prop:recurrence} Let $C \leq (n-1)/2$ and let $T(n)$ be the maximum possible number of low-degree edges in any valid coloring of the complete graph on $n$ vertices. For any vertex $v$, let $x_i$ denote the number of edges adjacent to $v$ in partition $i$. Then:
\begin{align*}
    T(n) \leq  \max_{\vec{x}\in \mathbb{N}_{\geq 0}^{n-1}, \sum_i x_i = n, } \min\bigg\{ &\sum_i T(x_i) + 2C(n-1)~~,~~ T(n-1) + 2(n-1-\max_i\{x_i\}) \bigg\}.
\end{align*}

\end{proposition}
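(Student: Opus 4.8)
The plan is to establish the recurrence by a careful case analysis that charges low-degree edges to one of two ``budgets'': a \emph{recursive} budget obtained by deleting a vertex, and a \emph{local} budget counting low-degree edges incident to a cleverly chosen vertex. First I would fix a vertex $v$ and let $x_i$ be the number of edges of color $i$ incident to $v$, so $\sum_i x_i = n-1$ (the statement writes $\sum_i x_i = n$, but the argument is the same up to this off-by-one, which I will track carefully). The key structural input is the triangle characterization from Lemma~\ref{thm:valid-char}: in the adjacency-matrix coloring, for any three vertices at least two of the three edges among them share a color. Applied with $v$ fixed and two other vertices $u,w$, this says that if $A(v,u)$ and $A(v,w)$ are different colors, then $A(u,w)$ must match one of them. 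This is the engine that forces the coloring to be ``locally rigid'' around $v$.

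The first branch of the $\min$ comes from restricting attention to edges not incident to $v$: deleting $v$ leaves a valid coloring of $K_{n-1}$, so its low-degree edges number at most $T(n-1)$; but an edge $e=\{u,w\}$ that was high-degree in $K_n$ might become low-degree after removing $v$ only if $v$ contributed to its degree, i.e.\ $v$ is adjacent (in $e$'s color) to $u$ or $w$. The number of such ``newly low-degree'' edges is controlled by how many edges incident to $v$ lie in the color of $e$; summing, the loss is at most $2(n-1-\max_i x_i)$ — the factor counts edges incident to $v$ that are \emph{not} in the most popular color at $v$, doubled because $e$ has two endpoints. Adding the low-degree edges actually incident to $v$ (which is also bounded by this same quantity, since an edge $\{v,u\}$ of color $i$ with $x_i \geq C$ is automatically high-degree) gives the bound $T(n-1) + 2(n-1-\max_i x_i)$.

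For the second branch, I would partition the non-$v$ edges by the color-pair they induce at $v$: group all vertices $u$ by the color $c(u) := A(v,u)$. For two vertices $u,w$ with $c(u)\neq c(w)$, the triangle condition pins $A(u,w)$ to a color containing many edges at $v$ (namely $c(u)$ or $c(w)$, one of which has $\geq$ some count), and one shows such cross-group edges are high-degree \emph{except} for at most $\sum_i \min(x_i, 2C)$-type slack, contributing the $2C(n-1)$ term (roughly: each color class at $v$ can "absorb" only about $2C$ problematic incident edges before its members are forced high-degree). Within a single color group of size $x_i$, the induced coloring is again valid on $x_i+1$ vertices (or $x_i$ — another place to track the index carefully), yielding the recursive term $\sum_i T(x_i)$. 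Taking the max over feasible degree vectors $\vec{x}$ and the min of the two analyses gives exactly the claimed inequality.

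\textbf{Main obstacle.} I expect the genuinely delicate step to be the second branch: precisely arguing that cross-group edges (those $\{u,w\}$ with $A(v,u)\neq A(v,w)$) are high-degree up to an additive $O(C \cdot n)$ error. One must carefully decide, for each such edge, which of the two colors $c(u)$ or $c(w)$ it inherits, argue that this color already has large degree at the relevant endpoint unless that endpoint lies in a small ($<2C$) color class at $v$, and sum these exceptions without double-counting across the roughly $n$ choices of $u,w$. Getting the constant to be exactly $2C(n-1)$ (rather than some larger multiple) requires being economical: noting that an exceptional edge must have at least one endpoint in a $v$-color-class of size $<C$, of which there are few, and bounding the total incidences. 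The off-by-one in $\sum_i x_i$ and the precise form $n-1-\max_i x_i$ versus $n - \max_i x_i$ will also need care, but those are bookkeeping rather than conceptual.
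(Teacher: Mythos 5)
Your high-level plan matches the paper's: fix $v$, split the non-$v$ edges into within-group and cross-group (by $v$-color), and get one branch of the $\min$ from each way of accounting. But both branches, as you describe them, contain real gaps.

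For your first branch (the $T(n-1)$ term), the concern about edges that are ``high-degree in $K_n$ but become low-degree after removing $v$'' is a red herring, and the accounting built on it doesn't work. What you actually need is the opposite implication: any edge low-degree in the $K_n$ coloring is \emph{still} low-degree in the induced $K_{n-1}$ coloring (degrees can only decrease under vertex deletion), so $T(n-1)$ already covers \emph{all} low-degree non-$v$ edges and no correction term is needed there. The extra $2(n-1-\max_i x_i)$ is entirely for edges incident to $v$, and the factor $2$ is \emph{not} about two endpoints. An edge $\{v,u\}$ of color $i$ with $x_i\geq C$ is automatically high-degree, so low-degree $v$-incident edges live only in colors with $x_i<C$. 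When $\max_i x_i>C$ this count is at most $n-1-\max_i x_i$; when $\max_i x_i\leq C$ it could be all $n-1$, and you need the hypothesis $C\leq (n-1)/2$ to conclude $n-1\leq 2(n-1-\max_i x_i)$ in that case. Your sketch omits this case split, which is where the $2$ comes from.

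For your second branch (the $\sum_i T(x_i)+2C(n-1)$ term), the per-color-class charging is not correct. You claim ``each color class at $v$ can absorb only about $2C$ problematic incident edges before its members are forced high-degree,'' leading to a $\sum_i\min(x_i,2C)$-type bound. But a color class is not a unit: a class $X_i$ with $x_i$ vertices can support on the order of $C\cdot x_i$ low-degree cross-group edges (each vertex can independently accumulate nearly $C$ edges of its $v$-color before those edges become high-degree). The correct charging is \emph{per vertex}: for each cross-group edge $\{u,w\}$, validity forces it to share a color with $(v,u)$ or $(v,w)$; charge the matching non-$v$ endpoint. Once a vertex $u$ is charged $C$ times, there are $C+1$ edges at $u$ of the color of $(v,u)$, making those charging edges high-degree. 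Hence each of the $n-1$ vertices is charged at most $C$ times by low-degree edges, giving $C(n-1)$ cross-group low-degree edges, plus a trivial $n-1\leq C(n-1)$ for $v$-incident edges, for a total of $2C(n-1)$. Your proposed color-class bound neither implies this nor is it justified by the triangle condition. You correctly identified this step as the delicate one; the fix is to switch the charging unit from color classes to vertices.

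One small bookkeeping point you raised: yes, with $x_i$ defined as the number of edges at $v$ in part $i$, the true constraint is $\sum_i x_i = n-1$, not $n$. The statement (and Lemma~\ref{lem:induction}) write $n$, which is harmless since it only enlarges the feasible set over which the max is taken, but your instinct to track this carefully is right.
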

\begin{proof} There are two steps: for any $\vec{x}$, we show that both the left term and the right term are always upper bounds (and therefore their minimum is a valid upper bound too).

Intuitively, the left term is better when $\max_i\{x_i\}$ is not too large. To see that the left term is always an upper bound, consider the following cases. Below, let $X_i$ denote the set of nodes $z$ such that $(z,v)$ is in partition $i$ (and therefore $x_i:=|X_i|$). 
\begin{itemize}
\item First, consider each $X_i$, and consider the induced subgraph on just these $x_i$ nodes. The number of low-degree edges \emph{just counting those between two nodes in $X_i$} is at most $T(x_i)$, by definition of $T(\cdot)$. Clearly, a node must be low-degree in the induced subgraph to possibly be low-degree in the full graph. This means there are at most $\sum_i T(x_i)$ low-degree edges between two nodes in the same $X_i$.
\item Next, consider an edge between two nodes $x,y$ both $\neq v$ which are \emph{not} in the same $X_i$. This means that the edges $(v,x)$ and $(v,y)$ are \emph{not} colored the same, and therefore the edge $(x,y)$ \emph{must} share a color with one of them for $A$ to be valid. Whichever edge shares its color, we will charge its non-$v$ endpoint (e.g. if $(x,y)$ shares a color with $(v,x)$, we charge $x$). Observe that once a vertex is charged $C$ times, this means there are $C+1$ edges adjacent to it which share the color of $(v,x)$. This means that none of these edges are low-degree. Therefore, an edge can be low-degree only if its non-$v$ endpoint is charged at most $C$ times, and therefore there can be at most $C(n-1)$ such low-degree edges.
\item Finally, consider an edge adjacent to $v$. We will lazily upper bound the number of low-degree edges by just the total number of edges, $n-1$, and further upper bound it by $C(n-1)$ for cleanliness of the expression.
\end{itemize}
This establishes the left term, which holds for any $\vec{x}$. Now we establish the right term. Intuitively, the right term is a better bound whenever $\max_i\{x_i\}$ is large. Let $j:=\arg\max_i\{x_i\}$. If $x_i > C$, then there can be no low-degree edges adjacent to $v$ in $X_1$. Therefore, there are at most $(n-1-x_j)$ low-degree edges adjacent to $v$. On the subgraph induced by the $n - 1$ nodes other than $v$, there are clearly at most $T(n-1)$ low-degree edges by definition of $T(\cdot)$, and again any edge which is low-degree in the full graph must be low-degree in every induced subgraph. On the other hand, if $x_j \leq C$, then perhaps all edges adjacent to $v$ are low-degree, and we can only use this technique to give an upper bound of $T(n-1)+n-1$. In both cases, our bound is at most $T(n-1)+2(n-1-\max_i\{x_i\})$ as long as $C \leq (n-1)/2$. 
\end{proof}

We will show inductively in Lemma~\ref{lem:induction} that $T(n) \leq  b\cdot C\cdot n^{1+a}$, where $a$ is a constant, and $b$ and $C$ are super-constant in $n$, as long as a few conditions hold. Corollary~\ref{lem:instance} lists values that satisfy these conditions, concluding that for all $0<\varepsilon<1/2$, there are valid assignments to the variables that achieve $T(n)\leq  n^{3/2+\varepsilon}$.\footnote{It can be shown that this is in fact tight.} Furthermore, Corollary~\ref{lem:instance} ensures that $C$ is super-constant (and in particular polynomial in $n$), implying that with probability at least $\frac{{n\choose 2}-n^{3/2+\varepsilon}}{{n\choose 2}}$, the handle of the randomly planted broom will be high-degree for super-constant $C$. It can therefore only be selected with a sub-constant probability.

\begin{lemma}\label{lem:induction}
Consider the following recurrence when $C \leq (n-1)/2$.
\begin{align*}
    T(n) \leq  \max_{\vec{x}\in \mathbb{N}_{\geq 0}^{n-1}, \sum_i x_i = n, } \min\bigg\{ &\sum_i T(x_i) + 2C(n-1)~~,~~ T(n-1) + 2(n-1-\max_i\{x_i\}) \bigg\}.
\end{align*}
with a base case of $T(n)=n(n-1)/2$ when $(n-1)/2<C$. For all $N$, $T(N) \leq  b\cdot C\cdot N^{1+a}$, as long as
\begin{enumerate}\itemsep0em
    \item $a\in(0,1)$ is a constant;
    \item $C$ is a super-constant function of $N$;
    \item $b$ is a super-constant function of $N$ such that $b(N)\geq 1$ for all $N$;
    \item  for all $n<N$, the following is satisfied: $\frac{2(n-1)}{ab n^{1+a}}<\frac{(1+a)bC}{2n^{1-a}}.$
\end{enumerate}
\end{lemma}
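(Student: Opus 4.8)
\textbf{Proof plan for Lemma~\ref{lem:induction}.}
The plan is to prove $T(N) \le b C N^{1+a}$ by strong induction on $N$. The base case is immediate: when $(N-1)/2 < C$ we have $T(N) = N(N-1)/2 \le N^2/2 \le C N \le bCN^{1+a}$, using $b \ge 1$, $a > 0$, and $N \le 2C$. For the inductive step, fix $N$ with $(N-1)/2 \ge C$ and assume the bound holds for all $n < N$. We must bound the $\max$ over $\vec x$ of the $\min$ of the two terms. The key idea is a case split on $M := \max_i x_i$: when $M$ is small the left term (the ``recursion into parts'' term) is small, and when $M$ is large the right term (the ``peel off vertex $v$'' term) is small. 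The crossover is governed exactly by condition (4), which compares the per-step gain $2(n-1)$ of the right-term recursion against the slack $\frac{(1+a)bC}{2} n^{a-1}$ that the target bound $bCn^{1+a}$ leaves as $n$ increases by one.

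First I would handle the \textbf{left term}. Since $\sum_i x_i = N$ and each $x_i < N$, convexity of $t \mapsto t^{1+a}$ (as $1+a > 1$) gives $\sum_i x_i^{1+a} \le (\max_i x_i)^{a}\sum_i x_i \le M^{a} N$, hence by the inductive hypothesis $\sum_i T(x_i) \le bC \sum_i x_i^{1+a} \le bC M^{a} N$. Adding $2C(N-1) \le 2CN$ yields the bound $bC M^{a} N + 2CN$ on the left term. For this to be at most $bCN^{1+a}$ it suffices that $M^{a} N + (2/b) N \le N^{1+a}$, i.e. roughly $M^a \le N^a - 2/b$; since $b$ is super-constant, $2/b \to 0$, so this holds whenever $M \le (1-\delta_N) N$ for a suitable $\delta_N \to 0$ — concretely any $M$ with $M^a \le N^a - 2/b$, which is implied by $M \le N - c$ for $c$ chosen via the mean value theorem (using $(N^a)' = aN^{a-1}$).

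Next the \textbf{right term}: $T(N-1) + 2(N-1-M) \le bC(N-1)^{1+a} + 2(N-1-M)$. We want this $\le bCN^{1+a}$, i.e. $2(N-1-M) \le bC\big(N^{1+a} - (N-1)^{1+a}\big)$. By convexity/mean value theorem, $N^{1+a}-(N-1)^{1+a} \ge (1+a)(N-1)^a$, so it suffices that $2(N-1-M) \le (1+a)bC(N-1)^a$, equivalently $M \ge (N-1) - \tfrac{(1+a)bC}{2}(N-1)^a$. Thus the right term is good as soon as $M$ is within $\Theta(bC N^a)$ of $N-1$.

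The two regimes overlap — i.e. every $\vec x$ falls into at least one good case — precisely when the ``small $M$'' threshold lies above the ``large $M$'' threshold, which is exactly a reformulation of hypothesis (4): after peeling $N-1-M$ vertices one at a time via the right recursion (each costing $\le 2(n-1)$) one must stay under the envelope $bCn^{1+a}$, and (4) asserts the per-step cost $\tfrac{2(n-1)}{abn^{1+a}}$ is dominated by the per-step headroom $\tfrac{(1+a)bC}{2}n^{1-a}$ of that envelope. The \textbf{main obstacle} I anticipate is bookkeeping the boundary carefully: making the two thresholds genuinely overlap for \emph{all} admissible $\vec x$ and all $n \le N$ simultaneously, since the crossover point drifts with $n$; this is where condition (4) must be invoked in its stated ``for all $n < N$'' form rather than just at $n = N$. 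I would formalize this by showing that if $M$ fails the left-term criterion then $N - 1 - M$ is large enough that iterating the right-term bound down to the crossover scale (and invoking the inductive hypothesis there) still respects the envelope, with condition (4) providing exactly the inequality that makes the telescoped sum of peeling costs fit. Once the overlap is established, $\min\{\text{left},\text{right}\} \le bCN^{1+a}$ for every $\vec x$, hence the $\max$ over $\vec x$ is too, completing the induction.
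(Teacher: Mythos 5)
Your proposal is essentially the paper's proof: strong induction on $N$, the same base case, a case split on $M=\max_i x_i$ using the left branch of the recurrence when $M$ is small and the right branch when $M$ is large, and condition~(4) to make the two regimes overlap. The one genuine (small) simplification is in bounding $\sum_i x_i^{1+a}$: you use the elementary estimate $\sum_i x_i^{1+a}=\sum_i x_i\cdot x_i^{a}\le M^{a}\sum_i x_i = M^{a}N$ (this is monotonicity of $t\mapsto t^{a}$, not convexity as you wrote), whereas the paper proves a separate optimization lemma (its Lemma~\ref{lem:convex}) showing that the extremal point is $((1-\gamma)n,\gamma n,0,\dots,0)$ and then Taylor-expands $((1-\gamma)n)^{1+a}+(\gamma n)^{1+a}$. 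Your weaker estimate in fact lands on the identical threshold $\gamma\ge \frac{2(n-1)}{abn^{1+a}}$, so it suffices and avoids the auxiliary lemma entirely; that is a modest improvement. The right-term branch matches the paper's almost verbatim (you use the mean value theorem, the paper uses a Taylor expansion — same content). One caution: your closing informal gloss about ``peeling $N-1-M$ vertices one at a time via the right recursion (each costing $\le 2(n-1)$)'' misdescribes the recurrence — the right branch peels a single vertex at a cost of $2(n-1-M)$ and then invokes the inductive hypothesis for $T(n-1)$; no iteration of the right branch is needed and doing so would be circular at $n=N$. Your formal computation in the preceding paragraph already gets this right, so this is only a wording issue in the summary, not a gap in the argument.
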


As an immediate corollary, we get the following.

\begin{corollary}\label{lem:instance}
Let $T(n)$ be defined as in Lemma~\ref{lem:induction}. Then for all $0<\varepsilon<1/2$, $T(n)\leq  n^{3/2+\varepsilon}$.
\end{corollary}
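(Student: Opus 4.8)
The plan is to invoke Lemma~\ref{lem:induction} with a carefully chosen assignment of the parameters $a$, $b$, and $C$, so that the resulting bound $b\cdot C\cdot n^{1+a}$ is at most $n^{3/2+\varepsilon}$. The shape of the target, $n^{3/2+\varepsilon}$, immediately suggests setting $1+a$ close to $3/2$; that is, taking $a$ to be roughly $1/2$ but strictly less, say $a = 1/2 - \delta$ for a small constant $\delta = \delta(\varepsilon) > 0$ to be fixed. The remaining slack $n^{3/2+\varepsilon} / n^{1+a} = n^{1/2+\varepsilon+\delta-1/2} \cdot n^{0} $, wait — let me be careful: $n^{3/2+\varepsilon}/n^{1+a} = n^{1/2+\varepsilon-a} = n^{\varepsilon+\delta}$, which is a growing power of $n$. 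So we have a polynomial amount of room to fit the product $b\cdot C$. We then split this room between $b$ and $C$: e.g. set $C = n^{\gamma}$ and $b = n^{\gamma'}$ with $\gamma, \gamma' > 0$ and $\gamma+\gamma' \le \varepsilon+\delta$. Both are super-constant, so conditions (2) and (3) of Lemma~\ref{lem:induction} hold (and $b\ge 1$ trivially since $\gamma' > 0$); condition (1) holds since $a = 1/2-\delta \in (0,1)$ for small $\delta$.

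The one substantive thing to check is condition (4): for all $n < N$, we need
\[
\frac{2(n-1)}{a b\, n^{1+a}} < \frac{(1+a)\,b\,C}{2\,n^{1-a}}.
\]
Rearranging, this is equivalent to $4(n-1) < a(1+a)\,b^2\,C\,n^{1+a}\cdot n^{1-a}/n^{1+a}$ — more carefully, multiplying both sides by $2 b\, n^{1+a}$ gives $4(n-1) < a(1+a) b^2 C\, n^{1+a} / n^{1-a} = a(1+a)b^2 C\, n^{2a}$. So we need $4(n-1) < a(1+a)\,b^2\,C\,n^{2a}$, i.e. roughly $n^{2a} \cdot b^2 C \gtrsim n$. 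With $2a = 1-2\delta$ and $b^2 C = n^{2\gamma'+\gamma}$, the left side grows like $n^{1-2\delta+2\gamma'+\gamma}$, and we need this to dominate $n$, i.e. $2\gamma' + \gamma > 2\delta$. Since we are free to choose $\gamma, \gamma', \delta$ all small and positive, we simply pick, say, $\delta = \varepsilon/4$, $\gamma = \gamma' = \varepsilon/4$; then $\gamma+\gamma' = \varepsilon/2 \le \varepsilon + \delta$ so the product bound $b C \le n^{\varepsilon+\delta}$ is comfortably met, and $2\gamma'+\gamma = 3\varepsilon/4 > \varepsilon/2 = 2\delta$, so condition (4) holds for all sufficiently large $n$; the finitely many small $n$ are absorbed into the base case $T(n) = n(n-1)/2$ (which for those $n$ is at most $n^{3/2+\varepsilon}$ anyway, by enlarging constants, or simply noting the base case applies when $(n-1)/2 < C = n^{\varepsilon/4}$, a bounded range of $n$).

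I expect the only mildly delicate point to be bookkeeping the constant factors and the transition between the base-case regime and the inductive regime: Lemma~\ref{lem:induction} gives the clean bound $T(N) \le b C N^{1+a}$ for \emph{all} $N$ once the four conditions hold, so once the parameters are pinned down the corollary is immediate — $T(n) \le n^{\varepsilon/4}\cdot n^{\varepsilon/4}\cdot n^{3/2-\varepsilon/4} = n^{3/2+\varepsilon/4} \le n^{3/2+\varepsilon}$. The main obstacle, such as it is, is just verifying inequality (4) cleanly across the full range $n < N$, which reduces to the elementary power comparison above; there is no real combinatorial content left, as all of that was discharged in Proposition~\ref{prop:recurrence} and Lemma~\ref{lem:induction}.
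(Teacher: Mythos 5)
Your proposal is correct and follows the same approach as the paper: instantiate Lemma~\ref{lem:induction} by choosing $a$, $b$, $C$ that satisfy the four hypotheses and give $b\,C\,n^{1+a}\le n^{3/2+\varepsilon}$. The parameter choices differ — the paper takes $a=\varepsilon/3$, $C=n^{\varepsilon/3}$, and $b=\frac{4}{a}\left(\frac{n}{C}\right)^{1/2+\varepsilon/3}$, pushing the $n^{1/2}$ bulk into $b$, while you take $a$ close to $1/2$ and $b=C=n^{\varepsilon/4}$, pushing it into the exponent $1+a$ — but both are valid instantiations, and yours is arguably cleaner. The one spot to tighten is your handling of condition (4). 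There, $b$ and $C$ are fixed (functions of the outermost $N$ per conditions (2)--(3)), and the inequality must hold for every $n<N$ with those fixed values; you instead wrote $b^2C=n^{2\gamma'+\gamma}$ with $n$ doubling as the inner recursion variable, which is what created your worry about ``finitely many small $n$.'' Under the correct reading that worry evaporates: since $1-2a=2\delta>0$ and $n<N$, the left side satisfies $4(n-1)\le 4\,n^{2\delta}\,n^{2a}\le 4\,N^{2\delta}\,n^{2a}$, while the right side is $a(1+a)\,N^{2\gamma'+\gamma}\,n^{2a}$, and $2\gamma'+\gamma=3\varepsilon/4>\varepsilon/2=2\delta$ makes the inequality hold for every $n<N$ once $N$ is large — including small $n$. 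Your fallback of ``absorbing small $n$ into the base case'' is not a valid remedy in any case, since condition (4) is a hypothesis on the parameters that must be verified across the full range and is not something the recurrence's base case can excuse; fortunately no remedy is needed, so the proof stands.
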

\begin{proof}
Fix an $\varepsilon$, and let $a=\varepsilon/3$, $C=n^{\varepsilon/3}$, and $b=\frac{4}{a}\cdot\left(\frac{n}{C}\right)^{1/2+\varepsilon/3}$.
\end{proof}

Now we can complete the proof of Theorem~\ref{thm:partition-neg}. Corollary~\ref{lem:instance} together with Proposition~\ref{prop:recurrence} establishes that for any $\varepsilon > 0$, there are at most $n^{3/2+\varepsilon}$ edges with degree at most $C:=n^{\varepsilon/3}$. This means that with probability $1-n^{-1/2+\varepsilon}$, a randomly selected edge $(u,v)$ of the complete graph has degree at least $n^{\varepsilon/3}$. 
Conditioned on $(u,v)$ having high-degree,  we know that $n^{\varepsilon/3}$ edges of the max-weight spanning tree are in the same partition as $(u,v)$. Therefore, at least one of them is selected with probability at most $n^{-\varepsilon/3}$. Setting $\varepsilon=3/8$, we conclude that except with probability $n^{-1/8}$, there is some edge selected with probability at most $n^{-1/8}$, and therefore no randomized partition algorithm can be $\omega(n^{-1/8})$-probability-competitive.

\section{Conclusion}
We study the limitations of simple frameworks for the matroid secretary problem. We show that a class of natural greedy algorithms cannot be constant-utility-competitive, and the class of randomized partition algorithms cannot be constant-probability competitive. This helps explain why such algorithms (especially greedy ones) have faced barriers to resolving the matroid secretary conjecture, and also helps narrow future work towards frameworks with potential.

This agenda leaves open much future work. For example, our greedy framework did not incorporate the virtual algorithm of~\cite{babaioff2007knapsack}. Is there a natural generalized greedy framework which does? Do similar lower bounds hold against that framework? Similarly, replacing $\mathcal{I}$ with a disjoint union of $1$-uniform matroids is a very special case of a framework which replaces $\mathcal{I}$ with a disjoint union of arbitrary matroids (and potentially runs a more complex greedy algorithm than Dynkin's)~\cite{lachish2014log, FSZ-SODA15,  huynh2016matroid,feldman2016online} . Do similar lower bounds hold against this framework? There are also other frameworks not considered in this paper (such as the ``forbidden sets'' approach of~\cite{SotoTV-SODA18}) which should be investigated as well.

{\small
\bibliographystyle{alpha}
\bibliography{main}
}
\newpage
\appendix

\section{Missing Proofs} \label{sec:missingProofs}

\subsection{Proofs from Section~\ref{sec:hat}}\label{app:hat}

\begin{proof}[Proof of Observation~\ref{obs:aa-win}]
The algorithm wins iff it accepts the infinity edge. Each $(AA)$ claw forms a cycle with the infinity edge, so it is infeasible to accept the infinity edge after having formed an $(AA)$ claw. By property~(\ref{prop3}), once an $(AA)$ is formed, it will never get removed from $I_t$. Therefore, if there is an $(AA)$ claw before $t_\infty$, the algorithm definitely loses.

In the other direction, if there is no $(AA)$ claw right before $t_\infty$, adding the infinity edge to $I_{t_\infty}$ creates at most one cycle, as otherwise $I_{t_\infty}$ would not be independent. This cycle consists of the infinity edge and a claw with at least one $S$ edge. The algorithm can win by discarding this $S$ edge from $I_{t_\infty}$ and accepting the infinity edge.
\end{proof}

\begin{proof}[Proof of Lemma~\ref{obs:blocker}]

Suppose the upper edge $e_i$ of a $(-\,A)$ claw arrives at time $t$.  By property (iii), $e_i$ is accepted iff it is in the max-weight basis of $I_t\cup \{e_i\}$ contracted by $A_t$. If $I_t$ contains at most one edge from every other claw, it would be feasible to add $e_i$ so it is in the MWB and will be accepted. Suppose there exists some claw $j$ with both edges in $I_t$. Note that there can be at most one such claw by independence of $I_t$ (property~(\ref{prop3})), so there is a unique claw $j$ with both edges in $I_t$.

We consider all four cases for claw $j$. First, note that $j$ cannot be an $(A\,A)$ claw by assumption. Assume $j$ is a $(S\,S)$ claw. Claws $i$ and $j$ form a cycle in $I_t\cup \{e_i\}$, and the MWB of $I_t\cup \{e_i\}$ contracted by $A_t$ contains the heaviest two of $e_j$, $e'_j$ and $e_i$, namely $e_i$ and $e_j$. So $e_i$ will be accepted.

Next, assume $j$ is a $(A\,S)$ claw. Claws $i$ and $j$ form a cycle in $I_t\cup \{e_i\}$, and the MWB of $I_t\cup \{e_i\}$ contracted by $A_t$ contains the heavier of $e'_j$ and $e_i$, namely $e_i$. So $e_i$ will be accepted.

Next, assume $j$ is a $(S\,A)$ claw. Claws $i$ and $j$ form a cycle in $I_t\cup \{e_i\}$, and the MWB of $I_t\cup \{e_i\}$ contracted by $A_t$ contains the heavier of $e_j$ and $e_i$, namely $e_i$ iff $i<j$. In other words, we have shown that the upper edge of a $(-\,A)$ claw $i$ is not accepted at time $t$ iff $I_t$ contains a $(S\,A)$ claw to the left of $i$.
\end{proof}

\begin{proof}[Proof of Lemma~\ref{obs:unprotected}]
When the lower edge $e'$ of a $(S\,-)$ claw arrives, its upper edge has to be in $I_t$ for $I_t$ to be spanning (property~(\ref{prop3})). Adding $e'$ to $I_t$ thus forms a cycle in $I_t$ with the previous $(SA)$. The max-weight basis of $I_t\cup \{e'\}$ contracted by $A_t$ will contain the two heaviest edges of this length-4 cycle other than the already accepted edge, namely the two sample edges. Since the MWB does not contain $e'$, the algorithm cannot accept it by property~(\ref{prop2}).
\end{proof}

\subsection{Proofs from Section~\ref{sec:supergreedy}}\label{app:supergreedy}

\begin{proof}[Proof of Theorem~\ref{thm:supergreedy}]
Let $\alpha>1$ be a constant. We will consider a hat with $n+2$ vertices, where the weight function satisfies $\frac{1}{2\alpha}<w(e_i)<\frac{1}{\alpha}$ and $\frac{1}{3\alpha}<w(e'_i)<\frac{1}{2\alpha}$ for all $i\in[n]$ in addition to $e_1 >\ldots> e_n > e_1' >\ldots> e'_n$.\footnote{The weight assignments in the proof provided by Babaioff \textit{et al.} are slightly different. The proof does not need the property that the heavier upper edges are paired with the heavier lower edges, so the proof would go through without the assumption that $e_1 >\ldots> e_n > e_1' >\ldots> e'_n$. However, we will use this additional assumption for the convenience of talking about the location of claws as a proxy for their relative weights.} Furthermore, we let the infinity edge have weight $n+1$. If the algorithm fails to select the infinity edge, it can accept a total weight smaller than $(m+1)/\alpha$, and thus its competitive ratio is less than $\alpha$. We will again use the algorithm ``losing'' interchangeably with its failing to accept the infinity edge.

If the infinity edge arrives in the sample stage, the algorithm loses immediately. Suppose the infinity edge arrives after $T$ for the rest of the proof. The proof sketch is as follows. We would like to show that an $(AA)$ claw will be accepted with high probability before the infinity edge arrives. To do so, we show with high probability, no blocker is formed before the first time the upper edge of a $(-\,A)$ claw arrives. This is due to the fact that $(SS)$ claws in $I$ prevent blockers from forming to their right, and after the sampling stage, $I_T$ will contain an $(SS)$ claw that is very far to the left with high probability. Furthermore, an $(SS)$ claw that is far to the left is hard to discard and will last for a long enough time that an $(AA)$ claw will form with high probability.

More formally, we will consider the interval $L=[T,\min(T+n^{-1/3},1)]$, and show that an $(AA)$ claw will already form during $L$ with high probability, whereas the infinity edge does not arrive during that interval with high probability, \textit{i.e.} $t_\infty\not\in L$.

Recall from Observation~\ref{obs:aa-win} that the algorithm wins iff no $(AA)$ claws form before $t_\infty$. By Lemma~\ref{obs:blocker}, the only way to protect against accepting certain $(AA)$ claws (\textit{i.e.} ones in which the upper edge arrives after the lower edge) is by forming a blocker to their left. We will show that with high probability, the algorithm does not build a blocker during $L$. To see why, note that immediately after the sampling stage, $I_T$ contains the max-weight-basis of the sample elements. $I_T$ will contain at most one claw by independence, namely the left-most $(SS)$ claw, say claw $i$. Each claw has a $1/4$ chance of being type $(SS)$ at $T$, independently of every other claw. Therefore, $i$ is geometrically distributed with success probability $1/4$, and thus $i=O(\log n)$ with probability at least $(1-1/n^2)$.

Since the algorithm can only kick out the lightest edge in a cycle, the only way for this $(SS)$ claw to get discarded is for the algorithm to accept an edge that forms a claw to its left. The reason is that among the edges to the right of $i$, only upper edges are heavy enough to kick out an edge of claw $i$ when accepted. Consider any such upper edge in location $j>i$. For it to be possible to discard the $(SS)$ claw $i$, claw $j$ must form a cycle with claw $i$. Therefore, the lower edge in location $j$ must be either an $A$ or an $S$. In the former case, the algorithm has already lost, and in the latter case, the algorithm kicks out the the lower edge in location $j$ over either edge of the $(SS)$ claw in location $i$.

However, the probability that an edge to the left of $i$ arrives during $L$ is no more than $O(n^{-1/3}\log n)$ by union bound. Therefore, with high probability, the $(SS)$ claw $i$ will survive for the entire duration of $L$. Conditioned on this event, no blocker can be form in $I_t$ for $t\in L$: The only potential blockers must be to right of $i$ (since we are conditioning on no edge to the left of $i$ arriving in $L$), and the lower edge of any such potential blocker claw $j$ is lighter than all the sample elements in $I_t$ with which it forms a cycle (\textit{i.e.} both edges of claw $i$ and the upper edge of claw $j$), and thus will not be accepted.

Let $\mathcal{E}_i$ refer to the event that $t(e'_i),t(e_i)\in L$ and $t(e'_i) < t(e_i)$, and note that conditioned on no blocker forming during $L$, any $\mathcal{E}_i$ causes the algorithm to fail. The expected number of $\mathcal{E}_i$s that happen is at least $n|L|^2=n^{1/3}/2$. Additionally, since the $\mathcal{E}_i$s only depend on the draws $t(e_i)$ and $t(e'_i)$ and are independent of each other, the number of $\mathcal{E}_i$s that happen is concentrated around its expectation by a standard Chernoff bound. Therefore, with high probability, some $\mathcal{E}_i$ happens and the algorithm loses. 
\end{proof}

\subsection{Proofs from Section~\ref{sec:all-greedy}}\label{app:all-greedy}

\begin{proof}[Proof of Lemma~\ref{lem:AA}]

From the left-most $x$ claws, in expectation $x\ell^2$ of them satisfy $T<t(e'_i)<T+\ell$ and $T<t(e_i)<T+\ell$. With exponentially small probability fewer than $x\ell^2/2$ such claws exist.  We count that small probability towards the algorithm's success, and assume $x\ell^2/2$ of those claws exist. Each of them satisfies $T<t(e'_i)<t(e_i)<T+\ell$ with probability 1/2 independently of other claws. Therefore the probability of existence of a claw satisfying $T<t(e'_i)<t(e_i)<T+\ell$ is at least $1-(1/2)^{\ell^2x/2}$.
\end{proof}

\begin{proof}[Proof of Lemma~\ref{lem:bullettwo}]
 
We first upperbound the number of claws that are candidates for becoming a blocker. For a blocker to the left of $x$ to be accepted, there must be a $(S\,-)$ claw to the left of $x$ at time $T$, with the lower edge arriving during $[T,T+\ell]$. Conditioning on Lemma~\ref{lem:AA} (i.e. existence of a $(-\,-)$ claw to the left of $x$ at time $T$) only decreases the number of possible $(S\,-)$ claws, so we can remove the conditioning. For a blocker to be accepted, its lower edge must arrive after $T$ and during an unprotected period (Lemma~\ref{obs:unprotected}). From the $x$ candidate lower edges, in expectation $x\ell$ of them arrive in $[T, T+\ell]$. With exponentially small probability more than $2x\ell$ edges arrive during that time. We count that small probability towards the algorithm's success. We will call each such lower edge a \emph{nice} edge.

Suppose the algorithm chooses to go unprotected during the window $W=[T+\ell-R, T+\ell-R+dy]$ where $R$ represents the remaining time till $T+\ell$ and $dy$ is the length of the window. Suppose $X'$ nice edges (out of at most $2x\ell$ nice edges) are yet to arrive at $T+\ell-R$. Any nice edge arriving after $T$ lands in $W$ with probability $dy/R$, so $W$ fails to include any nice edges with probability at least $(1-dy/R)^{X'}$.

Lemma~\ref{lem:ratio-concentration} shows $X'(R)/R\leq 4x$ with probability at least $1-2x\ell e^{-\frac{2x\ell}{3}}$. 
\begin{lemma}\label{lem:ratio-concentration}
Let $X'(R)$ be the number of nice edges yet to arrive at time $T+\ell-R$. Then $\sup_{R\in[T,T+\ell]}\{\frac{X'(R)}{R}\}\leq 4x$ with probability at least $1-2x\ell e^{-\frac{2x\ell}{3}}$.
\end{lemma}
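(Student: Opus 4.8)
~\textbf{Proof plan for Lemma~\ref{lem:ratio-concentration}.}
The quantity $X'(R)$ counts how many of the (at most $2x\ell$) nice edges have arrival time lying in $(T+\ell-R,\,T+\ell]$; conditioned on a nice edge arriving after $T$, its arrival time is uniform on $[T,T+\ell]$, so each such edge contributes to $X'(R)$ independently with probability $R/\ell$. Hence $X'(R)$ is stochastically dominated by $\mathrm{Binom}(2x\ell, R/\ell)$, with mean at most $2xR$. The plan is to show the random process $R \mapsto X'(R)/R$ stays below $4x = 2\cdot(2x)$ — i.e. within a factor $2$ of its pointwise mean — uniformly over $R \in [T,T+\ell]$, and to do so by a union bound over a suitable discretization of $R$ together with a Chernoff bound at each grid point.

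First I would reduce the supremum over the continuum to a supremum over a finite grid. The key monotonicity observation is that $X'(R)$ is nondecreasing in $R$ (a larger window can only contain more arrivals), while $R$ is of course increasing in $R$; so on any subinterval $[R_1,R_2]$ we have $\sup_{R\in[R_1,R_2]} X'(R)/R \le X'(R_2)/R_1$. Taking a geometric grid $R_1 < R_2 < \cdots$ with ratio $R_{j+1}/R_j \le 2$ (for instance $R_j = T\cdot 2^{j}$ up to $T+\ell$, which needs only $O(\log(1+\ell/T)) = O(\log n)$ points since $T = \Theta(1)$ and $\ell = n^{-0.1}$), it suffices to show $X'(R_j) \le 2x R_j$ for every grid point $R_j$: then for arbitrary $R \in [R_j, R_{j+1}]$ we get $X'(R)/R \le X'(R_{j+1})/R_j \le 2x R_{j+1}/R_j \le 4x$, as desired. (One must also handle the bottom of the range, $R < T$ — but $R$ ranges over $[T,T+\ell]$ so $R \ge T$ always, and the grid starts at $T$.)

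Next, at each fixed grid point $R_j$, bound $\Prob[X'(R_j) > 2x R_j]$. Since $X'(R_j) \preceq \mathrm{Binom}(2x\ell, R_j/\ell)$ has mean $\mu_j \le 2xR_j$, and $2xR_j = \mu_j \cdot (2xR_j/\mu_j) \ge \mu_j$, a multiplicative Chernoff bound of the form $\Prob[Z > (1+\delta)\mu] \le e^{-\delta^2\mu/3}$ with $(1+\delta)\mu = 2xR_j$ gives a bound like $e^{-c\,x R_j}$ for an absolute constant; crudely, $\Prob[X'(R_j) > 2xR_j] \le e^{-x R_j / 3} \le e^{-xT/3} \le e^{-2x\ell/3}$ once $n$ is large enough that $T \ge 2\ell$ (true since $T = \Theta(1)$, $\ell \to 0$). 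Then a union bound over the $O(\log n)$ grid points, absorbed into the leading factor, yields failure probability at most (number of grid points)$\cdot e^{-2x\ell/3} \le 2x\ell\, e^{-2x\ell/3}$ for $n$ large, matching the statement.

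The main obstacle — and the one place care is needed — is getting the discretization argument and the Chernoff exponent to combine into \emph{exactly} the claimed bound $1 - 2x\ell e^{-2x\ell/3}$ rather than merely something of the same order: one has to pick the grid so that the number of points times the per-point Chernoff bound telescopes into the stated form, which forces the grid to be coarse enough (ratio close to $2$) and relies on $T/\ell \to \infty$ to make the per-point exponent $e^{-\Theta(xR)}$ dominate $e^{-2x\ell/3}$ with room to spare for the $\mathrm{poly}\log$ union-bound loss. The monotonicity reduction is what makes this clean; without it one would be stuck trying to control an uncountable supremum directly. Everything else is routine: the stochastic domination by a binomial, and the standard Chernoff tail.
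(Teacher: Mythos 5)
Your proof takes a genuinely different route from the paper's---a fixed geometric grid with a monotonicity reduction, versus a union bound over the random arrival points---but it has a gap stemming from taking the lemma's stated range ``$R\in[T,T+\ell]$'' literally. The quantity $R$ is defined as the \emph{remaining} time until $T+\ell$, so the corresponding time $T+\ell-R$ lies in $[T,T+\ell]$, which forces $R\in[0,\ell]$ (and indeed this is how $R$ is used throughout the proof of Lemma~\ref{lem:bullettwo}, where the window $W=[T+\ell-R,\,T+\ell-R+dy]$ sits inside $[T,T+\ell]$). Your geometric grid therefore cannot be anchored at the deterministic value $R_1=T=\Theta(1)$; it must sweep down toward $R=0$, where the per-point Chernoff exponent $\Theta(xR_j)$ degrades to nothing and the number of geometric scales is unbounded. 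The chain ``$\Prob[X'(R_j)>2xR_j]\le e^{-xR_j/3}\le e^{-xT/3}\le e^{-2x\ell/3}$'' uses $R_j\ge T$ and is exactly the invalid step. Your monotonicity observation $\sup_{R\in[R_j,R_{j+1}]}X'(R)/R\le X'(R_{j+1})/R_j$ is correct and would help if $R$ were bounded away from zero, but as written the argument does not control the supremum over the true domain.

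The paper avoids this by observing that the local maxima of $R\mapsto X'(R)/R$ occur exactly at the random offsets $R_1,\dots,R_{2x\ell}$ at which nice edges arrive: $X'$ is a nondecreasing pure-jump step function, so between jumps $X'(R)/R$ is strictly decreasing in $R$, and every local maximum is a jump point. This reduces the supremum to a maximum over at most $2x\ell$ (random) points, each with $X'(R_i)\sim\mathrm{Binom}(2x\ell,R_i/\ell)$, and a union bound over those points produces the prefactor $2x\ell$ directly---no grid, no accumulation of scales near $0$. (Incidentally, the exponent $e^{-2x\ell/3}$ in the lemma statement appears to be a typo for $e^{-2x/3}$: that is what the paper's Chernoff step derives, and it is what Lemma~\ref{lem:bullettwo} actually uses in its final bound.)
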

\begin{proof}
Note that $X'$ can only take values in $\{0,1,\ldots,2x\ell\}$\footnote{If there are fewer than $2x\ell$ nice edges, our result can only get stronger.}, and let $c_1,\ldots,c_{2x\ell}$ be the set of nice edges ordered arbitrarily (and not as a function of their arrival time).
Let $T+\ell-R_i$ denote the arrival time of $c_i$. The set of local maxima of $X'(R)/R$ is exactly $\{X'(R_i)/R_i : i\in[2x\ell]\}$, and it suffices to show $\max_{i} X'(R_i)/R_i \leq 4x$ with high probability.

The arrival times of nice edges are distributed uniformly in $[T,T+\ell]$. For every $i,j\in[2x\ell], i\not=j$, the edge $c_j$ arrives in $[T+\ell-R_i, T+\ell]$ with probability $R_i/\ell$ and in $[T,T+\ell-R_i]$ otherwise. So for every $i\in[2x\ell]$, $X'(R_i)$ is distributed according to Binom($2x\ell,R_i/\ell$). By Chernoff bounds, 
\begin{align*}
\mathbb{P}\left[\frac{X'(R_i)}{R_i}\geq 4x\right] \leq e^{-\frac{2x}{3}}.
\end{align*}

Taking a union bound over all $i\in[2x\ell]$ gives the desired result.
\end{proof}
By Lemma~\ref{lem:ratio-concentration}, $(1-\frac{dy}{R})^{X'}$ with high probability is at least $(1-\frac{dy}{R})^{R(4x)}$, which is at least $(1-dy)^{4x}$ for $dy \leq R$. Finally, recall that the algorithm is free to pick a set of total measure $y$ of unprotected time distributed however it pleases. Since $(1-dy)^{4x}$ is concave in $dy$, the total failure probability incurred by the algorithm is at least $(1-y)^{4x}$.
Combining this with Lemma~\ref{lem:ratio-concentration}, algorithm's overall failure probability is at least 
\begin{align*}
(1-2x\ell e^{-\frac{2x}{3}})(1-y)^{4x}.
\end{align*}
\end{proof}

\begin{proof}[Proof of Lemma~\ref{lem:bulletone}]

The proof approach is as follows. The algorithm \emph{has} to allocate \emph{at least} $y$ total unprotected time in the window $[T,T+\ell]$. Any unprotected time carries the risk of accepting the upper edge of a $(-\,A)$ claw \emph{anywhere}. Given the history of arrivals, the algorithm can make adaptive decisions about when to go unprotected, with the hope of being unprotected when upper edges of $(-\,A)$ claws are least likely to arrive. We will make the algorithm's job easier by allowing it to go unprotected near the beginning and end of the window $[T,T+\ell]$ ``for free'', that is we ignore the failure cases where the upper edge of a $(-\,A)$ claw arrives during those windows. This modification only undercounts the algorithm's failure probability, making our lower bound stronger.

In particular, we again focus on claws satisfying  $T < t(e'_i) < t(e_i) <T+\ell$, noting that any such claw that arrives in an unprotected period would result in the algorithm failing.\footnote{While this is not an exhaustive list of ways the algorithm could fail, it suffices to show that this particular failure case is already likely.} Each claw $i\in[n]$ satisfies $T < t(e'_i) < t(e_i) <T+\ell$ independently with probability $\ell^2/2$. The expected number of such claws is therefore $\ell^2n/2$. Standard Chernoff bounds imply that except for an exponentially small probability, the number of such claws does not drop below $\ell^2n/4$. Therefore, even if the algorithm manages to accept the infinity edge for \emph{all} the arrival orders in which the number of such claws drops below $\ell^2n/4$, its competitive ratio increases by only an exponentially small amount. We count that small probability towards the algorithm's success and assume there are at least $\ell^2n/4$ such claws. Uniformly at random select $\ell^2n/4$ of such claws and call them \emph{potential AA}. We say that a potential AA $(e_i,e'_i)$ \emph{arrives at} time $t=t(e_i)$.

Suppose the algorithm chooses to go unprotected during the window $W=[T+\ell-R, T+\ell-R+dy]$. 
We say a potential AA is \emph{pending at time $t$} if its lower edge arrives in $[T,t]$ and its upper edge arrives in $[t,T+\ell]$. Let $A'(R)$ be the number of pending claws at time $T+\ell-R$. We claim that the probability of the algorithm failing is lower bounded by $1-(1-dy/R)^{A'(R)}$. This is because any potential AA pending at time $T+\ell-R$ lands in $W$ with probability $dy/R$. We show that for sufficiently large $y$, $A'(R)$ is concentrated around its expectation. The expected value of $A'(R)$ is $n(\ell-R)R/\ell^2$.

By a Chernoff bound,
$
\mathbb{P}\left[A'(R) \leq \frac{n(\ell-R)R}{2\ell^2}\right] \leq e^{-\frac{n(\ell-R)R}{8\ell^2}}.
$
Similar to Lemma~\ref{lem:ratio-concentration} the local minima for $\frac{A'(R)}{R}$ occur when a potential AA arrives. 
Taking a union bound over all $\ell^2n/4$ potential AA's, we conclude that $A'(R)$ never exceeds $ \frac{n(\ell-R)R}{2\ell^2}$ with probability at least $1-\frac{\ell^2 n}{4} e^{-\frac{n(\ell-R)R}{8\ell^2}}$.

We will make the algorithm's job easier by allowing it to go unprotected during both $[T,T+\varepsilon]$ and $[T+\ell-\varepsilon, T+\ell]$ without accepting any $(AA)$ claws then. The algorithm still has to allocate $y-2\varepsilon$ unprotected time within $[T+\varepsilon, T+\ell-\varepsilon]$. Any such window $W=[T+\ell-R, T+\ell-R+dy]$ causes the algorithm to fail with probability at least $1-(1-dy/R)^{A'(R)}$, which with probability at least $1-e^{-\frac{n(\ell-R)R}{8\ell^2}}$ is at least
\[1-(1-dy/R)^{\frac{n(\ell-R)R}{2\ell^2}} \geq 1-\left(1-\frac{dy}{\ell}\right)^{\frac{n\varepsilon(\ell-\varepsilon)}{2\ell^2}}, \]
where the inequality follows from $0 \leq \varepsilon \leq R \leq \ell-\varepsilon\leq \ell$.
For $\ell=n^{-0.1}$, the right-hand-side expression is concave in $dy$, so the algorithm is best off going unprotected in a consecutive interval of length $y-2\varepsilon$, incurring an overall failure probability of
\[ \left(1-(\frac{n \ell^2}{4}) e^{-\frac{n(\ell-\varepsilon)\varepsilon}{8\ell^2}}\right)\left(1-\left(1-\frac{y-2\varepsilon}{\ell}\right)^{\frac{n\varepsilon(\ell-\varepsilon)}{2\ell^2}}\right).\]
Setting $\varepsilon=n^{-0.4}/4$, the left term vanishes for large $n$, so the failure probability of the algorithm is lower bounded by 
\[
1-\left(1-\frac{2y-n^{-0.4}}{2\ell}\right)^{\frac{n^{0.6}(4\ell-n^{-0.4})}{32\ell^2}}.
\]
\end{proof}

\begin{proof}[Proof of Lemma~\ref{lem:combine}]

Note that $f(y)$ is decreasing and $g(y)$ is increasing in $y$. 
Suppose that the optimum value of $y$ for the algorithm is $y_*$; i.e., $y_* = \argmin_{y } \max \{f(y),g(y)\}$.
For any $y_0$, we have:
\begin{align*}
\min\{g(y_0),f(y_0)\} &\leq \max\big\{ \min\{g(y_*),f(y_0)\},\,\min\{g(y_0),f(y_*)\}\big\}\\
&\leq \max\{g(y_*), f(y_*)\} ~=  \min_{y}\big\{\max\{ g(y), f(y)\}\big\}.
\end{align*}
The first inequality holds because $g$ is increasing and $f$ is decreasing, so in the case that $y_0<y_*$, the first term in the max is larger than the left-hand-side, and when $y_0>y_*$, the second term in the max is larger than the left-hand-side. The second inequality holds since dropping mins can only make the value larger, and the last equality holds by definition of $y_*$. We will pick a value of $y_0$ that gives a sufficiently good lower bound to $\min_{y}\{\max\{ g(y), f(y)\}\}$.

In particular, we set $x=n^{0.3}$, $y_0=n^{-0.4}$, and $\ell=n^{-0.1}$. Then, as $n\rightarrow\infty$, we have
\begin{align*}
	f(y_0) &=(1-2x\ell e^{-\frac{2x}{3}})(1-y_0)^{4x}\left(1-(\frac{1}{2})^{\ell^2x/2}\right)\\
	g(y_0) &=
	 1-\left(1-\frac{2y_0-n^{-0.4}}{2\ell}\right)^{\frac{n^{0.6}(4\ell-n^{-0.4})}{32\ell^2}}.
\end{align*}
By substituting $x$, $y_0$, and $\ell$, for $n \rightarrow \infty$, $(1-2x\ell e^{-\frac{2x}{3}})$ and $(1-(1/2)^{\ell^2x/2})$ in $f(y_0)$ tend to $1$. Therefore, we only need to reason about $(1-y_0)^{4x}$ and $(1-\frac{2y_0-n^{-0.4}}{2\ell})^{\frac{n^{-0.6} \ell}{32}}$. Note that $(1-y_0)^{4x}$ converges to $e^{-4xy_0}$ which is $e^{4n^{-0.1}}$ and tends to $1$. Similarly $1-(1-\frac{2y_0-n^{-0.4}}{2\ell})^{\frac{n^{0.6}(4\ell-n^{-0.4})}{32\ell^2}}$ converges to $1-e^{-\frac{(2y_0-n^{-0.4})n^{0.6}(4\ell-n^{-0.4})}{64\ell^3}}$ which goes to 1 in the limit as $n$ goes to infinity.

Since both lower bounds converge to $1$, for any constant $0 <\alpha \leq 1$, there exists $n_0$ such that for $n>n_0$, the algorithm will fail with probability $1-\alpha$ and cannot be constant utility-competitive, thus completing the proof.

\end{proof}

\subsection{Proofs from Section~\ref{ch:partition}} \label{sec:ProofKorulaPal}

\begin{proof}[Proof of Theorem~\ref{thm:KorulaPal}] For completeness, we show the feasibility and  competitiveness of the Korula-Pal algorithm.

\textbf{Feasibility.}
 Assume, for contradiction, that the algorithm accepts a cycle on nodes $v_1,\ldots, v_k$. Consider the node $v$ that appears earliest in the offline ordering $\sigma$, {i.e.}, $v = \arg\min \sigma(v_i)$. Then, two edges are accepted within $E_v$, contradicting the feasibility of the output of Dynkin's algorithm.

\textbf{Competitiveness.}
Let $\ALG_v$ denote the weight of the edge accepted in the instance of Dynkin's algorithm running on $E_v$. By the competitiveness of Dynkin's algorithm, we know that $\E[\ALG_v]\geq\E[\max_{i\in E_v} w(i)/\e]$. Let $\ALG$ be the total weight accepted by the algorithm. Since $\{E_v\}$ forms a partition, we know that
$$\E[\ALG] =\E\left[\sum_v \ALG_v\right]\geq\E\left[\sum_v \max_{i\in E_v} \frac{\{w(i)\}}{\e}\right]=\sum_v\E\left[ \max_{i\in E_v} \frac{\{w(i)\}}{\e}\right],$$
so it suffices to bound $\E\left[\max_{i\in E_v} \{w(i)\}\right]$.
\\
Fix a max-weight basis of $G$ (a spanning forest of maximum total weight $\OPT$), called $T$. Let $v$ be a leaf in $T$, and let $i$ be the edge incident to it. Then, with probability $1/2$, $i\in E_v$. So $\E[\max_{i\in E_v} w(i)]\geq \frac{1}{2}\cdot w_{i}/2$ for all $v$ that are leaves of $T$. We can iteratively remove the leaves of $T$, and repeat the above reasoning to obtain a similar bound for all nodes $v$ in $T$. Putting everything together, we have
$$\E[\ALG] \geq \sum_v\E\left[ \max_{i\in E_v} \frac{\{w(i)\}}{\e}\right]\geq \sum_{i\in T} \frac{w(i)}{2\e} = \frac{\OPT}{2\e}.$$
\end{proof}

In the above proof, importantly, note that it does \emph{not} claim that any particular edge of the max-weight spanning tree is selected with high probability. Each edge just serves as a witness that the expected max-weight edge adjacent to each particular node is high (but in fact, it could be much higher, and this is why the algorithm is not probability-competitive).

\subsection{Proof of Lemma~\ref{lem:induction}} \label{sec:PfLemmaRandPart}

\begin{proof}[Proof of Lemma~\ref{lem:induction}]
We will proceed by induction on $n$.

\textbf{Base Case.} When $(n-1)/2<C$, we have $T(n)=n(n-1)/2<C\cdot n<C\cdot n^{1+a}\leq  b\cdot C\cdot n^{1+a}$, where in the last equality we have used the assumption that $b>1$.

\textbf{Inductive Step.} For simplicity of notation, permute the partitions so that $x_1 = \arg\max_i\{x_i\}$. We consider two possibilities depending on how large $x_1$ is compared to $(1-\gamma)n$ for some $0<\gamma<1$ that we will choose later. 
\begin{enumerate}
    \item $x_1< (1-\gamma)n$: In this case, we have
    \begin{align*}
        T(n) &\leq  \sum_i T(x_i) + 2C(n-1)\\
        &\leq  \sum_i bC(x_i)^{1+a} + 2C(n-1)\\
        &\leq  bC\big(((1-\gamma)n)^{1+a} +(\gamma n)^{1+a}\big) + 2C(n-1)\\
        &= bCn^{1+a}\left((1-\gamma)^{1+a} +\gamma^{1+a}\right) + 2C(n-1)\\
        &= bCn^{1+a}\left(1-(a+1)\gamma+O(\gamma^2) +\gamma^{1+a}\right) + 2C(n-1) -4+2\gamma n\\
        &\leq  bCn^{1+a} - \gamma abCn^{1+a} + O(\gamma^2) + 2C(n-1)
    \end{align*}
    where the first inequality uses the first term in the max in our recursive bound on $T(n)$; the second inequality uses the inductive hypothesis and the assumption that $x_1$ is large; the third inequality uses convexity of $f(x)=x^{1+a}$ for $a>1$ as shown in Lemma~\ref{lem:convex}; the equality on the fourth line is simple algebraic manipulation; the equality on the fifth line follows from the Taylor Expansion of $(1-\gamma)^{1+a}$; and the last equality follows from algebraic manipulation as well as the fact that $a>0,\gamma<1,\Rightarrow \gamma>\gamma^{1+a}$.\\
    \noindent
    It therefore suffices to ensure that
    \begin{align*}
        0\geq - \gamma abCn^{1+a} + O(\gamma^2) + 2C(n-1),
    \end{align*}
    Since the error term $O(\cdot)$ is non-negative, it is okay to drop it and ensure that $\gamma\geq \frac{2(n-1)}{ab n^{1+a}}$. We instead enforce the stronger result \fbox{$\gamma\geq \frac{2}{ab n^a}$}.
    
    \item $x_1\geq (1-\gamma)n$: In this case, we have
    \begin{align*}
        T(n) &\leq  T(n-1)+2((n-1)-1-x_1)\\
        &\leq  bC(n-1)^{1+a}+2(n-2-x_1-n(1-\gamma))\\
        &= bCn^{1+a}\left(1-\frac{1}{n}\right)^{1+a}-4+2\gamma n\\
        &= bCn^{1+a}\left(1-\frac{1+a}{n}+O\left(\frac{1}{n^2}\right)\right) -4+2\gamma n\\
        &= bCn^{1+a}-(1+a)bCn^{a}+O\left(\frac{bC}{n^{1-a}}\right) -4+2\gamma n,
    \end{align*}
    where the first inequality uses the second term in the max in our recursive bound on $T(n)$; the second inequality uses the inductive hypothesis and the assumption that $x_1$ is large; the equality on the third line is simple algebraic manipulation; the equality on the fourth line follows from the Taylor Expansion of $(1-1/n)^{1+a}$; and the last equality is simple algebraic manipulation.\\
    \noindent
    To complete the inductive step, we need to make sure the RHS is at most $bCn^{1+a}$, which holds as long as
    \begin{align*}
        0\geq -bC(1+a)n^{a}+O\left(\frac{bC}{n^{1-a}}\right) -4+2\gamma n,
    \end{align*}
    which is the same as
    \begin{align*}
        \gamma&\leq  \frac{1}{2n}(4+(1+a)bCn^a)+O\left(\frac{bC}{n^{2-a}}\right)\\
        &= \frac{2}{n}+\frac{(1+a)bC}{2n^{1-a}}+O\left(\frac{bC}{n^{2-a}}\right).
    \end{align*}
    Since both the error term $O(\cdot)$ and $2/n$ term are non-negative, it is enough to ensure \fbox{$\gamma\leq \frac{(1+a)bC}{2n^{1-a}}$}.
\end{enumerate}
The two cases above give a lower-bound and an upper-bound on $\gamma$, and the inductive step holds for any assignment of $a$, $b$, $C$ and $\gamma$ that satisfies $$\frac{2(n-1)}{ab n^{1+a}}\leq \gamma\leq \frac{(1+a)bC}{2n^{1-a}}.$$
Indeed, by the fourth assumption of this Lemma, the lower-bound is strictly smaller than the upper bound for all $n$, and thus there exists some $\gamma$ (perhaps one depending on $n$) that satisfies both inequalities, thus completing the proof.
\end{proof}

\begin{lemma}\label{lem:convex}
Let $\gamma\in(0,1/2)$ and $a\in (0,1)$. Consider the following optimization problem
\begin{equation*}
\begin{aligned}
& \underset{\vec{x}\in \mathbb{N}_{\geq 0}^n}{\text{minimize}}
& & \sum_{i} x_i^{1+a} \\
& \text{subject to}
& & \sum_{i} x_i = n,\\
& & & 0 \leq  x_i \leq  (1-\gamma) n, \; i = 1, \ldots, m.
\end{aligned}
\end{equation*}
The optimum value of this program is $\left((1-\gamma)n\right)^{1+a} + \left(\gamma n\right)^{1+a}$.
\end{lemma}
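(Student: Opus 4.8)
The statement is about \emph{maximizing} $\sum_i x_i^{1+a}$ (this is how it is invoked in the proof of Lemma~\ref{lem:induction}; read as a minimization it would be trivial). Moreover, since all that is ever needed is the displayed quantity as an \emph{upper} bound, one may relax $\vec x\in\mathbb{N}_{\geq 0}^n$ to $\vec x\in\mathbb{R}_{\geq 0}^n$: this only enlarges the feasible region, and the relaxed maximum turns out to be attained at the claimed point, so nothing is lost. The single fact used is that $f(t):=t^{1+a}$ is convex on $[0,\infty)$ (indeed $f''(t)=a(1+a)\,t^{a-1}\geq 0$ for $t>0$, and $f$ is continuous at $0$).

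\textbf{Reduction to a canonical ``two-block'' vector.} Call a coordinate $x_i$ \emph{partial} if $0<x_i<(1-\gamma)n$. Suppose a feasible $\vec x$ has two partial coordinates with $x_i\geq x_j$. Put $\delta:=\min\{\,x_j,\ (1-\gamma)n-x_i\,\}>0$ and replace $(x_i,x_j)$ by $(x_i+\delta,\ x_j-\delta)$. This keeps $\sum_i x_i=n$ and respects the box constraints (the cap at $(1-\gamma)n-x_i$ is exactly what guarantees $x_i+\delta\leq(1-\gamma)n$), and by convexity of $f$ — pulling two values of a convex function apart while holding their sum fixed cannot decrease the sum of the function values — it does not decrease $\sum_i f(x_i)$. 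By the choice of $\delta$, at least one of the two touched coordinates becomes $0$ or $(1-\gamma)n$, so the number of partial coordinates strictly drops. Iterating finitely many times produces a feasible $\vec x^\star$ with at most one partial coordinate and $\sum_i f(x_i^\star)\geq\sum_i f(x_i)$.

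\textbf{Evaluating the canonical vector.} Let $k$ be the number of coordinates of $\vec x^\star$ equal to $(1-\gamma)n$. Since the other coordinates are nonnegative, $k(1-\gamma)n\leq n$, so $k\leq 1/(1-\gamma)<2$ — this is the one place the hypothesis $\gamma<\tfrac12$ is used — hence $k\leq 1$. If $k=0$ then the unique nonzero coordinate must equal $n>(1-\gamma)n$, which is infeasible; so $k=1$, the leftover mass is $n-(1-\gamma)n=\gamma n$, and since $0<\gamma n<(1-\gamma)n$ this is a legal value for the single partial coordinate (all remaining coordinates being $0$). Therefore $\sum_i f(x_i^\star)=\big((1-\gamma)n\big)^{1+a}+(\gamma n)^{1+a}$, which upper-bounds the objective over all feasible $\vec x$; as $\vec x^\star$ is itself feasible, equality holds, which is the claim.

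\textbf{Main obstacle.} There is no deep obstacle: this is a standard smoothing/exchange argument. The only points needing care are (i) checking that the exchange step preserves feasibility, which forces the two-sided definition of $\delta$, and (ii) exhibiting a potential — the number of partial coordinates — that strictly decreases so that the iteration terminates. The role of $\gamma<\tfrac12$ is precisely to forbid two or more ``full'' blocks of size $(1-\gamma)n$, which is what pins the optimum to the two-block configuration rather than something more balanced.
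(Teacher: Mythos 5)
Your proof is correct and is essentially the same convexity argument as the paper's, just organized differently. Both proofs hinge on the same smoothing step (shift mass between two interior coordinates to increase $\sum_i x_i^{1+a}$, by convexity of $t\mapsto t^{1+a}$). The paper sorts the coordinates and compares them one at a time against the target vector $\vec y=\big((1-\gamma)n,\ \gamma n,\ 0,\ldots,0\big)$; you instead run an explicit exchange/termination argument with the number of ``partial'' coordinates as a decreasing potential, and then count the saturated blocks. Your version is a bit cleaner about termination and makes the role of $\gamma<\tfrac12$ explicit (it forbids two saturated blocks), and the relaxation to reals is the right move given that the lemma is only ever used as an upper bound. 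Both are valid instantiations of the same idea.
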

\begin{proof}
Note that $\vec{y}=\big((1-\gamma)n, \gamma n, 0, \ldots, 0\big)$ is a feasible solution achieving the upper bound. Suppose another feasible solution $\vec{x}$ achieves the optimum value that is higher than $\left((1-\gamma)n\right)^{1+a} + \left(\gamma n\right)^{1+a}$. Since the order of the coordinates don't affect the objective or the constraints, assume $x_1\geq \ldots\geq x_n$ without loss of generality.

We claim that $x_2\not=0$. Otherwise, the only non-zero coordinate of $\vec{x}$ is $x_1$, so $\sum x_i = x_1\leq  (1-\gamma)n < n$, violating the first constraint.

Suppose $x_1\not=y_1$. Then $x_1<y_1$ by the second constraint. It is therefore possible to weakly improve the value of $\vec{x}$ by changing it into $\vec{x}'=(x_1 +\delta, x_2-\delta, x_3,\ldots,x_n)$ for some small enough $\delta$. This is because $f(x)=x^{1+a}$ is strictly convex for $a>0$, so its first derivative is non-decreasing, and by the fundamental theorem of calculus, we have
\begin{align*}
    f(x_1 +\delta) - f(x_1) = \int_{x_1}^{x_1+\delta} f'(x) dx \geq \int_{x_2-\delta}^{x_2} f'(x) dx \geq f(x_2) - f(x_2-\delta).
\end{align*}
Rearranging the terms gives $f(x_1 +\delta) + f(x_2-\delta) \geq  f(x_1) + f(x_2)$ as desired. Therefore, we can assume $x_1=y_1$ for any candidate solution $\vec{x}$ that achieves a higher value than $\vec{y}$. The problem therefore reduces to a new optimization problem with constraints $\sum x_i = \gamma n$ and $0\leq  x_i \leq  \gamma n$. By the same logic as above, we can assume without loss that any optimal solution puts all the weight on the first coordinate and 0 on the rest of the coordinates (making the second constraint tight on every coordinate). Therefore, we have $x_2=y_2$. Since $x_1+x_2=n=y_1+y_2$, we must have $x_j=y_j=0$ for all $j\not\in\{0,1\}$, implying $\vec{x}$ and $\vec{y}$ achieve the same value, which is a contradiction.
\end{proof}

\end{document}